\newcommand{\WITHAPPENDIX}{}
\renewcommand{\cref}{\Cref}
\newcommand{\morespace}[1]{~{}#1{}~}
\newcommand{\qmorespace}[1]{\quad{}#1{}\quad}
\newcommand{\qqmorespace}[1]{\qquad{}#1{}\qquad}
\newcommand{\eeq}{\morespace{=}}
\newcommand{\qeq}{\qmorespace{=}}
\newcommand{\qqeq}{\qqmorespace{=}}
\newcommand{\lleq}{\morespace{\leq}}
\newcommand{\hheyloleq}{\morespace{\heyloleq}}
\newcommand{\hheylogeq}{\morespace{\heylogeq}}
\newcommand{\eexpleq}{\morespace{\expleq}}
\newcommand{\eexpgeq}{\morespace{\expgeq}}
\newcommand{\qiff}{\qmorespace{\textnormal{iff}}}
\newcommand{\qqiff}{\qqmorespace{\textnormal{iff}}}
\newcommand{\qimplies}{\qmorespace{\textnormal{implies}}}
\newcommand{\qand}{\qmorespace{\textnormal{and}}}
\newcommand{\qqand}{\qqmorespace{\textnormal{and}}}
\newenvironment{halfboxl}{%
\noindent
\begin{minipage}[t]{0.4875\linewidth}
}{%
\end{minipage}\hspace{0.025\linewidth}}
\newenvironment{halfboxr}{%
\begin{minipage}[t]{0.4875\linewidth}
}{%
\end{minipage}\bigbreak}
\definecolor{hintgray}{RGB}{92,92,92}
\definecolor{col1}{RGB}{254,195,10}
\definecolor{col2}{RGB}{251,0,6}
\definecolor{col3}{RGB}{252,11,135}
\definecolor{col4}{RGB}{17,139,2}
\definecolor{col5}{RGB}{14,131,254}
\definecolor{col6}{RGB}{82,0,135}
\definecolor{brLightGreen}{HTML}{addd8e}
\definecolor{brLightGray}{HTML}{f0f0f0}
\definecolor{orange}{RGB}{230,159,0}
\definecolor{skyblue}{RGB}{86,180,233}
\definecolor{brBlue}{RGB}{0,114,178}
\definecolor{bluishgreen}{RGB}{0,158,115}
\definecolor{vermillion}{RGB}{213,94,0}
\definecolor{reddishpurple}{RGB}{204,121,167}
\colorlet{heyvlColor}{vermillion}
\colorlet{prepostColor}{brBlue}
\newcommand{\explain}[1]{\tag*{\small\color{hintgray}(#1)}}
\newcommand{\ifThenElse}[3]{\begin{cases}{#2}, & \text{if } {#1}\\{#3}, & \text{otherwise} \end{cases}}
\newcommand{\ifThenElseDot}[3]{\begin{cases}{#2}, & \text{if } {#1}\\{#3}, & \text{otherwise}~. \end{cases}}
\newcommand{\triple}[3]{\ensuremath{\langle {\color{prepostColor}#1} \rangle~#2~\langle {\color{prepostColor}#3} \rangle}}
\newcommand{\Sup}{\reflectbox{\textnormal{\textsf{\fontfamily{phv}\selectfont S}}}\hspace{.2ex}}
\newcommand{\Inf}{\raisebox{.6\depth}{\rotatebox{-30}{\textnormal{\textsf{\fontfamily{phv}\selectfont \reflectbox{J}}}}\hspace{-.1ex}}}
\newcommand{\quant}[2]{#1.~#2}
\newcommand{\squant}[2]{\ensuremath{\Sup \quant{#1}{#2}}}
\newcommand{\iquant}[2]{\ensuremath{\Inf \quant{#1}{#2}}}
\newcommand{\true}{\mathsf{true}}
\newcommand{\false}{\mathsf{false}}
\newcommand{\impl}{\rightarrow}
\newcommand{\coimpl}{\leftsquigarrow}
\newcommand{\coneg}{{\sim}}
\newcommand{\hla}{\ensuremath{\varphi}}
\newcommand{\hlb}{\ensuremath{\psi}}
\newcommand{\hlc}{\ensuremath{\rho}}
\newcommand{\hld}{\ensuremath{\gamma}}
\newcommand{\colheylo}[1]{{\color{prepostColor}#1}}
\newcommand{\colheyvl}[1]{{\color{heyvlColor}#1}}
\newcommand{\hhla}{\colheylo{\hla}}
\newcommand{\hhlb}{\colheylo{\hlb}}
\newcommand{\hhlc}{\colheylo{\hlc}}
\newcommand{\expa}{\ensuremath{X}}
\newcommand{\expb}{\ensuremath{Y}}
\newcommand{\expleq}{\ensuremath{\preceq}}
\newcommand{\expgeq}{\ensuremath{\succeq}}
\newcommand{\substBy}[2]{[{#1} \mapsto {#2}]}
\newcommand{\iverson}[1]{\left[{#1}\right]}
\newcommand{\Terms}{\mathcal{T}}
\newcommand{\interpretsimple}[1]{\llbracket{#1}\rrbracket}
\newcommand{\eval}[1]{\llbracket{#1}\rrbracket}
\newcommand{\pexp}[2]{#1 \cdot \langle #2 \rangle }
\newcommand{\pexpand}{+}
\newcommand{\sfsymbol}[1]{\textsf{\upshape {#1}}}
\newcommand{\Vars}{\sfsymbol{Vars}\xspace}
\newcommand{\typeof}[2]{\ensuremath{#1 \colon {#2}}}
\newcommand{\Types}{\mathsf{Types}\xspace}
\newcommand{\pGCL}{\sfsymbol{pGCL}\xspace}
\newcommand{\HeyVL}{\sfsymbol{HeyVL}\xspace}
\newcommand{\HeyLo}{\sfsymbol{HeyLo}\xspace}
\newcommand{\funcfont}[1]{\sfsymbol{#1}}
\newcommand{\listtype}{\typefont{Lists}}
\newcommand{\listlen}{\funcfont{len}}
\newcommand{\typevar}{\ensuremath{\tau}}
\newcommand{\heyloleq}{\sqsubseteq}
\newcommand{\heylogeq}{\sqsupseteq}
\newcommand{\aexpr}{\ensuremath{a}} 
\newcommand{\expr}{\ensuremath{e}} 
\newcommand{\bexpr}{\ensuremath{b}} 
\newcommand{\procname}{P} 
\newcommand{\heylovalidate}[1]{\ensuremath{\triangle\!\left( #1 \right)}}
\newcommand{\heylocovalidate}[1]{\ensuremath{\triangledown\!\left( #1 \right)}}
\newcommand{\funcsymb}{\ensuremath{f}}
\newcommand{\termvar}{\ensuremath{t}}
\newcommand{\typefont}[1]{\mathsf{#1}}
\newcommand{\bool}{\Bools}
\newcommand{\uint}{\Nats}
\newcommand{\ureal}{\PosReals}
\newcommand{\Bools}{\mathbb{B}}
\newcommand{\Nats}{\mathbb{N}}
\newcommand{\Ints}{\mathbb{Z}}
\newcommand{\Rats}{\mathbb{Q}}
\newcommand{\PosRats}{\mathbb{Q}_{\geq 0}}
\newcommand{\Reals}{\mathbb{R}}
\newcommand{\PosReals}{\mathbb{R}_{\geq 0}}
\newcommand{\PosRealsInf}{\mathbb{R}_{\geq 0}^{\infty}}
\newcommand{\States}{\mathsf{States}}
\newcommand{\State}{\sigma}
\newcommand{\Vals}{\mathsf{Vals}}
\newcommand{\symSkip}{\ensuremath{\texttt{skip}}}
\newcommand{\symAssign}{\morespace{\coloneqq}}
\newcommand{\symRasgn}{\colonapprox}
\newcommand{\symSemi}{\ensuremath{\texttt{;}~}}
\newcommand{\symIf}{\ensuremath{\texttt{if}}}
\newcommand{\symElse}{\ensuremath{\texttt{else}}}
\newcommand{\symAssert}{\ensuremath{\texttt{assert}}}
\newcommand{\symAssume}{\ensuremath{\texttt{assume}}}
\newcommand{\symDemonic}{\ensuremath{\symIf~(\sqcap)}}
\newcommand{\symCdot}{\ensuremath{\symIf~(\cdot)}}
\newcommand{\symAngelic}{\ensuremath{\symIf~(\sqcup)}}
\newcommand{\symHavoc}{\ensuremath{\texttt{havoc}}}
\newcommand{\symWhile}{\ensuremath{\texttt{while}}}
\newcommand{\symObserve}{\ensuremath{\texttt{observe}}}
\newcommand{\symTick}{\ensuremath{\texttt{reward}}}
\newcommand{\symProc}{\ensuremath{\texttt{proc}}}
\newcommand{\symcoProc}{\ensuremath{\texttt{coproc}}}
\newcommand{\symReturns}{\texttt{->}}
\newcommand{\symRequires}{\ensuremath{\texttt{pre}}}
\newcommand{\symEnsures}{\ensuremath{\texttt{post}}}
\newcommand{\symValidate}{\ensuremath{\texttt{validate}}}
\newcommand{\symVar}{\ensuremath{\texttt{var}}}
\newcommand{\blockStart}{\ensuremath{\{}}
\newcommand{\blockEnd}{\ensuremath{\}}}
\pgfplotsset{compat=newest}
\tikzstyle{myarrows}=[-latex, shorten <=10mm, shorten >=10mm, line width=4mm,draw=rwth-50]
\newcommand{\proc}[3]{\symProc~{\mathit{#1}}\,\texttt{(}{#2}\texttt{)}~\symReturns~\texttt{(}{#3}\texttt{)}}
\newcommand{\coproc}[3]{\symcoProc~{\mathit{#1}}\,\texttt{(}{#2}\texttt{)}~\symReturns~\texttt{(}{#3}\texttt{)}}
\newcommand{\Ensures}[1]{\symEnsures~{\color{prepostColor}#1}}
\newcommand{\Requires}[1]{\symRequires~{\color{prepostColor}#1}}
\newcommand{\varandtype}[2]{\ensuremath{{#1}\texttt{:}\,\mathsf{#2}}}
\newcommand{\varin}{\ensuremath{\mathit{in}}}
\newcommand{\varout}{\ensuremath{\mathit{out}}}
\newcommand{\multi}[1]{\ensuremath{\overline{#1}}}
\newcommand{\stmt}{\ensuremath{S}}
\newcommand{\sstmt}{{\color{heyvlColor}\stmt}}
\newcommand{\stmtSkip}{\symSkip}
\newcommand{\stmtAsgn}[2]{\ensuremath{{#1} \symAssign {#2}}}
\newcommand{\stmtDeclInit}[3]{\symVar~\ensuremath{{#1}\colon {#2} \symRasgn {#3}}}
\newcommand{\stmtRasgn}[2]{\ensuremath{{#1} \symRasgn {#2}}}
\newcommand{\stmtSeq}[2]{\ensuremath{{#1}\symSemi {#2}}}
\newcommand{\stmtIf}[3]{\ensuremath{\symIf~({#1})~\{ {#2} \}~\symElse~\{ {#3} \}}}
\newcommand{\stmtIfStart}[1]{\ensuremath{\symIf~({#1})~\{ }}
\newcommand{\stmtProb}[3]{\ensuremath{\{ {#2} \} ~[{#1}]~ \{ {#3} \}}}
\newcommand{\stmtWhile}[2]{\ensuremath{\symWhile~({#1})~\{ {#2} \}}}
\newcommand{\headerWhile}[1]{\ensuremath{\symWhile~({#1})}}
\newcommand{\stmtHavoc}[1]{\ensuremath{\symHavoc~{#1}}}
\newcommand{\stmtAssert}[1]{\ensuremath{\symAssert~{#1}}}
\newcommand{\stmtAssume}[1]{\ensuremath{\symAssume~{#1}}}
\newcommand{\stmtDemonicStart}{\ensuremath{\symDemonic~\{ }}
\newcommand{\stmtCdotStart}{\ensuremath{\symCdot~\{ }}
\newcommand{\stmtElseStart}{\ensuremath{\symElse~\{ }}
\newcommand{\stmtAngelicStart}{\ensuremath{\symAngelic~\{ }}
\newcommand{\stmtDemonic}[2]{\ensuremath{\stmtDemonicStart{#1} \}~\stmtElseStart {#2} \}}}
\newcommand{\stmtCdot}[2]{\ensuremath{\stmtCdotStart{#1} \}~\stmtElseStart {#2} \}}}
\newcommand{\stmtAngelic}[2]{\ensuremath{\stmtAngelicStart{#1} \}~\stmtElseStart {#2} \}}}
\newcommand{\stmtObserve}[1]{\ensuremath{\symObserve~{#1}}}
\newcommand{\stmtTick}[1]{\ensuremath{\symTick~{#1}}}
\newcommand{\stmtValidate}{\ensuremath{\symValidate}}
\newcommand{\symDown}{}
\newcommand{\downHavoc}[1]{\ensuremath{\symDown\symHavoc~{#1}}}
\newcommand{\downAssert}[1]{\ensuremath{\symDown\symAssert~{#1}}}
\newcommand{\downAssume}[1]{\ensuremath{\symDown\symAssume~{#1}}}
\newcommand{\Havoc}[1]{\ensuremath{\symDown\symHavoc~{#1}}}
\newcommand{\Assert}[1]{\ensuremath{\symDown\symAssert~{#1}}}
\newcommand{\Assume}[1]{\ensuremath{\symDown\symAssume~{#1}}}
\newcommand{\Validate}{\ensuremath{\symDown\stmtValidate}}
\newcommand{\symUp}{\ensuremath{\texttt{co}}}
\newcommand{\upHavoc}[1]{\ensuremath{\symUp\symHavoc~{#1}}}
\newcommand{\upAssert}[1]{\ensuremath{\symUp\symAssert~{#1}}}
\newcommand{\upAssume}[1]{\ensuremath{\symUp\symAssume~{#1}}}
\newcommand{\coHavoc}[1]{\ensuremath{\symUp\symHavoc~{#1}}}
\newcommand{\coAssert}[1]{\ensuremath{\symUp\symAssert~{#1}}}
\newcommand{\coAssume}[1]{\ensuremath{\symUp\symAssume~{#1}}}
\newcommand{\coValidate}{\ensuremath{\symUp\stmtValidate}}
\newcommand{\monus}{\mathbin{\dot-}}
\newcommand{\CC}{\ensuremath{C}}
\newcommand{\loopbody}{\ensuremath{C_0}}
\newcommand{\loopinv}{\ensuremath{I}}
\newcommand{\WHILESYMBOL}{\ensuremath{\textnormal{\texttt{while}}}}
\newcommand{\WHILEDO}[2]{\ensuremath{\WHILESYMBOL \left(\, {#1} \,\right)\left\{\, {#2} \,\right\}}}
\newcommand{\exprTrue}{\ensuremath{\texttt{true}}}
\newcommand{\exprFalse}{\ensuremath{\texttt{false}}}
\newcommand{\exprFlip}[1]{\ensuremath{\texttt{flip}({#1})}}
\newcommand{\exprUnif}[2]{\ensuremath{\texttt{unif}({#1},~{#2})}}
\newcommand{\interpretsimpleState}[1]{\interpretsimple{#1}(\State)}
\newcommand{\interpretsimpleStateSubstBy}[3]{\interpretsimple{#1}(\State\substBy{#2}{#3})}
\newcommand{\evalState}[1]{\eval{#1}(\State)}
\newcommand{\exprIte}[3]{\mathrm{ite}({#1},~{#2},~{#3})}
\newcommand{\symWp}{\sfsymbol{wp}}
\newcommand{\symVc}{\sfsymbol{vp}}
\newcommand{\symVcB}{\sfsymbol{vc}}
\renewcommand{\wp}[1]{\symWp\llbracket{#1}\rrbracket}
\newcommand{\vc}[1]{\symVc\llbracket{#1}\rrbracket}
\newcommand{\vcB}[1]{\symVcB\llbracket{#1}\rrbracket}
\newcommand{\symWlp}{\sfsymbol{wlp}}
\newcommand{\wlp}[1]{\symWlp\llbracket{#1}\rrbracket}
\newcommand{\symErt}{\sfsymbol{ert}}
\newcommand{\ert}[1]{\symErt\llbracket{#1}\rrbracket}
\newcommand{\Expectations}{\mathbb{E}}
\newcommand{\OneBoundedExpectations}{\Expectations_{\leq 1}}
\newcommand{\embed}[1]{\scalebox{0.85}{\textnormal{\textsf{?}}}({#1})}
\newcommand{\coembed}[1]{\scalebox{0.85}{\textnormal{\textsf{co?}}}({#1})}
\newcommand{\wpPhi}{\prescript{\symWp}{}{\Phi}}
\newcommand{\wlpPhi}{\prescript{\symWlp}{}{\Phi}}
\newcommand{\intersem}[1]{\color{hintgray}{//~#1}} 
\newcommand{\Rplus}{\mathbb{R}^\infty_{\geq 0}}
\newcommand{\symDownTransWp}{enc^{\symDown}_{\symWp}}
\newcommand{\symDownTransErt}{enc^{\symDown}_{\symErt}}
\newcommand{\symUpTransWp}{enc^{\symUp}_{\symWp}}
\newcommand{\symDownTransWlp}{enc^{\symDown}_{\symWlp}}
\newcommand{\symUpTransWlp}{enc^{\symUp}_{\symWlp}}
\newcommand{\downTransWp}[1]{\symDownTransWp\lfloor{#1}\rfloor}
\newcommand{\downTransErt}[1]{\symDownTransErt\lfloor{#1}\rfloor}
\newcommand{\upTransWp}[1]{\symUpTransWp\lfloor{#1}\rfloor}
\newcommand{\downTransWlp}[1]{\symDownTransWlp\lfloor{#1}\rfloor}
\newcommand{\upTransWlp}[1]{\symUpTransWlp\lfloor{#1}\rfloor}
\newcommand{\pcc}{C}
\newcommand{\pSkip}{\symSkip}
\newcommand{\pDiverge}{\texttt{diverge}}
\newcommand{\pAssign}[2]{#1~\texttt{:=}~#2}
\newcommand{\pChoice}[3]{\{\,#1\,\}~[#2]~\{\,#3\,\}}
\newcommand{\pNd}[2]{\{\,#1\,\}~[]~\{\,#2\,\}}
\newcommand{\pIte}[3]{\symIf~(#1)~\{ #2 \}~\symElse~\{#3\}}
\newcommand{\symForeign}[1]{\textit{#1}}
\newcommand{\foreignWp}[3]{\symForeign{#1}(#2,#3)}
\newcommand{\downSpec}[2]{spec_{\symDown}(#1,~#2)}
\newcommand{\upSpec}[2]{spec_{\symUp}(#1,~#2)}
\newcommand{\const}[1]{const({#1})}
\newcommand{\symDownIterWlp}{iter_{\symWlp}}
\newcommand{\downIterWlp}[1]{\symDownIterWlp({#1})}
\newcommand{\symUpIterWp}{iter_{\symWp}}
\newcommand{\upIterWp}[1]{\symUpIterWp({#1})}
\newcommand{\symDownPsi}{\prescript{\symWlp}{}{\Psi}}
\newcommand{\symUpPsi}{\prescript{\symUp}{}{\Psi}}
\newcommand{\symDownExtendWlp}{extend_{\symWlp}}
\newcommand{\downExtendWlp}[1]{\symDownExtendWlp({#1})}
\newcommand{\downExtendWlpK}[2]{\symDownExtendWlp^{(#1)}({#2})}
\newcommand{\symUpExtendWp}{extend_{\symWp}}
\newcommand{\upExtendWp}[1]{\symUpExtendWp({#1})}
\newcommand{\upExtendWpK}[2]{\symUpExtendWp^{(#1)}({#2})}
\newcommand{\ohfiveExp}[1]{\mathit{exp}(0.5, {#1})}
\newcommand{\annocolor}[1]{\textcolor{blue}{#1}}
\newcommand{\annotate}[1]{{\annocolor{\!\!{\fatslash}\!\!{\fatslash}~~\vphantom{G'} {#1}}}}
\newcounter{challengecounter}
\newtcolorbox{challengebox}[3][]
{
  enhanced,
  width=\textwidth,
  drop shadow southwest,
  sharp corners,
  #1,
}
\newcommand{\lightgray}[1]{\textcolor{lightgray}{#1}}
\newcommand{\setcomp}[2]{\left\{#1 ~{}\middle|{}~ #2\right\}}
\newcommand{\tool}[1]{\textnormal{\textsc{#1}}}
\newcommand{\proofcase}[1]{\item[-]\textsc{Case} #1.}
\newcolumntype{L}{>{$}l<{$}}
\newcolumntype{C}{>{$}c<{$}}
\newcolumntype{R}{>{$}r<{$}}
\pgfplotsset{compat=1.11}
\begin{document}

\ifdefined\WITHAPPENDIX
    \title[A Deductive Verification Infrastructure for Probabilistic Programs (Extended Version)]{A Deductive Verification Infrastructure for \\ Probabilistic Programs (Extended Version)}
\else
    \title[A Deductive Verification Infrastructure for Probabilistic Programs]{A Deductive Verification Infrastructure for \\ Probabilistic Programs}
\fi

\ifdefined\WITHAPPENDIX
    \titlenote{This is the extended version of the the publication at OOPSLA 2023 (\url{https://doi.org/10.1145/3622870}).}
\fi

\author{Philipp Schröer}
\email{phisch@cs.rwth-aachen.de}
\orcid{0000-0002-4329-530X}
\affiliation{%
  \institution{RWTH Aachen University}
   \country{Germany}
}

\author{Kevin Batz}
\email{kevin.batz@cs.rwth-aachen.de}
\orcid{0000-0001-8705-2564}
\affiliation{%
  \institution{RWTH Aachen University}
   \country{Germany}
}

\author{Benjamin Lucien Kaminski}
\email{kaminski@cs.uni-saarland.de}
\orcid{0000-0001-5185-2324}
\affiliation{%
    \institution{Saarland University}
    \country{Germany}
}
\affiliation{%
    \institution{University College London}
    \country{United Kingdom}
}

\author{Joost-Pieter Katoen}
\email{katoen@cs.rwth-aachen.de}
\orcid{0000-0002-6143-1926}
\affiliation{%
    \institution{RWTH Aachen University}
    \country{Germany}
}

\author{Christoph Matheja}
\email{chmat@dtu.dk}
\orcid{0000-0001-9151-0441}
\affiliation{%
    \institution{Technical University of Denmark}
    \country{Denmark}
}

\renewcommand{\shortauthors}{Schröer, Batz, Kaminski, Katoen, Matheja}

\begin{abstract}

This paper presents a quantitative program verification infrastructure for discrete probabilistic programs. 
Our infrastructure can be viewed as the probabilistic analogue of Boogie: its central components are an intermediate verification language (IVL) together with a real-valued logic. 
Our IVL provides a programming-language-style for expressing verification conditions whose validity implies the correctness of a program under investigation. 
As our focus is on verifying quantitative properties such as bounds on expected outcomes, expected run-times, or termination probabilities, off-the-shelf IVLs based on Boolean first-order logic do not suffice. 
Instead, a paradigm shift from the standard Boolean to a \emph{real-valued} domain is required.

Our IVL features quantitative generalizations of standard verification constructs such as \texttt{assume}- and \texttt{assert}-statements. 
Verification conditions are generated by a weakest-precondition-style semantics, based on our real-valued logic. 
We show that our verification infrastructure supports natural encodings of numerous verification techniques from the literature. 
With our SMT-based implementation, we automatically verify a variety of benchmarks. 
To the best of our knowledge, this establishes the first deductive verification infrastructure for expectation-based reasoning about probabilistic programs.
\end{abstract}

\begin{CCSXML}
<ccs2012>
<concept>
<concept_id>10003752.10003790.10002990</concept_id>
<concept_desc>Theory of computation~Logic and verification</concept_desc>
<concept_significance>500</concept_significance>
</concept>
<concept>
<concept_id>10003752.10003790.10003794</concept_id>
<concept_desc>Theory of computation~Automated reasoning</concept_desc>
<concept_significance>500</concept_significance>
</concept>
<concept>
<concept_id>10003752.10003790.10011741</concept_id>
<concept_desc>Theory of computation~Hoare logic</concept_desc>
<concept_significance>500</concept_significance>
</concept>
<concept>
<concept_id>10003752.10010124.10010131.10010135</concept_id>
<concept_desc>Theory of computation~Axiomatic semantics</concept_desc>
<concept_significance>500</concept_significance>
</concept>
<concept>
<concept_id>10003752.10010124.10010131.10010133</concept_id>
<concept_desc>Theory of computation~Denotational semantics</concept_desc>
<concept_significance>500</concept_significance>
</concept>
<concept>
<concept_id>10003752.10010124.10010138.10010139</concept_id>
<concept_desc>Theory of computation~Invariants</concept_desc>
<concept_significance>500</concept_significance>
</concept>
<concept>
<concept_id>10003752.10010124.10010138.10010140</concept_id>
<concept_desc>Theory of computation~Program specifications</concept_desc>
<concept_significance>500</concept_significance>
</concept>
<concept>
<concept_id>10003752.10010124.10010138.10010141</concept_id>
<concept_desc>Theory of computation~Pre- and post-conditions</concept_desc>
<concept_significance>500</concept_significance>
</concept>
<concept>
<concept_id>10003752.10010124.10010138.10010142</concept_id>
<concept_desc>Theory of computation~Program verification</concept_desc>
<concept_significance>500</concept_significance>
</concept>
<concept>
<concept_id>10003752.10010124.10010138.10010144</concept_id>
<concept_desc>Theory of computation~Assertions</concept_desc>
<concept_significance>500</concept_significance>
</concept>
</ccs2012>
\end{CCSXML}

\ccsdesc[500]{Theory of computation~Logic and verification}
\ccsdesc[500]{Theory of computation~Automated reasoning}
\ccsdesc[500]{Theory of computation~Hoare logic}
\ccsdesc[500]{Theory of computation~Axiomatic semantics}
\ccsdesc[500]{Theory of computation~Denotational semantics}
\ccsdesc[500]{Theory of computation~Invariants}
\ccsdesc[500]{Theory of computation~Program specifications}
\ccsdesc[500]{Theory of computation~Pre- and post-conditions}
\ccsdesc[500]{Theory of computation~Program verification}
\ccsdesc[500]{Theory of computation~Assertions}

\keywords{deductive verification, quantitative verification, probabilistic programs, weakest preexpectations, real-valued logics, automated reasoning}

\maketitle



\section{Introduction and Overview}

Probabilistic programs differ from ordinary programs by the ability to base decision on samples from probability distributions.
They are found in randomized algorithms, communication protocols,  models of physical and biological processes, and -- more recently -- statistical models used in machine learning and artificial intelligence (cf.~\cite{gordonProbabilisticProgramming2014,bartheFoundationsProbabilisticProgramming2020}).
Typical questions in the design and analysis of probabilistic programs are concerned with \emph{quantifying} aspects of their \emph{expected} -- or average -- \emph{behavior}, e.g.\ the \emph{expected runtime} of a randomized algorithm, the \emph{expected number of retransmissions} in a protocol, or the \emph{probability} that a particle reaches its destination. 

Writing correct probabilistic programs is notoriously hard. 
They may contain subtle bugs occurring with low probability or undesirably favor certain results in the long run.
In fact, reasoning about the expected behavior of probabilistic programs is known to be strictly harder than for ordinary programs~\cite{kaminskiHardnessAnalyzingProbabilistic2019}.
There exists a plethora of research on verification techniques for probabilistic programs, ranging from program logics (cf.~\cite{mciverAbstractionRefinementProof2005,kaminskiWeakestPreconditionReasoning2018}) to highly specialized proof rules~~\cite{harkAimingLowHarder2019,mciverNewProofRule2018}, often with little (if any) automation.
These techniques are based on different branches of mathematics -- e.g.\ domain theory or martingale analysis -- and their relationships are  non-trivial (cf.~\citet{DBLP:journals/toplas/TakisakaOUH21}).
This poses major challenges for comparing -- let alone \emph{combining} -- such different approaches.

In this paper, we build a \emph{verification infrastructure} for reasoning about the expected behavior of (discrete) probabilistic programs; \Cref{fig:intro:infrastructure} gives an overview.
\begin{figure}[t]
\begin{adjustbox}{max width=\textwidth}
  \begin{tikzpicture}
		
\node (ert) {expected run-times};

\node[right=2cm of ert] (pc) {partial correctness};

\node[right=1.1cm of pc.east] (rpe) {expected resource consumption};

\node[below=0.2cm of ert.center] (martingales) {martingales};

\node[right=0.5cm of martingales.east] (past) {positive almost-sure termination};

\node[right=0.7cm of past.east] (ast) {almost-sure termination};

\node[below left =0.4cm of past.east] (park) {Park induction};

\node[left=1.3cm of park.west] (aert) {amortised analysis};

\node[right =1.3cm of park.east] (cwp) {conditional expected values};

\node[below =0.2cm of aert.west] (wp) {total correctness};

\node[right=1.3cm of wp.east] (kind) {k-induction};

\node[right=1.3cm of kind.east] (rpe) {probabilistic sensitivity};

\begin{scope}[on background layer]
	\node[minimum width=15cm,draw, fit=(pc)(martingales)(past)(ast)(park)(cwp)(wp)(kind)(rpe)(aert), inner sep=5pt, fill=brLightGray] (expectations) {};
\end{scope}

\node[minimum height=0.8cm,minimum width=15cm,inner sep=5pt,align=center,below=0.5cm of expectations, draw,fill=heyvlColor!60] (qivl) {\large \textbf{\textbf{Q}uantitative \textbf{I}ntermediate \textbf{V}erification \textbf{L}anguage} (\textsf{\textbf{HeyVL}})};

\node[xshift=1.75cm,minimum height=0.8cm, minimum width=3.5cm,inner sep=5pt,align=center,below=0.9cm of qivl.west, draw,fill=brLightGreen] (vcgen) { VC Generator };

\node[xshift=0.5cm,minimum height=0.8cm, minimum width=3.5cm,inner sep=5pt,align=center,right = 1.5cm of vcgen.east, draw,fill=prepostColor!60] (heylo) { Real-Valued Logic (\HeyLo) };

\node[xshift=0.75cm,minimum height=0.8cm, minimum width=3cm,inner sep=5pt,align=center,right = 1.5cm of heylo.east, draw,fill=brLightGray] (smt) { SMT Solver };
		
\draw[line width=2mm,-{Triangle[width=5mm,length=3mm]}] (expectations.south) -- (qivl);

\draw[line width=2mm,-{Triangle[width=5mm,length=3mm]}] ([xshift=-5.75cm] qivl.south) -- (vcgen);

\draw[line width=2mm,-{Triangle[width=5mm,length=3mm]}] ([xshift=-5.75cm] vcgen) -- (heylo);

\draw[line width=2mm,-{Triangle[width=5mm,length=3mm]}] ([xshift=-5.75cm] heylo) -- (smt);

\end{tikzpicture}	
\end{adjustbox}
\caption{Architecture of our verification infrastructure.}
\label{fig:intro:infrastructure}
\end{figure}
Modern program verifiers for non-probabilistic programs 
often have a front-end that translates a given program and its specification into an intermediate language, such as \tool{Boogie}~\cite{leinoThisBoogie2008}, \tool{Why3}~\cite{DBLP:conf/esop/FilliatreP13}, or \tool{Viper}~\cite{viper}.  
Such intermediate languages enable the encoding of complex verification techniques, while allowing for the separate development of efficient back-ends, e.g.~verification condition generators.
 In this very spirit, we introduce a novel \emph{quantitative intermediate verification language} that enables
researchers to (i)~prototype and automate new verification techniques, 
(ii) combine proof rules, and 
(iii) benefit from back-end improvements.
Before we dive into details, we discuss five examples of probabilistic programs from the literature that have been verified with five different techniques -- all of them have been encoded in our language and verified with our tool.
\begin{example}[Rabin's Mutual Exclusion Protocol~\textnormal{\cite{kushilevitzRandomizedMutualExclusion1992}}]
This protocol controls processes competing for access to a critical section.
To determine which process gets access, every process will repeatedly toss a fair coin until it sees heads; the process that needed the largest number of tosses is then granted access.
\Cref{fig:intro:rabin} shows a probabilistic program modeling Rabin's protocol: 
$i$~is the number of remaining processes competing for access.
While more than 1 competitor remains, each competitor tosses one coin (inner loop).
If the coin shows heads (i.e.\ if $\exprFlip{0.5}$ samples a 1), that competitor is removed from the pool of remaining competitors (by subtracting $d = 1$ from $i$).
One can verify with the \emph{weakest liberal preexpectation calculus} by \citet{mciverAbstractionRefinementProof2005} that \emph{the probability to select exactly one process (plus the probability of nontermination) is at least $\nicefrac{2}{3}$} if there are initially at least 2 processes. 
\end{example}

\begin{figure}[t]
\begin{minipage}{0.5\textwidth}
\begin{align*}
    &\headerWhile{1 < i}~\{ \\
    &\quad \stmtAsgn{n}{i}\symSemi \\
    &\quad \headerWhile{0 < n}~\{ \\
    &\quad\quad \stmtAsgn{d}{\exprFlip{0.5}}\symSemi \\
    &\quad\quad \stmtAsgn{i}{i-d}\symSemi \\
    &\quad\quad \stmtAsgn{n}{n-1} \\
    &\quad\} \\
    &\}
\end{align*}
\vspace{-1.5em}
\caption{Model of Rabin's Protocol}
\label{fig:intro:rabin}
\bigskip
\begin{align*}
    &\headerWhile{0 < x}~\{ \\
    &\quad \stmtAsgn{i}{N + 1}\symSemi \\
    &\quad \headerWhile{0 < x < i}~\{ \\
    &\quad\quad \stmtRasgn{i}{\exprUnif{1}{N}} \\
    &\quad\} \\
    &\quad \stmtAsgn{x}{x-1} \\
    &\}
\end{align*}
\vspace{-1.5em}
\caption{The Coupon Collector's Problem}
\label{fig:intro:coupon-collector}
\end{minipage}%
\begin{minipage}{0.5\textwidth}
\begin{align*} 
    & \texttt{fn}~\mathit{lossy}(\varandtype{l}{List})~\blockStart \\
    & \quad \stmtIfStart{\mathit{len}(l) > 0} \\
    & \quad \quad \pChoice{~\mathit{lossy}(\mathit{tail}(l))~}{0.5}{~\pDiverge~} \\
    & \quad \blockEnd \\
    & \blockEnd
\end{align*}
\vspace{-2.5em}
\caption{Lossy list traversal}
\label{fig:intro:lossy}

\bigskip

\begin{align*}
    &\headerWhile{x > 0}~\{ \\
    &\quad \stmtAsgn{q}{x/(2 \cdot x + 1)}\symSemi \\
    &\quad \stmtProb{q}{\stmtAsgn{x}{x - 1}}{\stmtAsgn{x}{x + 1}} \\
    & \blockEnd
\end{align*}
\vspace{-2em}
\caption{Variant of a random walk}
\label{fig:ast-rule}

\bigskip

\begin{align*}
    & \headerWhile{x \neq 0}~\blockStart \\
    & \quad \stmtProb{0.5}{\stmtAsgn{x}{0}}{\stmtAsgn{y}{y + 1}} \\
    & \quad \stmtAsgn{n}{n + 1} \\
    & \blockEnd
\end{align*}
\vspace{-1.5em}
\caption{Counterexample from~\cite{harkAimingLowHarder2019}}
\label{fig:ost-rule}
\end{minipage}%
\end{figure}%

\begin{example}[The Coupon Collector \textnormal{\cite{wiki:cc}}]
\Cref{fig:intro:coupon-collector} models the coupon collector problem -- a well-known problem in probability theory: 
Suppose any box of cereals contains one of $N$ different coupons. 
What is the average number of boxes one needs to buy to collect at least one of all $N$ different coupons, assuming that each coupon type occurs with the same probability?
Our formulation is taken from \cite{kaminskiWeakestPreconditionReasoning2018}; the authors develop an \emph{expected runtime calculus} and use \emph{invariant-based arguments} to show that the \emph{expected number of loop iterations}, which coincides with the average number of boxes one needs to buy, \emph{is bounded from above by $N \cdot H_N$}, where $H_N$ is the $N$-th harmonic number.
\end{example}

\begin{example}[Lossy List Traversal~\textnormal{\cite{batzQuantitativeSeparationLogic2019}}]
\label{ex:motiv-lossy-list}
\Cref{fig:intro:lossy} depicts a recursive function implementing a lossy list traversal; it flips a fair coin (using the probabilistic choice $\pChoice{\ldots}{0.5}{\ldots}$) and, depending on the outcome, either calls itself with the list's tail or diverges, i.e.\ enters an infinite loop.
Using the \emph{weakest preexpectation calculus}~\cite{kozenProbabilisticPDL1983,mciverAbstractionRefinementProof2005}, one can prove that this program terminates with probability \emph{at most} $0.5^{\mathit{len}(l)}$.
Analyzing the lossy list traversal is intuitive -- for every non-empty list, there is exactly one execution that does not diverge; its probability is $0.5^{\mathit{len}(l)}$. What is noteworthy, however, is that even for such a simple program, we need to reason about an exponential function.
This is common when verifying probabilistic programs: proving non-trivial bounds often requires non-linear arithmetic.%
\end{example}

\begin{example}[Fair Random Walk \textnormal{\cite{wiki:rw}}]
\Cref{fig:ast-rule} depicts a variant of a one-dimensional random walk of a particle with position~$x$ -- a well-studied model in physics.
Analyzing the program's termination behavior is hard because the probability $q$ of moving to the left or right changes in every loop iteration depending on the previous position $x$.
\citet{mciverNewProofRule2018} propose a proof rule based on \emph{quasi-variants} that allows proving that \emph{this program terminates almost-surely}, i.e. with probability one.
Fair random walks, i.e.\ if $q = \nicefrac{1}{2}$, are well-known to terminate almost-surely but still have \mbox{infinite expected runtime}.
\end{example}

\begin{example}[Lower Bounds on Expected Values \textnormal{\cite{harkAimingLowHarder2019}}]
\Cref{fig:ost-rule} shows an another loop whose control flow depends on the outcome of coin flips. 
\citet{harkAimingLowHarder2019} studied this example to demonstrate that induction-based proof rules for \emph{lower bounds}\footnote{Specifically: \emph{lower bound on partial correctness} plus \emph{proof of termination} gives \emph{lower bound on total correctness}.}, which are sound for classical verification, may become unsound when reasoning about probabilistic programs.
The authors used martingale analysis and the optional stopping theorem to develop a sound proof rule capable of proving that, whenever $x \neq 0$ initially holds, then the expected value of~$y$ after the program's termination is \emph{at least} $1 + y$.
\end{example}

\paragraph{Challenges}
We summarize the challenges of developing an infrastructure for automated verification of probabilistic programs unvealed by the examples in \Cref{fig:intro:rabin,fig:intro:coupon-collector,fig:intro:lossy,fig:ast-rule,fig:ost-rule}:

First, there are many different verification techniques for probabilistic programs that are based on different concepts, e.g. quantitative invariants, quasi-variants, different notions of martingales, or stopping times of stochastic processes. 
Developing a language that is sufficiently expressive to encode these techniques 
while keeping it
amenable to automation is a major challenge.

Second, verification of probabilistic programs involves \emph{reasoning about both lower- and upper bounds} on expected values. This is different from classical program verification, which can be understood as proving that a given precondition implies a program's weakest precondition, i.e.\ $\texttt{pre} \Rightarrow \wp{C}(\texttt{post})$. In other words, $\texttt{pre}$ is a \emph{lower bound} (in the Boolean lattice) on $\wp{C}(\texttt{post})$. Proving \emph{upper bounds}, i.e. $\wp{C}(\texttt{post})\Rightarrow \texttt{pre}$, has received scarce attention.\footnote{Notable exceptions are Cousot's necessary preconditions~\cite{cousotAutomaticInferenceNecessary2013} and recent works on (partial) incorrectness logic~\cite{ohearnIncorrectnessLogic2020,zhangQuantitativeStrongestPost2022}.}

Third, in \Cref{fig:intro:coupon-collector,fig:intro:lossy,fig:ast-rule}, we noticed that 
verification of probabilistic programs often involves reasoning about \emph{unbounded} random variables and non-linear arithmetic involving exponentials, harmonic numbers, limits, and possibly infinite sums.

\paragraph{Our approach}
We address the first challenge by developing a quantitative IVL and a real-valued logic 
tailored to verification of probabilistic programs. 
The IVL features quantitative generalizations of standard verification constructs such as \texttt{assume}- and \texttt{assert}-statements.
Our quantitative constructs are inspired by G\"odel logics~\cite{baazInfinitevaluedGodelLogics1996,preiningGodelLogicsSurvey2010}. 
In particular, they have \emph{dual \emph{co}-constructs} for verifying upper- instead of lower bounds, thereby addressing the second challenge.
These dual constructs are not only interesting for quantitative reasoning, but indeed also for Boolean reasoning \`{a} la $\wp{C}(\texttt{post})\Rightarrow \texttt{pre}$.
To address the third challenge, we rely on modern SMT solvers' abilities to deal with custom theories, standard techniques for limiting the number of user-defined function applications, and custom optimizations.

\Cref{fig:lossy-heyvl} shows a program written in our quantitative IVL; it encodes the verification of \Cref{ex:motiv-lossy-list}. 
We use a \emph{co}procedure to prove that the \colheylo{quantitative precondition} $\colheylo{\ohfiveExp{\mathit{len}(l)}} = 0.5^{\mathit{len}(l)}$ is an \emph{upper} bound on the procedure's termination probability\footnote{Technically, $\colheylo{\ohfiveExp{\mathit{len}(l)}}$ upper-bounds the expected value of the random variable $\colheylo{1}$ after the procedure's termination.} given by the \colheylo{quantitative postcondition $1$}.
    We establish the above bound for the procedure body while assuming that it holds for recursive calls (cf.~\cite{olmedoReasoningRecursiveProbabilistic2016}).
	Our dual quantitative \texttt{assert}- and \texttt{assume}-statements encode the call in the usual way: we assert the procedure's \colheylo{pre} and assume its \colheylo{post}.

\begin{figure}[t]
\begin{align*} 
    & \coproc{\mathit{lossy}}{\varandtype{l}{List}}{} \\
    & \quad \Requires{\ohfiveExp{\mathit{len}(l)}} \\
    & \quad \Ensures{1} \\
    & \blockStart \\
    & \qquad \stmtIfStart{\mathit{len}(l) > 0} \\
    & \qquad \quad \stmtDeclInit{coin}{\bool}{\exprFlip{0.5}} ~~ \intersem{\text{coin flip}} \\
    & \qquad \quad \stmtIfStart{coin}  \\
    & \qquad \quad \quad \coAssert{\colheylo{\ohfiveExp{\mathit{len}(\mathit{tail}(l))}}}\symSemi\coValidate\symSemi  \coAssume{\colheylo{1}}~~\intersem{\text{call of } \mathit{lossy}(\mathit{tail}(l))} \\
    & \qquad \quad \blockEnd~\stmtElseStart \Assert{\embed{\false}}~\blockEnd~\intersem{\pDiverge} \\
    & \qquad \blockEnd \\
    & \blockEnd
\end{align*}	
\vspace{-1.5em}
\caption{Encoding of the lossy list traversal (see \Cref{fig:intro:lossy}) in our intermediate language.}
\label{fig:lossy-heyvl}
\end{figure}

\paragraph{Contributions} The main contributions of our work are:
\begin{enumerate}
  \item A \emph{novel intermediate verification language} ($\rightarrow$ \Cref{sec:heyvl}) for automating probabilistic program verification techniques featuring \emph{quantitative generalizations of standard verification constructs}, e.g.\ $\texttt{assert}$ and $\texttt{assume}$, and a \emph{formalization} of its 
      semantics based on a real-valued logic ($\rightarrow$~\Cref{sec:heylo}) with constructs inspired by G\"odel logics. 
  \item \emph{Encodings of verification techniques and proof rules} with different theoretical underpinnings (e.g. domain theory, martingales, and the optional stopping theorem) taken from the probabilistic program verification literature into our intermediate language ($\rightarrow$ \Cref{sec:encodings}).
  \item An SMT-backed \emph{verification infrastructure} that enables researchers to prototype and automate verification techniques for probabilistic programs by encoding to our intermediate language, an \emph{experimental evaluation} of its feasibility, and a prototypical \emph{frontend} for verifying programs written in the probabilistic guarded command language ($\rightarrow$ \Cref{sec:implementation}).
\end{enumerate}

\ifdefined\WITHAPPENDIX
\else
Proofs and further details about our encodings are available online in a technical report.
\fi 

%


\section{\HeyLo: A Quantitative Assertion Language}
\label{sec:heylo}

When analyzing quantitative program properties such as runtimes, failure probabilities, or space usage, it is often more direct, more intuitive, and more practical to reason directly about \emph{values} like the runtime $n^2$, the probability~$\sfrac{1}{2^x}$, or a list's length, instead of \emph{predicates} like $\textit{rt} = n^2$, $\textit{prob} \leq \sfrac{1}{2^x}$, or $\text{length}(ls) > 0$ (cf., \cite{ngoBoundedExpectationsResource2018,kaminskiWeakestPreconditionReasoning2018}).

This section introduces \HeyLo{} -- a real-valued logic for quantitative verification of probabilistic programs, which aims to take the role that predicate logic has for classical verification.
By syntactifying real-valued functions, \HeyLo serves as 
(1) a language for specifying quantitative properties -- in particular those that~\citet{mciverAbstractionRefinementProof2005} (and many other authors) call \emph{expectations}\footnote{For historical reasons, the term \emph{expectations} refers to random variables on a program's state space.} --, and
(2)~a foundation for automation by reducing many verification problems to a decision problem for \HeyLo, e.g.\ validity or entailment checking.
To ensure that \HeyLo is expressive enough for (1), we design it reminiscently of the language by \citet{batzRelativelyCompleteVerification2021}, which is relatively complete for the verification of probabilistic programs.
To ensure that \HeyLo is suitable for (2), \HeyLo is \emph{first-order}, so as to simplify automation.
Moreover, verification problems can often be stated as inequalities between to functions.
To ensure that such inequalities can, in principle, be encoded into a \emph{single} decision problem for \HeyLo, we introduce \emph{quantitative (co)implications} -- which provide a syntax for comparing \HeyLo formulae -- and prove an analogue to the classical deduction theorem for predicate logic~\cite{kleeneIntroductionMetamathematics1952}.
Supporting comparisons between expectations via (co)implications is essential for encoding proof rules for probabilistic programs. 
%
The (co)implications are inspired by intuitionistic G\"odel logics \cite{baazInfinitevaluedGodelLogics1996,preiningGodelLogicsSurvey2010} and form Heyting algebras (cf.\ \Cref{thm:adjoint}), hence the name \HeyLo.

\subsection{Program States and Expectations}
Let $\Vars = \{x,y,\ldots\}$ be a countably infinite set of typed variables. 
We write $\typeof{x}{\typevar}$ to indicate that $x$ is of type $\typevar$, i.e.\ $\typevar$ is the set of values $x$ can take.
We assume the built-in types $\Bools = \{\true,\false\}$, $\Nats$, $\Ints$, $\Rats$, $\PosRats$, $\Reals$, $\PosReals$, and $\PosRealsInf = \PosReals \cup \{\infty\}$; 
 our verification infrastructure also supports user-defined mathematical types (cf. \cref{sec:domain-decl}).
  We collect all types in $\Types$ and all values in 
  $\Vals = \bigcup_{\typevar \in \Types} \typevar$. 
  A \emph{(program) state} $\State$ maps every variable $\typeof{x}{\typevar}$ to a value in $\typevar$.
The set 
of states is thus%
\begin{align*}
	\States \qeq \setcomp{
		\State\colon \Vars \to \Vals \quad{}
	}{
		\quad  \text{for all } x \in \Vars\colon \quad \typeof{x}{\typevar}\qimplies \sigma(x)\in\typevar~  
	}~.
\end{align*}%
%
\emph{Expectations} are the quantitative analogue to logical predicates: they map program states to $\PosRealsInf$ instead of truth values.
The complete lattice $(\Expectations,\,\expleq)$ of expectations is given by%
\begin{align*}%
	\Expectations \eeq \setcomp{ \expa }{ \expa \colon \States \to \Rplus }
	\qquad\textnormal{with}\qquad \expa \eexpleq \expb \quad\text{iff}\quad \text{for all $\State\in\States$}\colon{}~ \expa(\State) \lleq \expb(\State)~.
\end{align*}%
%
%

\subsection{Syntax of \HeyLo}
We start with the construction of \HeyLo's atoms. The set $\Terms$ of \emph{terms} is given by the grammar
\begin{align*}
    \termvar \quad{}\Coloneqq{}\quad c \morespace{\vert} x 
    \morespace{\vert}  \funcsymb(\termvar, \ldots, \termvar)~,
\end{align*}%
where $c$ is a \emph{constant} in $\Rats \cup \Bools$, $x$ is a \emph{variable} in $\Vars$, and
$\funcsymb$ is either one of the \emph{built-in function} symbols $+,\cdot,-,\monus,<,=,\wedge,\vee,\neg$ 
($\monus$ is subtraction truncated at $0$)
or a typed \emph{user-defined function} symbol $\typeof{\funcsymb}{\typevar_1 \times \ldots \times \typevar_n \to \typevar}$ for some $n\geq 0$ and types $\typevar_1,\ldots,\typevar_n,\typevar$ (cf. \cref{sec:domain-decl}).
Function symbols include, for example, the length of lists $\typeof{\listlen}{\listtype \to \Nats}$ and the exponential function $\typeof{\mathit{exp}}{\Reals \times \Ints \to \Reals}$ mapping $(r,n)$ to $r^n$.

We write $\typeof{\termvar}{\typevar}$ to indicate that term $\termvar$ is of type $\typevar$. 
Typing and subtyping of terms is standard. 
In particular, if $\typeof{\termvar}{\typevar_1}$ and $\typevar_1 \subseteq \typevar_2$, then $\typeof{\termvar}{\typevar_2}$. 
We only consider well-typed terms.

We denote terms of type $\PosRats$ (resp.\ $\Bools$) by $\aexpr$ (resp.\ $\bexpr$) and call them \emph{arithmetic} expressions (resp. \emph{Boolean expressions}).
The set of \emph{$\HeyLo$ formulae} is given by the following grammar:

\vspace{-1\baselineskip}%
\begin{halfboxl}
	\begin{align*}
		\hla \morespace{\Coloneqq}& \aexpr \explain{arithmetic expressions} \\
		\vert~& \hla + \hla \explain{addition} \\
		\vert~& \hla \sqcap \hla \explain{minimum} \\
		\vert~& \iquant{\typeof{x}{\typevar}}{\hla} \explain{infimum over $\typeof{x}{\typevar}$} \\
		\vert~& \hla \impl \hla \explain{implication} 
	\end{align*}
\end{halfboxl}%
\begin{halfboxr}
	\begin{align*}
		\vert~& \embed{\bexpr} \explain{Boolean embedding} \\
		\vert~& \hla \cdot \hla \explain{multiplication} \\
		\vert~& \hla \sqcup \hla \explain{maximum} \\
		\vert~& \squant{\typeof{x}{\typevar}}{\hla} \explain{supremum over $\typeof{x}{\typevar}$} \\
		\vert~& \hla \coimpl \hla \explain{coimplication} 
	\end{align*}
\end{halfboxr}
\vspace{-1\baselineskip}%
\noindent%
We explain the meaning of \HeyLo formulae in the next subsection. Free- and bound (by $\Sup$ or $\Inf$ quantifiers) variables of a \HeyLo formula $\hla$ are defined as usual. The order of precedence for arithmetic- and Boolean expressions is standard. For \HeyLo formulae, the order of precedence is,
\[
    \Inf, \Sup \qquad\lightgray{<}\qquad \impl, \coimpl \qquad\lightgray{<}\qquad \sqcup \qquad\lightgray{<}\qquad \sqcap \qquad\lightgray{<}\qquad + \qquad\lightgray{<}\qquad \cdot 
    \quad\qquad ,
\]
i.e.\ $\Inf$ and $\Sup$ are least binding and $\cdot$ is most binding.
We use parentheses 
to resolve ambiguities. 

\subsection{Semantics and Properties of \HeyLo}
\label{sec:heylo-sem}
\begin{figure}[t]
	\renewcommand{\arraystretch}{1.5}%
	\newcommand{\hint}[1]{\footnotesize{}#1}%
	\begin{minipage}{0.5\textwidth}
		\begin{center}
			\adjustbox{width=\textwidth}{%
			\begin{tabular}{@{\qquad}l@{\qquad}l@{\qquad}l}
				\toprule
				$\hlc$ & $\interpretsimpleState{\hlc}$ \\
				\midrule
				$\aexpr$ & $\interpretsimpleState{a}\vphantom{\ifThenElse{\interpretsimpleState{\bexpr} = \true}{\infty}{0}}$ \\
				$\hla + \hlb$ & $\interpretsimple{\hla}(\sigma) + \interpretsimple{\hlb}(\sigma)$ \\
				$\hla \sqcap \hlb$ & $\min \Set{\interpretsimple{\hla}(\sigma),~ \interpretsimple{\hlb}(\sigma)}$ \\
				$\iquant{\typeof{x}{\typevar}}{\hla}$ & $\inf \Set{ \interpretsimpleStateSubstBy{\hla}{x}{v} | v \in \typevar }$ & \\
				$\hla \impl \hlb$ & $\ifThenElse{\interpretsimpleState{\hla} \leq \interpretsimpleState{\hlb}}{\infty}{\interpretsimpleState{\hlb}}$ \\[1em]
				\bottomrule
			\end{tabular}}
		\end{center}
	\end{minipage}%
	\begin{minipage}{0.5\textwidth}
		\begin{center}
			\vspace*{-0.09\baselineskip}
			\adjustbox{width=\textwidth}{%
			\begin{tabular}{@{\qquad}l@{\qquad}l@{\qquad}l}
				\toprule
				$\hlc$ & $\interpretsimpleState{\hlc}$ \\
				\midrule
				$\embed{\bexpr}$ & $\ifThenElse{\interpretsimpleState{\bexpr} = \true}{\infty}{0}$  \\[1em]
				$\hla \cdot \hlb$ & $\interpretsimple{\hla}(\sigma) \cdot \interpretsimple{\hlb}(\sigma)$ \\
				$\hla \sqcup \hlb$ & $\max \Set{\interpretsimple{\hla}(\sigma),~ \interpretsimple{\hlb}(\sigma)}$ \\
				$\squant{\typeof{x}{\typevar}}{\hla}$ & $\sup \Set{ \interpretsimpleStateSubstBy{\hla}{x}{v} | v \in \typevar }$ & \\
				$\hla \coimpl \hlb$ & $\ifThenElse{\interpretsimpleState{\hla} \geq \interpretsimpleState{\hlb}}{0}{\interpretsimpleState{\hlb}}$ \\
				\bottomrule
			\end{tabular}}
		\end{center}
	\end{minipage}
	\caption{Semantics of \HeyLo. $\inf$ and $\sup$ are taken over $\PosRealsInf$. Here 
$\sigma\substBy{x}{v}(y) = \begin{cases}
	v, & \text{if $x=y$} \\
	\sigma(y), & \text{otherwise}~.
\end{cases}$}
	\label{fig:heylo-semantics}
\end{figure}
A term $\typeof{\termvar}{\typevar}$ evaluates to value $\interpretsimple{\termvar}(\sigma) \in \typevar$ on state $\sigma$. 
We assume the standard semantics for constants and built-in functions and that  
$\interpretsimple{\funcsymb}$ is given for all user-defined functions.

The \emph{semantics} of a \HeyLo formula $\hla$ is an expectation $\interpretsimple{\hla}\colon \States \to \PosRealsInf$ defined by induction on the structure of $\hla$ in \Cref{fig:heylo-semantics}, where we define $0 \cdot \infty = \infty \cdot 0 = 0$ as is common in measure theory.
Two \HeyLo formulae $\hla$ and $\hlb$ are \emph{equivalent}, denoted $\hla \equiv \hlb$, iff $\interpretsimple{\hla} = \interpretsimple{\hlb}$.
A \HeyLo formula%
\begin{align*}
	\hla \textnormal{ is \emph{valid}}  \qiff \hla \morespace{\equiv} \infty \qqand
	\hla \textnormal{ is \emph{covalid}}  \qiff \hla \morespace{\equiv} 0~.
\end{align*}%
For $\hla,\hlb \in \HeyLo$, we define
%
\[
    \underbrace{\hla \hheyloleq \hlb}_{\mathclap{\text{read: $\hla$ lower-bounds $\hlb$}}} \qqiff \underbrace{\interpretsimple{\hla} \eexpleq \interpretsimple{\hlb}}_{\mathclap{\text{pointwise inequality}}}
    \quad\qqand\quad
    \underbrace{ \hla \hheylogeq \hlb}_{\mathclap{\text{read: $\hla$ upper-bounds $\hlb$}}} \qqiff\interpretsimple{\hla} \eexpgeq \interpretsimple{\hlb}.
\]
These notions are central since we will encode verification problems as inequalities between \HeyLo formulae. In contrast to classical IVLs, \HeyLo contains constructs for \emph{both} reasoning about lower-bounds 
and for reasoning about upper bounds.
%
%
We briefly go over each construct in \Cref{fig:heylo-semantics}.

\paragraph{Arithmetic- and Boolean Expressions.}
These expressions form the atoms of \HeyLo. Consider, e.g.\ the arithmetic expressions $x+1$ for some numeric variable $x$ and $2\cdot \listlen(y)$ for a variable $\typeof{y}{\listtype}$. On state $\sigma$, $x+1$ evaluates to $\sigma(x) + 1$, and $2\cdot \listlen(y)$ evaluates to $2$ times the length of list $\sigma(y)$.

Boolean expressions $\bexpr$ are embedded in \HeyLo using the \emph{embedding operator} $\embed{\cdot}$: On state $\sigma$, $\embed{\bexpr}$ evaluates to $\infty$ (think: true, since $\infty$ is the top element in the lattice of expectations)  if $\sigma$ satisfies $\bexpr$, and to $0$ otherwise.
For instance, $\embed{x+1 = 2\cdot \listlen(y)}$ 
evaluates to $\infty$ if $\sigma(x)+1$ is equal to two times the length of the list $\sigma(y)$, and to $0$ otherwise. 

\paragraph{Addition, Multiplication, Minimum, and Maximum.}
\HeyLo formulae can be composed by standard binary arithmetic operations for sums ($+$), products ($\cdot$), minimum ($\sqcap$), and maximum ($\sqcup$).
Each of these operations are understood pointwise (with the assumption that $\infty \cdot 0 = 0$). 
For instance, $\interpretsimple{\listlen(y_1) \sqcap \listlen(y_2)}(\sigma)$ is the minimum length of lists $\sigma(y_1)$ and $\sigma(y_2)$.


\paragraph{Quantifiers.}
The \emph{infimum quantifier} $\Inf $ and the \emph{supremum quantifier} $\Sup$ from \cite{batzRelativelyCompleteVerification2021} are the quantitative analogues of the universal $\forall$ and the existential $\exists$ quantifier from predicate logic. 
Intuitively, the $\Inf$ quantifier minimizes a quantity, just like the $\forall$ quantifier minimizes a predicate's truth value.
Dually, the $\Sup$ quantifier maximizes a quantity just like $\exists$ maximizes a predicate's truth value. 
The quantitative quantifiers embed 
$\forall$ and $\exists$ in \HeyLo, i.e.\ 
for $\typeof{\bexpr}{\Bools}$ and $\sigma \in \States$,
\[
    \interpretsimple{\iquant{\typeof{x}{\typevar}}{\embed{\bexpr}}}(\sigma) = 
    \begin{cases}
    	\infty, &\text{if}~\sigma \models \forall \quant{\typeof{x}{\typevar}}{\bexpr} \\
    	0, &\text{otherwise}
    \end{cases}
    \qquad
    \text{and}
    \qquad
    \interpretsimple{\squant{\typeof{x}{\typevar}}{\embed{\bexpr}}}(\sigma) =  
    \begin{cases}
    	\infty, &\text{if}~\sigma \models \exists \quant{\typeof{x}{\typevar}}{\bexpr} \\
    	0, &\text{otherwise}
    \end{cases}
\]
Here, $\models$ denotes the standard satisfaction relation of first-order logic. The above construction extends canonically to nested quantifiers, e.g.\ $\exists \quant{\typeof{x}{\typevar}}{\forall \quant{\typeof{y}{\typevar'}}{\bexpr}}$ corresponds to $\squant{\typeof{x}{\typevar}}{\iquant{\typeof{y}{\typevar'}}{\embed{\bexpr}}}$.

For a quantitative example, consider the formula $\hla = \squant{\typeof{x}{\PosRats}}{\embed{x\cdot x <2} \sqcap x}$. On state $\sigma$, the subformula $\embed{x\cdot x <2} \sqcap x$ evaluates to $\sigma(x)$ if $\sigma(x) \cdot \sigma(x) <2$, and to $0$ otherwise. Consequently,
\[
    \interpretsimple{\hla}(\sigma) \eeq \sup \set{r \in \PosRats ~\mid~ r\cdot r < 2} \eeq \sqrt{2}~.
\]
Notice that $\interpretsimple{\hla}(\sigma)$ is \emph{irrational} even though all constituents of $\hla$ are rational-valued. 
It has been shown in \cite{batzRelativelyCompleteVerification2021} that --- similar to our above construction of $\sqrt{2}$ --- the quantitative quantifiers combined with arithmetic- and (embedded) Boolean expressions over $\PosRats$ enable the construction of \emph{all} expected values emerging from discrete probabilistic programs.

\paragraph{(Co)implication}
\begin{figure}[t]
	\begin{minipage}{.45\linewidth}
		\centering
		\begin{tikzpicture}[scale=0.8]
			\begin{axis}[
				small,
				xtick={0,5,10},
				xlabel=$\sigma(x)$,
				ytick={0,5,10,12},
				xmin=0,
				xmax=10,
				yticklabels={$0$,$5$,$10$,$\infty$},
				ylabel=$\interpretsimple{5 \impl x}(\sigma)$,
				ymin=0,
				ymax=12,
				domain=0:10,
				width=5cm,
				axis lines=left,
				axis line style={-}, 
				axis line shift=10pt,
				scatter,
				scatter src=explicit symbolic,
				scatter/classes={
					o={mark=*,fill=white},
					c={mark=*}
				},
				]
				\addplot[mark=none,thick] table[meta index=2] {
					A B i
					0 0 c
					5 5 o
					
					5 12 c
					10 12 c
				};
				
				\draw[dashed,black] (0,5) -- (10,5);
				
				\draw[dashed,white] (axis cs: 0,10)--(axis cs: 0,11);
			\end{axis}
		\end{tikzpicture}
		\caption{$\interpretsimple{5 \impl x}(\sigma)$ for $\sigma(x) \in [0,10]$.}
		\label{fig:assume}
	\end{minipage}\hfill%
	\begin{minipage}{.45\linewidth}
		\centering
		\begin{tikzpicture}[scale=0.8]
			\begin{axis}[
				small,
				xtick={0,5,10},
				xlabel=$\sigma(x)$,
				ytick={0,5,10,12},
				xmin=0,
				xmax=10,
				yticklabels={$0$,$5$,$10$,$\infty$},
				ylabel=$\interpretsimple{5 \coimpl x}(\sigma)$,
				ymin=0,
				ymax=12,
				domain=0:10,
				width=5cm,
				axis lines=left,
				axis line style={-}, 
				axis line shift=10pt,
				scatter,
				scatter src=explicit symbolic,
				scatter/classes={
					o={mark=*,fill=white},
					c={mark=*}
				},
				]
				\addplot[mark=none,thick] table[meta index=2] {
					A B i
					0 0 c
					5 0 c
					
					5 5 o 
					10 10 c
				};
				
			\draw[dashed,black] (0,5) -- (10,5);
			\end{axis}
		\end{tikzpicture}
		\caption{$\interpretsimple{5 \coimpl x}(\sigma)$ for $\sigma(x) \in [0,10]$.}
		\label{fig:coassume}
	\end{minipage}
\end{figure}
$\impl$ and $\coimpl$ generalize Boolean implication and converse nonimplication.\footnote{The converse nonimplication of propositions $P$ and $Q$ is defined as $\neg(P \leftarrow Q)$ and is to be read as \enquote{$Q$ does \emph{not imply} $P$}.}
For state $\sigma$, the \emph{implication} $\hla \impl \hlb$ evaluates to $\infty$ if $\interpretsimple{\hla}(\sigma) \leq \interpretsimple{\hlb}(\sigma)$, and to $\interpretsimple{\hlb}(\sigma)$ otherwise. Dually, the \emph{coimplication} $\hla \coimpl \hlb$ evaluates to $0$ if $\interpretsimple{\hla}(\sigma) \geq \interpretsimple{\hlb}(\sigma)$, and to $\interpretsimple{\hlb}(\sigma)$ otherwise. 
	To gain some intuition,
	we first note that the top element $\infty$ of our quantitative domain $\PosRealsInf$ can be viewed as \enquote{entirely true} (i.e.\ as true as it can possibly get) and $0$ can be viewed as \enquote{entirely false} (i.e.\ as false as it can possibly get).
	The implication $\hla \impl \hlb$ makes $\hlb$ \emph{more true} by \emph{\underline{lowerin}g the threshold above which $\hlb$ is considered \underline{entirel}y\underline{ true}} -- and thus $\infty$ -- to $\hla$.
	In other words: Anything that is at least as true as $\hla$ is considered entirely true. Anything less true than $\hla$ remains as true as $\hlb$.
	\Cref{fig:assume} illustrates this for the formula $5 \impl x$.

	As another example, $x^2 \impl x$ evaluates to $\infty$ for states $\sigma$ with $\sigma(x) \in [0,1]$; otherwise, $x$ is below the threshold $x^2$ at which $x$ is considered entirely true and thus the implication evaluates to $x$.
	
	The intuition underlying the coimplication is dual: 
	$\hla \coimpl \hlb$ makes $\hlb$ \emph{less true} by \emph{\underline{raisin}g the threshold below which $\hlb$ is considered \underline{entirel}y\underline{ }f\underline{alse}}
	-- and thus $0$ -- to $\hla$.
	In other words: Anything that is not more true than $\hla$ is considered entirely false. Anything that is more true than $\hla$ remains as true as $\hlb$.
	\Cref{fig:coassume} illustrates this for the formula $5\coimpl x$. 
Chained implications can also be understood in terms of lowering thresholds: $\hla \impl (\hlb \impl \hlc)$ lowers the threshold at which $\hlc$ is considered entirely true to $\hla$ \emph{and} $\hlb$, whichever is lower. 
Formally, $\hla \impl (\hlb \impl \hlc)$ is equivalent to $(\hla \sqcap \hlb) \impl \hlc$.
More generally, (co)implications are the adjoints of the minimum $\sqcap$ and maximum $\sqcup$:%
%
\begin{restatable}[Adjointness Properties]{theorem}{thmHeyloAdjoint}
	\label{thm:adjoint}
	For all \HeyLo formulae $\hla$, $\hlb$, and $\hlc$, we have%
	\begin{align*}
		\hla \sqcap \hlb \hheyloleq \hlc&\qiff \hla \hheyloleq \hlb \impl \hlc \qquad\text{and}\qquad 
		\hlb \sqcup \hlc\ \hheylogeq \hla \qiff \hlc \hheylogeq \hlb \coimpl \hla ~.
	\end{align*}%
\end{restatable}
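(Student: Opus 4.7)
The plan is to reduce the theorem to a pointwise statement about values in $\PosRealsInf$. Since all relevant operations ($\sqcap$, $\sqcup$, $\impl$, $\coimpl$) are defined pointwise by \Cref{fig:heylo-semantics}, and since $\hheyloleq$ and $\hheylogeq$ are pointwise inequalities of expectations, both adjointness properties reduce to showing, for an arbitrary fixed state $\sigma \in \States$ with abbreviations $x = \interpretsimple{\hla}(\sigma)$, $y = \interpretsimple{\hlb}(\sigma)$, $z = \interpretsimple{\hlc}(\sigma)$, the two equivalences
\begin{align*}
  \min(x,y) \leq z \quad&\text{iff}\quad x \leq \bigl(y \impl z\bigr) \\
  \max(y,z) \geq x \quad&\text{iff}\quad z \geq \bigl(y \coimpl x\bigr)~,
\end{align*}
where $(y \impl z)$ is $\infty$ if $y \leq z$ and $z$ otherwise, and dually for $\coimpl$. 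The first step is therefore to note that the universal quantification over $\sigma$ distributes across the iff, reducing to these arithmetic facts.

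For the first equivalence, I would do a case split on the guard of $\impl$. If $y \leq z$, then $y \impl z$ equals $\infty$, so the right-hand inequality holds trivially; and $\min(x,y) \leq y \leq z$ also holds, so both sides are true. If $y > z$, then $y \impl z = z$, so the right-hand side becomes $x \leq z$; meanwhile $\min(x,y) \leq z$ is equivalent to $x \leq z$ (because $y > z$ alone cannot witness $\min(x,y) \leq z$). Hence both sides coincide. The second equivalence is established by an entirely dual case split on whether $y \geq x$ or $y < x$, using that $\max(y,z) \geq x$ collapses to $z \geq x$ precisely when $y < x$.

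The only subtlety — and it is not really an obstacle — is handling $\infty$ consistently in the arithmetic, in particular ensuring that the comparisons remain well-defined on $\PosRealsInf$ and that the equality $y \impl z = \infty$ when $y = z = \infty$ does not disturb the argument. Since $\leq$ is a total order on $\PosRealsInf$, the case analysis goes through uniformly. No induction on \HeyLo formulae is needed; the result is a purely semantic adjointness between $\sqcap$ and $\impl$ (respectively $\sqcup$ and $\coimpl$) lifted pointwise to expectations, which justifies the Heyting-algebraic perspective and the name \HeyLo.
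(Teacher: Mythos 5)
Your proposal is correct and takes essentially the same route as the paper's proof: both reduce the adjointness to a pointwise equivalence of values in $\PosRealsInf$ and settle it by a case distinction on whether $\interpretsimple{\hlb}(\sigma) \leq \interpretsimple{\hlc}(\sigma)$, relying on the totality of the order (the dual claim being handled analogously). The paper merely phrases the case split via the intermediate disjunction $\interpretsimple{\hla}(\sigma) \leq \interpretsimple{\hlc}(\sigma) \lor \interpretsimple{\hlb}(\sigma) \leq \interpretsimple{\hlc}(\sigma)$, which is the same argument in different clothing.
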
%
\noindent%
Both $\impl$ and $\coimpl$ are backward compatible to Boolean implication and converse nonimplication:
%
\[
\interpretsimple{\embed{\bexpr_1} \impl \embed{\bexpr_2}}(\sigma) = 
\begin{cases}
	\infty, &\text{if}~\sigma \models \bexpr_1 \impl \bexpr_2 \\
	0, &\text{otherwise}
\end{cases}
\qquad
\interpretsimple{\embed{\bexpr_1} \coimpl \embed{\bexpr_2}}(\sigma) = 
\begin{cases}
	\infty, &\text{if}~\sigma \models \neg(\bexpr_1 \leftarrow \bexpr_2) \\
	0, &\text{otherwise}
\end{cases}
\]
We will primarily use (co)implications to (1) incorporate the capability of \emph{comparing} expectations syntactically in \HeyLo and to (2) express \emph{assumptions}.
Application (1) is justified by the following quantitative version of the well-known deduction theorem\footnote{We mean the deduction theorem that relates semantical entailment $\models$ with the material conditional $\rightarrow$. Another theorem also known as \emph{deduction theorem} relates syntactical entailment (i.e.\ provability) $\vdash$ with the material conditional $\rightarrow$.} from first-order logic \cite{kleeneIntroductionMetamathematics1952}:
%
%
\begin{restatable}[\HeyLo Deduction Theorem]{theorem}{thmHeyloDeduction}
	\label{thm:deduction-theorem}
	For all \HeyLo formulae $\hla$ and $\hlb$, we have%
	\begin{align*}
		\hla \hheyloleq \hlb \qiff \hla \impl \hlb ~\textnormal{is valid} 
		\qquad\textnormal{and}\qquad 
		\hla \hheylogeq \hlb \qiff \hla \coimpl \hlb ~\textnormal{is covalid}~.
	\end{align*}%
\end{restatable}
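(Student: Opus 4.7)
The plan is to prove each of the two biconditionals by a direct pointwise unfolding of the semantics, since both claims really just say that the \enquote{threshold} built into $\impl$ (resp.\ $\coimpl$) is saturated at every state exactly when the desired pointwise inequality holds globally.

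For the first equivalence, I would argue that, for every state $\sigma$,
\[
    \interpretsimple{\hla \impl \hlb}(\sigma) \eeq \infty
    \qiff
    \interpretsimple{\hla}(\sigma) \lleq \interpretsimple{\hlb}(\sigma)~.
\]
The \enquote{$\Leftarrow$} direction is immediate from the first case of the semantics of $\impl$ in \Cref{fig:heylo-semantics}. For \enquote{$\Rightarrow$}, suppose the value is $\infty$. Either we are already in the first branch, giving the inequality, or we are in the \enquote{otherwise} branch, in which case the value equals $\interpretsimple{\hlb}(\sigma)$; but then $\interpretsimple{\hlb}(\sigma) = \infty$, and so $\interpretsimple{\hla}(\sigma) \leq \infty = \interpretsimple{\hlb}(\sigma)$ holds trivially. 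Quantifying universally over $\sigma$ gives \emph{$\hla \impl \hlb$ is valid} iff $\hla \heyloleq \hlb$, which is the first claim.

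For the second equivalence, I would establish the dual statement
\[
    \interpretsimple{\hla \coimpl \hlb}(\sigma) \eeq 0
    \qiff
    \interpretsimple{\hla}(\sigma) \ggeq \interpretsimple{\hlb}(\sigma)~,
\]
by the same case analysis: \enquote{$\Leftarrow$} is the first clause of the semantics of $\coimpl$, and for \enquote{$\Rightarrow$} the only nontrivial case is when we are in the \enquote{otherwise} branch with value $\interpretsimple{\hlb}(\sigma) = 0$, in which case $\interpretsimple{\hla}(\sigma) \geq 0 = \interpretsimple{\hlb}(\sigma)$ holds automatically because all values lie in $\PosRealsInf$. Universally quantifying over $\sigma$ then yields $\hla \coimpl \hlb \equiv 0$ iff $\hla \heylogeq \hlb$.

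I do not expect any real obstacle: the whole argument is a definition chase, and the only subtlety is recognizing that the boundary cases ($\interpretsimple{\hlb}(\sigma) = \infty$ for $\impl$ and $\interpretsimple{\hlb}(\sigma) = 0$ for $\coimpl$) do not give rise to spurious counterexamples. An alternative, more abstract route would be to derive both equivalences as immediate corollaries of the adjointness result (\Cref{thm:adjoint}) by instantiating its left-hand sides with the constants $\infty \equiv \embed{\true}$ and $0 \equiv \embed{\false}$ respectively, using $\embed{\true} \sqcap \hlb \equiv \hlb$ and $\embed{\false} \sqcup \hlc \equiv \hlc$; but the direct unfolding is more self-contained and avoids relying on the adjointness theorem.
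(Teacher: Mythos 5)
Your proposal is correct and matches the paper's own proof, which likewise unfolds the semantics of $\impl$ pointwise and observes that the only nontrivial case in the converse direction is the ``otherwise'' branch forcing $\interpretsimple{\hlb}(\sigma) = \infty$ (resp.\ $0$ for $\coimpl$), where the inequality holds trivially. The paper also treats the coimplication case as ``analogous'' rather than spelling it out, exactly as you do.
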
%
\noindent%
\ifdefined\WITHAPPENDIX
The proof is in \cref{sec:app-proofs-heylo}.
\fi
For application (2), consider the implication $\embed{\bexpr} \impl \hlb$; it evaluates to $\hlb$ whenever $\bexpr$ holds, and to $\infty$ otherwise. As in predicate logic, the implication can be read as \emph{assuming} $\bexpr$ holds before evaluating $\hlb$. 
Formally, 
%
%
%
\[
  \interpretsimple{\embed{\bexpr} \impl \hlb}(\sigma) \eeq
  \begin{cases}
  	   \interpretsimple{\hlb}(\sigma), & \text{if}~\sigma \models \bexpr \\
  	   \infty, & \text{otherwise}~. \\
  \end{cases}
\]
%
Now, consider the inequality $\hla \heyloleq \embed{\bexpr} \impl \hlb$. 
For all states $\sigma$ \emph{not} satisfying $\bexpr$ (i.e.\ the set of states that we do \emph{not} assume), the inequality vacuously holds. 
For all other states (i.e.\ those states that we actually assume), $\hla$ must lower-bound $\hlb$ in order for the inequality to hold.
%
\begin{example}
	\label{ex:heylo_ifthenelse}
	Let $\hla,\,\hlb \in \HeyLo$ and $\typeof{\bexpr}{\Bools}$. We construct a \HeyLo formula $\hlc$ that, on state $\sigma$, evaluates to $\hla$ if $\sigma \models \bexpr$, and to $\hlb$ otherwise. For that, we use the Boolean embedding \mbox{and the implication:}
	\begin{align*}
		\hlc 
		\qqeq 
		\underbrace{
			(\embed{\bexpr} \impl \hla)
		}_{
			\mathclap{
				\text{if $\bexpr$ holds, evaluate to $\hla$}
			}
		} 
		\qquad
		\underbrace{\vphantom{(\embed{\bexpr} \impl \hla)}{\sqcap}}_{\text{and}}
		\qquad
		\underbrace{
			(\embed{\neg\bexpr} \impl \hlb)
		}_{
			\mathclap{
				\text{if $\neg\bexpr$ holds, evaluate to $\hlb$}
			}
		}
	\end{align*}%
\end{example}
%
%
%
%

%
%
\noindent%
To encode assumptions using the coimplication $\coimpl$, we first introduce Boolean \emph{co}-embeddings
\[
    \interpretsimple{\coembed{\bexpr}} \eeq \interpretsimple{\embed{\neg\bexpr}}
    \eeq \lambda \sigma. \ifThenElseDot{\interpretsimple{\bexpr}(\sigma) = \true}{0}{\infty}
\]
We then obtain a dual construction using $\coimpl$ for encoding assumptions: By \Cref{thm:adjoint}, we have 
\[
   \hla \heylogeq \coembed{\bexpr} \coimpl \hlb 
   \qquad\text{iff}\qquad
   \text{for all $\sigma \in \States$}\colon \interpretsimple{\bexpr}(\sigma) = \true~\text{implies}~\interpretsimple{\hla}(\sigma) \geq \interpretsimple{\hlb}(\sigma)~,
\]
i.e.\ the coimplication $\coembed{\bexpr} \coimpl \hlb $ ensures that it suffices to reason about states satisfying $\bexpr$.

\subsection{Qualitative Reasoning in \HeyLo}
\label{subsec:syntactic_sugar}
The verification of probabilistic programs comprises both quantitative \emph{and} qualitative reasoning. Whereas questions like \enquote{what is the expected value of program variable $x$ upon termination} are inherently quantitative, questions like \enquote{does $x$ increase in expectation after one loop iteration?} 
are qualitative. \HeyLo marries quantitative and qualitative reasoning.
 To shift to a qualitative statement, we first consider the 
 \emph{negation} $\neg \hla$ and \emph{conegation} $\coneg \hla$ of $\hla$
 obtained from our (co)implications:
\[
\interpretsimple{\neg \hla} \eeq \interpretsimple{\hla \impl 0} \eeq \lambda \sigma. \ifThenElse{\interpretsimpleState{\hla} = 0}{\infty}{0}%
%
\qquad
\interpretsimple{\coneg \hla} \eeq \interpretsimple{\hla \coimpl \infty} \eeq \lambda \sigma. \ifThenElseDot{\interpretsimpleState{\hla} = \infty}{0}{\infty}
\]
The (co)negation always evaluates to either $\infty$, the top element of $\PosRealsInf$ (entirely true), or $0$, the bottom element of $\PosRealsInf$ (entirely false).
By applying a (co)negation twice, we turn an arbitrary expectation into a  qualitative statement. 
Formally,
%
we define the \emph{(pointwise) validation} $\heylovalidate{\hla}$ and \emph{(pointwise) covalidation} $\heylocovalidate{\hla}$ by\footnote{In G\"odel logics, these are also called \emph{projection modalities} \cite{baazInfinitevaluedGodelLogics1996}.}
\[
   \interpretsimple{\heylovalidate{\hla}} \eeq \interpretsimple{\coneg\coneg \hla} \eeq \lambda \sigma. \ifThenElse{\interpretsimpleState{\hla} = \infty}{\infty}{0}
   \quad\text{and}\quad
   \interpretsimple{\heylocovalidate{\hla}} \eeq \interpretsimple{\neg\neg \hla} \eeq \lambda \sigma. \ifThenElseDot{\interpretsimpleState{\hla} = 0}{0}{\infty}
\]
In words, the validation $\heylovalidate{\hla}$ is (pointwise) entirely true whenever $\hla$ is entirely true, and entirely false otherwise. Dually, $\heylocovalidate{\hla}$ is entirely false whenever $\hla$ is entirely false, and entirely true otherwise. Thus, both validations and covalidations \enquote{boolify} \HeyLo formulae.
The difference is that validations pull intermediate truth values down to entire falsehood whereas covalidations lift intermediate truth values up to entire truth.

Turning expectations into qualitative statements 
has an important application, which often arises when encoding verification problems: Suppose we are given two formulae $\hla,\hlb$ with free variables $y_1,\ldots,y_n$. Moreover, our goal is to construct a \HeyLo formula $\hlc$ that evaluates to $x$ of type $\PosRats$ 
if $\hla \heyloleq \hlb$, and to $0$ otherwise. For that, we first construct the formula $\iquant{y_1,\ldots,y_n}{\heylovalidate{\hla \impl \hlb}}$. Due to the infimum quantifier over all free variables, this formula is \emph{equivalent} to $\infty$ if $\hla \heyloleq \hlb$, and \emph{equivalent} to $0$ otherwise. Hence, we construct $\hlc$ as 
\[
   \underbrace{\big(\iquant{\varandtype{y_1}{\typevar_1},\ldots,\varandtype{y_n}{\typevar_n}}{\heylovalidate{\hla \impl \hlb}}\big)}_{\text{evaluate to $0$ if $\hla\not\heyloleq\hlb$}} 
   \hspace{0.8cm}
   \underbrace{\vphantom{(\iquant{y_1,\ldots,y_n}{\heylovalidate{\hla \impl \hlb}})}\sqcap}_{\text{and}} 
   \underbrace{\vphantom{(\iquant{y_1,\ldots,y_n}{\heylovalidate{\hla \impl \hlb}})}x}_{\text{evaluate to $x$ otherwise}}~.
\]
Moreover, we obtain a dual construction using $\coimpl$ and the supremum quantifier:
\[
\underbrace{\big(\squant{\varandtype{y_1}{\typevar_1},\ldots,\varandtype{y_n}{\typevar_n}}{\heylocovalidate{\hla \coimpl \hlb}}\big)}_{\text{evaluate to $\infty$ if $\hla\not\heylogeq\hlb$}} 
\hspace{0.8cm}
\underbrace{\vphantom{(\squant{y_1,\ldots,y_n}{\heylocovalidate{\hla \coimpl \hlb}})}\sqcup}_{\text{and}} 
\underbrace{\vphantom{(\squant{y_1,\ldots,y_n}{\heylocovalidate{\hla \coimpl \hlb}})}x}_{\text{evaluate to $x$ otherwise}}
\]



\section{\textnormal{\textbf{\protect\HeyVL}}: A Quantitative Intermediate Verification Language}
\label{sec:heyvl}

Many verification problems for probabilistic programs reduce naturally to checking inequalities between \HeyLo formulae.\footnote{Or equivalently by \Cref{thm:deduction-theorem}: Checking (co)validity, i.e.\ whether a \HeyLo formula is equivalent to $\infty$ (resp. $0$).}
Consider, for instance, the program%
\begin{align*}
	\stmtRasgn{y}{
		\color{heyvlColor}\pexp{\sfrac{1}{2}}{x} \pexpand \pexp{\sfrac{1}{2}}{x+1}
	}~,
\end{align*}%
which sets $y$ either to $x$ or to $x + 1$, depending on the outcome of a fair coin flip.
Suppose we want to verify that $x + \tfrac{1}{2}$ is a \emph{lower} bound on the expected value of $y$ after executing above program.
According to \citet{mciverAbstractionRefinementProof2005},
verifying this 
bound amounts to proving the inequality%
\begin{align*}
	\underbrace{
		\colheylo{x + \tfrac{1}{2}}
	}_{
		\mathclap{\text{proposed lower bound}}
	}
	\qquad
	{\heyloleq}
	\qquad
	\tfrac{1}{2} \cdot x + \tfrac{1}{2} \cdot (x + 1)
	\qquad{\triangleq}\qquad
	\underbrace{
		\wp{\stmtRasgn{y}{\color{heyvlColor}\pexp{\sfrac{1}{2}}{x} \pexpand \pexp{\sfrac{1}{2}}{x+1}}}(\colheylo{y})
	}_{
		\mathclap{\text{expected outcome of {\color{heyvlColor}$x$ + fair coin flip} stored in $\colheylo{y}$}}
	}~,
	\tag{ex}
\end{align*}
where the weakest preexpectation $\wp{C}(f)$ is a function (which we can represent as a \HeyLo formula) that maps every initial state $\sigma$ to the expected value of $f$ after executing the program $C$ on input $\sigma$.
Our goal is to simplify writing, composing, and reasoning \emph{modularly} about such expected values and similar quantities.
To this end, we propose \HeyVL, a novel intermediate verification language for modeling quantitative verification problems.

\HeyVL \emph{programs} are organized as a \emph{collection of procedures}.
Each procedure~$P$ is equipped with a body $S$ and a specification.
The body $S$ is a \HeyVL \emph{statement} and can for now be thought of as a more or less ordinary probabilistic program.\footnote{There are verification-specific statements which can be part of the procedure body which we will describe later.}
The specification of a procedure comprises a \emph{pre}~$\hla$ and a \emph{post}~$\hlb$, both \HeyLo formulae.
Intuitively, a procedure $P$ \emph{verifies} if its body $S$ adheres to $P$'s specification, meaning essentially that the inequality $\hla \heyloleq \wp{S}(\hlb)$ holds, i.e.\ the expected value of $\hlb$ after executing $S$ is lower-bounded by $\hla$.
This inequality will be called the \emph{verification condition}~of~$P$.
An entire \HeyVL program \emph{verifies} if all of its procedures verify.

How do we describe the verification problem (ex) in \HeyVL?
As shown in \Cref{fig:heyvl:ex-procedure}, we write a single procedure $P$ with body $\stmtRasgn{y}{\color{heyvlColor}\pexp{\sfrac{1}{2}}{x} \pexpand \pexp{\sfrac{1}{2}}{x+1}}$, pre $\colheylo{x + \tfrac{1}{2}}$, and post~$\colheylo{y}$.
This gives rise to the verification condition $\colheylo{x + \tfrac{1}{2}} \heyloleq \wp{\stmtRasgn{y}{\color{heyvlColor}\pexp{\sfrac{1}{2}}{x} \pexpand \pexp{\sfrac{1}{2}}{x+1}}}(\colheylo{y})$, which is precisely the inequality (ex) we aim to verify.
The \HeyLo program (i.e.~the single procedure $P$) verifies if and only if we have positively answered the verification problem (ex).

To encode more complex verification problems or proof rules, one may need to write more than one \HeyVL procedure. For example, in \Cref{sec:encodings:loop-free}, we will encode a proof rule for conditional expected values that requires establishing a lower \emph{and} a different upper bound. The latter can be described using a second \HeyVL procedure, see \Cref{sec:heyvl:structure}.
Furthermore, it is natural to break down large programs and/or complex proof rules into smaller (possibly mutually recursive) procedures, which can be verified modularly based on the truth of their verification conditions.


%
\begin{figure}[t]
		\begin{align*}
			&\proc{ex}{\varandtype{x}{UInt}}{\varandtype{y}{UInt}} 
			~~\intersem{\text{procedure that takes $x$ as input and returns the value of $y$}} 
			\\
			&\qquad\Requires{x + \sfrac{1}{2}}
			~~\intersem{\text{lower bound on the expected value of ${\colheylo{y}}$ after termination of the body}} \\
			&\qquad\Ensures{y} 
			~~\intersem{\text{quantity of interest evaluated in final states}} \\
			&\blockStart \\
			&\qquad {
				\color{heyvlColor}
				\stmtRasgn{y}{
					\pexp{\sfrac{1}{2}}{x} \pexpand \pexp{\sfrac{1}{2}}{x+1} 
				}
			} 
			~~\intersem{\text{returns the sum of $x$ plus outcome of a fair coin flip}}
			\\
			& \blockEnd
		\end{align*}%
		\caption{A \HeyVL procedure whose verification condition is equation (ex).}%
		\label{fig:heyvl:ex-procedure}%
\end{figure}%

\subsection{\HeyVL Procedures}\label{sec:heyvl:structure} 
A \HeyVL procedure consists of a name, a list of (typed) input and output variables, a body, and a quantitative specification.
Syntactically, a \HeyVL procedure is of the form
\begin{align*}
  &\proc{\procname}{\overline{\varandtype{\varin}{\typevar}}}{\overline{\varandtype{\varout}{\typevar}}}
  && \intersem{\text{procedure name $\procname$ with read-only inputs $\overline{\varin}$ and outputs $\overline{\varout}$}}
   \\
  &\qquad\Requires{\hla} && \intersem{\text{pre: \HeyLo formula over inputs }} \\
  &\qquad\Ensures{\hlb} && \intersem{\text{post: \HeyLo formula over inputs or outputs}} \\
  &\blockStart~\sstmt~\blockEnd && \intersem{\text{procedure body}}
\end{align*}%
where $\procname$ is the procedure's name, 
$\overline{\varin}$ and $\overline{\varout}$ are (possibly empty and pairwise distinct) lists of typed program variables called the \emph{inputs} and \emph{outputs} of $\procname$.
The specification is given by a \emph{pre} $\hla$ which is a \HeyLo formula over variables in $\overline{\varin}$ and a \emph{post} $\hlb$ which is also a \HeyLo formula but ranging over variables in $\overline{\varin}$ or $\overline{\varout}$.
The \emph{procedure body} $\sstmt$ is a \HeyVL statement, whose syntax and semantics will be formalized in \Cref{sec:heyvl-blocks-syntax,sec:heyvl-blocks-semantics}.

As mentioned above, the procedure $P$ gives rise to a verification condition, namely $\hla \heyloleq \wp{S}(\hlb)$.
However, this is only accurate if $S$ is an ordinary 
probabilistic program.
As our statements~$S$ may also contain non-executable\footnote{But expected value changing.} verification-specific \texttt{assume} and \texttt{assert} commands, the \emph{verification condition generated by $P$} is actually%
\begin{align*}
	\colheylo{\hla} \quad{\heyloleq}\quad \vc{\colheyvl{S}}(\colheylo{\hlb})~,
\end{align*}%
where $\symVc$ is the \emph{verification preexpectation transformer} that
 extends the aforementioned weakest preexpectation $\symWp$ by semantics for the verification-specific statements, see \Cref{sec:heyvl-blocks-semantics}.
For procedure calls, we approximate the weakest preexpectation based on the callee's specification to enable modular verification, see \Cref{sec:heyvl_proc_calls}.

Readers familiar with classical Boolean deductive verification may think of the verification condition~\mbox{$\colheylo{\hla} \heyloleq \vc{\colheyvl{S}}(\colheylo{\hlb})$} as a \emph{quantitative Hoare triple} $\triple{\colheylo{\hla}}{\colheyvl{S}}{\colheylo{\hlb}}$, where $\heyloleq$ takes the quantitative role of the Boolean ${\Longrightarrow}$, i.e.\ we have%
\[
   \triple{\hhla}{\sstmt}{\hhlb} ~\text{is valid} \quad\qiff\quad \hhla \quad\heyloleq\quad \vc{\sstmt}(\hhlb).
\]
Indeed, if $\hhla$ and $\hhlb$ are ordinary Boolean predicates and $\sstmt$ is a non-recursive non-probabilistic program, then $\triple{\hhla}{\sstmt}{\hhlb}$ is a standard Hoare triple: whenever state $\State$ satisfies precondition $\hhla$, then procedure body $\sstmt$ must successfully terminate on $\State$ in a state satisfying postcondition $\hhlb$.

Phrased differently: for every initial state~$\State$, the truth value $\hhla(\State)$ lower-bounds the \emph{anticipated} truth value (evaluated in $\State$) of postcondition $\hhlb$ after termination of $\sstmt$ on $\State$.
For arbitrary \HeyLo formulae $\hhla, \hhlb$ and probabilistic procedure bodies $\sstmt$, the second view generalizes to quantitative reasoning à la~\citet{mciverAbstractionRefinementProof2005}:
The quantitative triple $\triple{\hhla}{\sstmt}{\hhlb}$ is valid iff
the pre $\hhla$ lower-bounds the \emph{expected value} (evaluated in initial states) of the post $\hhlb$ after termination of $\sstmt$.
In \Cref{sec:heyvl_proc_calls}, we will describe how \emph{calling} a (verified) procedure $P$ can be thought of as \enquote{invoking} the validity of the quantitative Hoare triple that is given by $P$'s specification.

Notice that the above inequality is our definition of validity of a quantitative Hoare triple and we do not provide an operational definition of validity. 
This is due to a lack of an intuitive operational semantics for quantitative \texttt{assume} and \texttt{assert} statements (cf.~also \Cref{sec:conclusion}).

\paragraph{Examples.}
Besides \Cref{fig:heyvl:ex-procedure},
\Cref{fig:heyvl:n-dice,fig:heyvl:rabin} further illustrate how \HeyVL procedures specify quantitative program properties; we omit concrete procedure bodies $\sstmt$ to focus on the specification.
The procedure in \Cref{fig:heyvl:n-dice} specifies that the expected value of output $\colheylo{r}$ must be at least $\colheylo{3.5 \cdot n}$ -- a property satisfied by any statement $\sstmt$ that rolls $n$ fair dice.
The procedure in \Cref{fig:heyvl:rabin} specifies that the expected value of output $ok$ being true after termination of $\sstmt$, i.e.\ the probability that the returned value $ok$ will be true, is at least $\sfrac{2}{3}$ whenever input $i$ is greater than one -- a key property of Rabin's randomized mutual exclusion algorithm~\cite{kushilevitzRandomizedMutualExclusion1992} from \Cref{fig:intro:rabin} and discussed in the introduction.
Since we aim to reason about probabilities, we ensure that the post is one-bounded by considering $1 \sqcap \embed{ok}$ instead of $ \embed{ok}$.

\begin{figure}
	\begin{minipage}{0.5\textwidth}
		\begin{align*}
   		  &\proc{n\_dice}{\varandtype{n}{UInt}}{\varandtype{r}{UReal}} \\
   		  &\qquad\Requires{3.5 \cdot n} \\
  		  &\qquad\Ensures{r} \\
  		  &\blockStart~\sstmt~\blockEnd
		\end{align*}%
		\vspace{-1.5em}
	    \caption{Expected sum of rolling $n$ fair dice.}
	    \label{fig:heyvl:n-dice}
	\end{minipage}%
	\begin{minipage}{0.5\textwidth}
		\begin{align*}
   		  &\proc{rabin}{\varandtype{i}{UInt}}{\varandtype{ok}{Bool}} \\
   		  &\qquad\Requires{\nicefrac{2}{3} \sqcap \embed{1 < i}} \\
  		  &\qquad\Ensures{1 \sqcap \embed{ok}} \\
  		  &\blockStart~\sstmt~\blockEnd
		\end{align*}%
		\vspace{-1.5em}
	    \caption{Rabin's mututal exclusion property.}
	    \label{fig:heyvl:rabin}
	\end{minipage}
\end{figure}
\paragraph{Coprocedures -- Duals to Procedures.}
Proving \emph{upper} bounds is often relevant for quantitative verification, e.g.\ when analyzing expected runtimes of randomized algorithms (cf. ~\cite{kaminskiWeakestPreconditionReasoning2018}).
\HeyVL also supports \emph{co}procedures which give rise to the dual verification condition $\hhla \heylogeq \vc{\sstmt}(\hhlb)$.\footnote{Notice $\heylogeq$ for coprocedures as opposed to $\heyloleq$ for procedures.}
The syntax of coprocedures is analogous to \HeyVL procedures; the only difference is the keyword $\symcoProc$ instead of $\symProc$.
For example, a coprocedure which was defined as in 
\Cref{fig:heyvl:n-dice} (except for replacing $\symProc$ by $\symcoProc$)
would specify that the expected value of output $\colheylo{r}$ must be \emph{at most} $\colheylo{3.5 \cdot n}$.
We demonstrate in \Cref{sec:encodings} that intricate verification techniques for probabilistic programs may require lower \emph{and} upper bound reasoning, i.e.\ \HeyVL programs that are collections of both procedures and coprocedures.
\paragraph{\HeyVL Programs.}
%
To summarize, a \HeyVL \emph{program} is a list of procedures and coprocedures that each give rise to a verification condition, i.e.\ a \HeyLo inequality. 
We say that a \HeyVL program \emph{verifies} iff all verification conditions of its (co)procedures hold.
\paragraph{Design Decisions.}
Since \HeyVL is an intermediate language, we favor simplicity over convenience.
In particular, we require procedure inputs to be read-only, i.e.\ evaluate to the same values in initial and final states.
Moreover, \HeyVL has no loops and no global variables.
All variables that can possibly be modified by a procedure call are given by its outputs.
All of the above restrictions can be lifted by high-level languages that encode to \HeyVL.

\subsection{Syntax of \HeyVL Statements}\label{sec:heyvl-blocks-syntax}
%
\HeyVL statements, which appear in procedure bodies,
provide a programming-language-style to express and approximate \emph{expected values} arising in the verification of 
probabilistic programs, including expected outcomes of program variables, reachability probabilities such as the probability of termination, and expected rewards. 
\HeyVL statements consist of 
(a) \emph{standard constructs} such as assignments, sampling from discrete probability distributions, sequencing, and nondeterministic branching, and (b) \emph{verification-specific constructs} for modeling rewards 
such as runtime, quantitative assertions and assumptions, and for forgetting values of program variables in the current state.

The syntax of \HeyVL statements $\stmt$ is given by the grammar
\begin{center}%
\vspace*{-.5\baselineskip}%
\begin{minipage}[t]{0.3\textwidth}%
	\abovedisplayskip=0pt%
	\begin{align*}
		\stmt \morespace{\Coloneqq}&  \stmtDeclInit{x}{\typevar}{\mu} \\
		\vert~& \stmtAsgn{x_1,\ldots,x_n}{P(e_1,\ldots,e_m)} \\
		\vert~& \stmtTick{\aexpr} \\
		\vert~& \stmtSeq{\stmt}{\stmt} \\
	\end{align*}%
\end{minipage}\hfil%
\begin{minipage}[t]{0.3\textwidth}%
	\abovedisplayskip=0pt%
	\begin{align*}
		\vert~& \stmtDemonic{\stmt}{\stmt} \\
		\vert~& \Assert{\hlb} \\
		\vert~& \Assume{\hlb} \\
		\vert~& \Havoc{x} \\
		\vert~& \Validate
	\end{align*}%
\end{minipage}\hfil%
\begin{minipage}[t]{0.3\textwidth}%
	\abovedisplayskip=0pt%
	\begin{align*}
		\vert~& \stmtAngelic{\stmt}{\stmt} \\
		\vert~& \coAssert{\hlb} \\
		\vert~& \coAssume{\hlb} \\
		\vert~& \coHavoc{x} \\
		\vert~& \coValidate~,
	\end{align*}%
\end{minipage}%
\end{center}%
\medskip%
where $x \in \Vars$ is of type $\typevar$,  $a$ is an arithmetic expression, 
and $\hlb$ is a \HeyLo formula. 
Moreover, $\mu$ is a \emph{distribution expression} of type $\typevar$\footnote{$\mu$ can be instantiated with more general distribution expressions as long as the $\symVc$ semantics (cf.\ \Cref{sec:heyvl-blocks-semantics}) is computable.}
\[
\mu \eeq \pexp{p_1}{\termvar_1} \pexpand \ldots \pexpand \pexp{p_n}{\termvar_n} 
\]
with $n \geq 1$, where each $p_i$ is a term of type $[0,1]$, each $\termvar_i$ is a term of type $\typevar$, and $\sum_{i=1}^n\interpretsimpleState{p_i} = 1$ for every state $\State$. A distribution expression $\mu$ represents finite-support probability distributions, which assign probability $p_i$ to each $\termvar_i$. We often write $\exprFlip{p}$ instead of $\pexp{p}{\true} \pexpand \pexp{(1-p)}{\false}$.

We briefly go over the above constructs.
$\stmtDeclInit{x}{\typevar}{\mu}$ is a \emph{probabilistic assignment} which assigns to variable $x$ a value \emph{sampled} from the probability distribution described by $\mu$. 
The statement $\stmtAsgn{x_1,\ldots,x_n}{P(e_1,\ldots,e_m)}$ is a (co)procedure call.
We can think of it as passing the parameters~\mbox{$e_1,\ldots,e_m$} to (co)procedure $P$, executing $P$'s body, and assigning the return values to variables~\mbox{$x_1,\ldots,x_n$}.
The statement $\stmtTick{a}$ collects/accumulates/adds a reward of $a$, modeling e.g.~progression in (run)time or resource consumption. 
$\stmtSeq{\stmt_1}{\stmt_2}$ puts $\HeyVL$ statements in sequence. 
$\stmtCdot{\stmt_1}{\stmt_2}$ is a \emph{nondeterministic} choice between $\stmt_1$ and $\stmt_2$, where $\cdot$ determines whether the nondeterminisim is resolved in a minimizing ($\sqcap$) or maximizing ($\sqcup$) manner. $\Assert{\hlb}$ and $\Assume{\hlb}$ are quantitative generalizations of assertions and assumptions from classical IVLs. $\coAssert{\hlb}$ and $\coAssume{\hlb}$ are novel statements that enable reasoning about upper bounds; there is yet no analogue in classical verification infrastructures.

 $\Havoc{x}$ and $\coHavoc{x}$ forget the current value of $x$ by branching nondeterministically over all possible values of $x$ either in a minimizing ($\Havoc{x}$) or maximizing ($\coHavoc{x}$) manner.  
 Finally, $\Validate$ and $\coValidate$ turn \emph{quantitative} expectations into \emph{qualitative} expressions, much in the flavor of validation and covalidation described earlier (see \Cref{subsec:syntactic_sugar}).

\paragraph{Declarations and Types.}
We assume that all local variables (those that are neither inputs nor outputs) are initialized by an assignment before they are used; those assignments also declare the variables' types.
If we assign to an already initialized variable, we often write $\stmtRasgn{x}{\mu}$ instead of $\stmtDeclInit{x}{\typevar}{\mu}$. Moreover, if $\mu$ is a \emph{Dirac} distribution, i.e.\ if $p_1 = 1$, we often write $\stmtAsgn{x}{\termvar_1}$ instead of $\stmtRasgn{x}{\mu}$. Finally, we assume that all programs and associated \HeyLo formulae are well-typed.

\subsection{Semantics of \HeyVL Statements}
\label{sec:heyvl-blocks-semantics}

\begin{figure}
	\begin{minipage}{0.5\textwidth}
		\begin{center}
			\adjustbox{width=\textwidth}{%
			\renewcommand*{\arraystretch}{1.5}%
			\begin{tabular}{l@{\quad}l@{\quad}}
				\toprule
				$\stmt$ & $\vc{\stmt}(\hla)$ \\
				\midrule
				\multirow{2}{*}{$\stmtDeclInit{x}{\typevar}{\mu}$} & $p_1 \cdot \hla\substBy{x}{\termvar_1}$ \\
					& \quad$+ \ldots + p_n \cdot \hla\substBy{x}{\termvar_n}$ \\
				%
				\stmtDemonic{\stmt_1}{\stmt_2} & $\vc{\stmt_1}(\hla) \morespace{\sqcap} \vc{\stmt_2}(\hla)$\hspace*{1em} \\
				\Assert{\hlb} & $\hlb \sqcap \hla$ \\
				\Assume{\hlb} & $\hlb \rightarrow \hla$ \\
				\Havoc{x} & $\iquant{x}{\hla}$ \\
				\Validate & $\heylovalidate{\hla}$ \\
				\bottomrule
			\end{tabular}}
		\end{center}
	\end{minipage}%
	\begin{minipage}{0.5\textwidth}
		\begin{center}
			\adjustbox{width=\textwidth}{%
			\renewcommand*{\arraystretch}{1.5}%
			\begin{tabular}{@{\quad}l@{\quad}l}
				\toprule
				$\stmt$ & $\vc{\stmt}(\hla)$ \\
				\midrule
				\stmtTick{\aexpr} & $\hla + \aexpr$ \\
				\stmtSeq{\stmt_1}{\stmt_2} & $\vc{\stmt_1}\bigl(\vc{\stmt_2}(\hla)\bigr)$ \\
				\stmtAngelic{\stmt_1}{\stmt_2} & $\vc{\stmt_1}(\hla) \morespace{\sqcup} \vc{\stmt_2}(\hla)$\hspace*{1em} \\
				\coAssert{\hlb} & $\hlb \sqcup \hla$ \\
				\coAssume{\hlb} & $\hlb \coimpl \hla$ \\
				\coHavoc{x} & $\squant{x}{\hla}$ \\
				\coValidate & $\heylocovalidate{\hla}$ \\
				\bottomrule
			\end{tabular}}
		\end{center}
	\end{minipage}%
	\caption{Semantics of \HeyVL statements. Here $\mu = \pexp{p_1}{\termvar_1} \pexpand \ldots \pexpand \pexp{p_n}{\termvar_n}$ and $\hla\substBy{x}{\termvar_i}$ is the formula obtained from substituting every occurrence of $x$ in $\hla$ by $\termvar_i$ in a capture-avoiding manner. For procedure calls, see \Cref{sec:heyvl_proc_calls}.}
	\label{fig:heyvl-semantics}
\end{figure}
Inspired by weakest preexpectations~\cite{mciverAbstractionRefinementProof2005,kaminskiAdvancedWeakestPrecondition2019},
we give semantics to \HeyVL statements as a backward-moving continuation-passing style \HeyLo transformer%
	\[
	      \vc{\stmt} \colon \HeyLo \to \HeyLo~
	\]
by induction on $\stmt$ in \Cref{fig:heyvl-semantics}.
(Co)procedure calls are treated separately in \Cref{sec:heyvl_proc_calls}.
We call $\vc{\stmt}(\hla)$ the \emph{verification preexpectation} of $\stmt$ with respect to post $\hla$.
Intuitively, $\interpretsimple{\vc{\stmt}(\hla)}(\State)$ is the expected value of $\hla$ w.r.t.\ the distribution of final states obtained from \enquote{executing}\footnote{Some verification-specific statements are not really \emph{executable} but serve the purpose of manipulating expected values.} $\stmt$ on $\State$. 
The post $\hla$ is either given by the surrounding procedure declaration or can be thought of as the verification preexpectation described by the \emph{remaining} \HeyVL statement: for $\stmt = \stmtSeq{\stmt_1}{\stmt_2}$, we first obtain the intermediate verification preexpectation $\vc{\stmt_2}(\hla)$ --- the expected value of what remains after executing $\stmt_1$ --- and pass this into $\vc{\stmt_1}$.

\paragraph{Random Assignments}
The expected value of $\hla$ after executing $\stmtDeclInit{x}{\typevar}{\mu}$ is the weighted sum $p_1 \cdot \hla\substBy{x}{\termvar_1} + \ldots + p_n \cdot \hla\substBy{x}{\termvar_n}$, where each $p_i$ is the probability that $x$ is assigned $\termvar_i$.

\paragraph{Rewards} 
Suppose that the post $\hla$ captures the expected reward collected in an execution that follows \emph{after} executing $\stmtTick{\aexpr}$.
Then the entire expected reward is given by $\hla + \aexpr$.


\paragraph{Nondeterministic Choices}
$\vc{\stmtCdot{\stmt_1}{\stmt_2}}(\hla)$ is the pointwise minimum ($\cdot = \sqcap$) or maximum ($\cdot = \sqcup$) of the expected values obtained from $\stmt_1$ and $\stmt_2$, respectively.

\paragraph{(Co)assertions}
In \emph{classical} intermediate verification languages, the statement $\stmtAssert{A}$ for some predicate $A$ models a proof obligation: All states reaching $\stmtAssert{A}$ on some execution must satisfy~$A$. In terms of classical weakest preconditions, $\stmtAssert{A}$ transforms a postcondition $B$ to
\[
\wp{\stmtAssert{A}}(B) \eeq A \wedge B~.
\]
In words, $\stmtAssert{A}$ \emph{caps} the truth of postcondition $B$ at $A$: all lower-bounds on the above weakest precondition (in terms of the Boolean lattice $(\States \to \Bools,\, {\Rightarrow})$) must not exceed $A$.

This perspective generalizes well to our quantitative assertions: Given a \HeyLo formula $\hlb$, the statement $\stmtAssert{\hlb}$ \emph{caps} the post at $\hlb$. Thus, analogously to classical assertions, all \emph{lower} bounds on the verification preexpectation $\vc{\stmtAssert{\hlb}}(\hla)$ (in terms of $\heyloleq$) must not exceed $\hlb$.

Coassertions are dual to assertions: $\coAssert{\hlb}$ \emph{raises} the post $\hla$ to at least $\hlb$. 
Hence, all \emph{upper} bounds on $\vc{\coAssert{\hlb}}(\hla)$ must not \emph{sub}ceed $\hlb$.

\paragraph{(Co)assumptions}
In the classical setting, the statement $\stmtAssume{A}$ for some predicate $A$ \emph{weakens} the verification condition: verification succeeds vacuously for all states not satisfying $A$. In terms of classical weakest preconditions, $\stmtAssume{A}$ transforms a postcondition $B$ to
\[
\wp{\stmtAssume{A}}(B) \eeq A \impl B~
\]
i.e.\ $\stmtAssume{A}$ \emph{lowers} the threshold at which the post $B$ is considered $\true$ (the top element of the Boolean lattice) to $A$. Indeed, if we identify $\true =1$ and $\false =0$, then
\[
   \interpretsimple{\wp{\stmtAssume{A}}(B)}(\sigma) \eeq 
   \begin{cases}
   	 1, & \text{if $\interpretsimple{A}(\sigma) \leq \interpretsimple{B}(\sigma)$} \\
   	 \interpretsimple{B}(\sigma), &\text{otherwise}~.
   \end{cases}
\]

The above perspective on classical assumptions generalizes to our quantitative assumptions. Given a $\HeyLo$ formula $\hlb$, $\stmtAssume{\hlb}$ lowers the threshold above which the post $\hla$ is considered entirely true (i.e.\ $\infty$ -- the top element of the lattice of expectations) to $\hlb$. 
Formally, 
\[
   \interpretsimple{\vc{\stmtAssume{\hlb}}(\hla)}(\sigma)
   \eeq 
   \begin{cases}
   	\infty, & \text{if $\interpretsimple{\hlb}(\sigma) \leq \interpretsimple{\hla}(\sigma)$} \\
   	\interpretsimple{\hla}(\sigma), & \text{otherwise}~.
   \end{cases}
\]
Reconsider \Cref{fig:assume} on \cpageref{fig:assume}, which illustrates $\vc{\stmtAssume{5}}(x)$: $\stmtAssume{5}$ lowers the threshold at which the post $x$ is considered entirely true to $5$, i.e.\ whenever the post-expectation $x$ evaluates at least to $5$, then $\vc{\stmtAssume{5}}(x)$ evaluates to $\infty$.
Notice furthermore that our quantitative $\symAssume$ is backward compatible to the classical one in the sense that $\vc{\stmtAssume{\embed{\bexpr}}}(\hla)$ evaluates to $\hla$ for every state satisfying $\bexpr$, and to $\infty$ otherwise.

Coassumptions are dual to assumptions. $\coAssume{\hlb}$ raises the threshold at which the post $\hla$ is considered entirely false (i.e.\ $0$ -- the bottom element of the lattice of expectations) to $\hlb$. 
Reconsider \Cref{fig:coassume} on \cpageref{fig:coassume} illustrating $\vc{\coAssume{5}}(x)$: 
$\coAssume{5}$ raises the threshold below which the post $x$ is considered entirely false to $5$, i.e.\ if the post $x$ evaluates at most to $5$, then $\vc{\coAssume{5}}(x)$ \mbox{evaluates to $0$.}

\begin{example}[Modeling Conditionals]\label{ex:ite}
	We did not include $\stmtIf{\bexpr}{\stmt_1}{\stmt_2}$ for conditional branching in \HeyVL's grammar.
	 We can encode it as follows (and will use it from now on):
	\[
	    \stmtDemonic{\stmtSeq{\stmtAssume{\embed{\bexpr}}}{\stmt_1} }
	    {\stmtSeq{\stmtAssume{\embed{\neg\bexpr}}}{\stmt_2}}
	\]
	The $\symVc$ semantics of this statement is analogous to the formula described in \Cref{ex:heylo_ifthenelse} and complies with our above description of assumptions: Depending on the satisfaction of $\bexpr$ by the current state $\sigma$, the $\symVc$ of $\stmt$ either evaluates to the $\symVc$ of $\stmt_1$ or $\stmt_2$, respectively.
\end{example}

\paragraph{(Co)havocs} 
In the classical setting, $\stmtHavoc{x}$ 
forgets the current value of $x$ by universally quantifying over all possible initial values of $x$.
In terms of classical weakest preconditions, we have
\[
   \wp{\stmtHavoc{x}}(B) \eeq \quant{\forall \typeof{x}{\typevar}}{B}~,
\]
i.e.\ $\stmtHavoc{x}$ \emph{minimizes} the post $B$ under all possible values for $x$, thus requiring $B$ to hold for all $x$. This perspective generalizes to our quantitative setting: In terms of $\symVc$, $\stmtHavoc{x}$ forgets the current value of $x$ by minimizing the post-expectation under all possible values of $x$.
Dually, $\coHavoc{x}$ forgets the value of $x$ but this time \emph{maximizes} the post-expectation under all possible values for $x$.%

\paragraph{(Co)validations} 
These statements convert quantitative statements into qualitative ones by casting expectations into the $\{0,\infty\}$-valued realm, thus eradicating intermediate truth values strictly between 0 and $\infty$. 
Their classical analogues would be effectless, as the Boolean setting features no intermediate truth values.
We briefly explained in \Cref{subsec:syntactic_sugar} how such a conversion to a qualitative statement works in \HeyLo. An example will be discussed in \Cref{sec:encodings:park}.

\subsection{Properties of \HeyVL Statements}
\label{sec:heyvl-properties}
We study two properties of \HeyVL. First, our $\symVc$ semantics is \emph{monotonic} --- a crucial property for encoding proof rules (cf.\ \Cref{sec:heyvl_proc_calls}).%
\begin{restatable}[Monotonicity of $\symVc$]{theorem}{thmHeyvlMonotonicity}
	\label{thm:monotonicity}
	For all \HeyVL statements $\stmt$ and \HeyLo formulae $\hla,\hla'$,
	\[
	\hla \hheyloleq \hla' \quad\text{implies}\quad \vc{\stmt}(\hla) \hheyloleq \vc{\stmt}(\hla')~.
	\]
\end{restatable}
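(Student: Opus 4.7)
The plan is to prove monotonicity by structural induction on the \HeyVL statement $\stmt$, following the cases of the defining equations in \Cref{fig:heyvl-semantics}. Since the statement only concerns the constructs defined in that figure (procedure calls are treated separately in \Cref{sec:heyvl_proc_calls}), induction on the syntactic structure of $\stmt$ suffices. Throughout, I would exploit that the relation $\hheyloleq$ is defined pointwise, so each case reduces to a pointwise inequality on $\PosRealsInf$ that can be checked at an arbitrary state $\State$.

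The easy cases collect themselves from elementary facts: for $\stmtTick{\aexpr}$, $\stmtDeclInit{x}{\typevar}{\mu}$, and $\stmtSeq{\stmt_1}{\stmt_2}$ monotonicity reduces respectively to monotonicity of $(\cdot) + \aexpr$ on $\PosRealsInf$, to monotonicity of positive-weighted sums (using the fact that $\hla \hheyloleq \hla'$ implies $\hla\substBy{x}{\termvar_i} \hheyloleq \hla'\substBy{x}{\termvar_i}$, since substitution corresponds semantically to evaluation in a shifted state), and to two applications of the induction hypothesis. The cases $\stmtCdot{\stmt_1}{\stmt_2}$, $\Assert{\hlb}$, $\coAssert{\hlb}$, $\Havoc{x}$, $\coHavoc{x}$ all reduce to monotonicity of $\sqcap$, $\sqcup$, $\inf$, and $\sup$, which hold pointwise on $\PosRealsInf$. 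The validation cases $\Validate$ and $\coValidate$ also follow directly: if $\hla(\State) \leq \hla'(\State)$, then $\hla(\State) = \infty$ implies $\hla'(\State) = \infty$ (monotonicity of $\triangle$), and $\hla(\State) \neq 0$ trivially implies $\hla'(\State) \neq 0$ in the case needed for $\triangledown$.

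The one nontrivial place is the (co)assumption, because implication is not monotonic in its first argument but we do need monotonicity in the second. Concretely, for $\Assume{\hlb}$ I would fix a state $\State$ and write $b = \interpretsimpleState{\hlb}$, $a = \interpretsimpleState{\hla}$, $a' = \interpretsimpleState{\hla'}$ with $a \leq a'$, and then do a small case split using the semantics of $\impl$ from \Cref{fig:heylo-semantics}: if $b \leq a$, then also $b \leq a'$, so both sides are $\infty$; if $b > a'$, both sides equal $a$ resp.\ $a'$ and $a \leq a'$; and in the remaining case $a < b \leq a'$ the left side is $a$ while the right side is $\infty$, so the inequality still holds. The coassumption case is entirely dual with the roles of $0$ and $\infty$ swapped.

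The main obstacle — and really the only delicate point — is this case analysis for $\Assume{\hlb}$ and $\coAssume{\hlb}$: the non-monotonic shape of implication in its left argument makes one briefly worry that the result fails, and one must be careful with the threshold semantics of $\impl$ and $\coimpl$. Once that case is written out, the induction goes through uniformly, and the theorem is established. I would not reprove the underlying pointwise facts about $\sqcap$, $\sqcup$, $\inf$, $\sup$, $+$, and scaling by $p_i \geq 0$, as these are standard order-theoretic observations on the complete lattice $(\PosRealsInf, \leq)$.
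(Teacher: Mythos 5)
Your proof is correct and follows essentially the same route as the paper's: structural induction on $\stmt$, with all cases reducing to pointwise order-theoretic facts except $\Assume{\hlb}$ and $\coAssume{\hlb}$, where the paper performs exactly the three-way case split you describe (both thresholds met, neither met, and the crossing case $a < b \leq a'$ where the right-hand side jumps to $\infty$). No gaps.
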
%
\noindent%
Furthermore, \HeyVL conservatively extends an existing IVL for non-probabilistic programs due to \mbox{\citet{mullerBuildingDeductiveProgram2019}} in the following sense:%
\begin{restatable}[Conservativity of \HeyVL]{theorem}{thmHeyvlConservativity}
	\label{thm:heyvl-conservativity}
	Let $C$ be a program in the programming language of \textnormal{\citet{mullerBuildingDeductiveProgram2019}} and let $B$ be a postcondition. 
	Moreover, let $\overline{C}$ be obtained by replacing every $\stmtAssert{A}$ and every $\stmtAssume{A}$ occurring in~$C$ by $\stmtAssert{\embed{A}}$ and $\stmtAssume{\embed{A}}$, respectively (cf.~Boolean embeddings, \textnormal{\Cref{sec:heylo-sem}}). 
	Then
	\abovedisplayskip=0pt%
	\begin{align*}
		\embed{
			\:\underbrace{
				\wp{C}(B)
			}_{
				\mathclap{
					\text{
						verification condition obtained from \textnormal{\cite{mullerBuildingDeductiveProgram2019}}
					}
				}
			}\:
		}
		\qquad\equiv\qquad 
		\overbrace{\vc{\overline{C}}(\embed{B})}^{\smash{\HeyVL}}
	  ~.
	\end{align*}%
	\normalsize%
\end{restatable}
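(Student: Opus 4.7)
The plan is to prove the equivalence by structural induction on $C$, where the induction hypothesis is stated uniformly over all postconditions $B$ (i.e.\ the property proved inductively is $\forall B\colon \embed{\wp{C}(B)} \equiv \vc{\overline{C}}(\embed{B})$). Before the induction, I would establish three preliminary distributivity lemmas for the Boolean embedding operator, each proved by a direct unfolding of the \HeyLo semantics in \Cref{fig:heylo-semantics}:
\begin{align*}
  \embed{\bexpr_1 \wedge \bexpr_2} \:\equiv\: \embed{\bexpr_1} \sqcap \embed{\bexpr_2},
  \quad
  \embed{\bexpr_1 \to \bexpr_2} \:\equiv\: \embed{\bexpr_1} \impl \embed{\bexpr_2},
  \quad
  \embed{\forall \quant{\typeof{x}{\typevar}}{\bexpr}} \:\equiv\: \iquant{\typeof{x}{\typevar}}{\embed{\bexpr}}.
\end{align*}
Each holds by the fact that $\embed{\cdot}$ sends $\true$ to $\infty$ and $\false$ to $0$: under this mapping $\wedge$ becomes pointwise $\min$, a classical $\to$ becomes the \HeyLo implication (by a case split on whether $\bexpr_1$ holds), and $\forall$ becomes the infimum quantifier over a two-valued co-domain.

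The base cases of the induction then reduce to these lemmas combined with the definitions of $\symWp$ from \citet{mullerBuildingDeductiveProgram2019} and $\symVc$ from \Cref{fig:heyvl-semantics}. Concretely, for $\stmtAssert{A}$ one has $\embed{\wp{\stmtAssert{A}}(B)} = \embed{A \wedge B} \equiv \embed{A} \sqcap \embed{B} = \vc{\Assert{\embed{A}}}(\embed{B})$; for $\stmtAssume{A}$ one gets $\embed{A \to B} \equiv \embed{A} \impl \embed{B} = \vc{\Assume{\embed{A}}}(\embed{B})$; and for $\stmtHavoc{x}$ one gets $\embed{\forall\,\typeof{x}{\typevar}.~B} \equiv \iquant{\typeof{x}{\typevar}}{\embed{B}} = \vc{\Havoc{x}}(\embed{B})$. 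Assignments and $\symSkip$ are immediate since substitution commutes with the embedding. Nondeterministic choice also reduces to the first embedding lemma above (i.e.\ $\wedge$ maps to $\sqcap$), combined with applying the induction hypothesis to each branch.

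The sequencing case is where the main care is required. By the induction hypothesis applied to $C_2$, we have $\embed{\wp{C_2}(B)} \equiv \vc{\overline{C_2}}(\embed{B})$. Invoking the induction hypothesis on $C_1$ with postcondition $\wp{C_2}(B)$ yields $\embed{\wp{C_1}(\wp{C_2}(B))} \equiv \vc{\overline{C_1}}(\embed{\wp{C_2}(B)})$. To conclude we need $\vc{\overline{C_1}}$ to respect \HeyLo equivalence, i.e.
\begin{align*}
  \vc{\overline{C_1}}\bigl(\embed{\wp{C_2}(B)}\bigr) \:\equiv\: \vc{\overline{C_1}}\bigl(\vc{\overline{C_2}}(\embed{B})\bigr),
\end{align*}
which follows from monotonicity of $\symVc$ (\Cref{thm:monotonicity}) applied in both directions.

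The main obstacle I anticipate is precisely this congruence step in the sequencing case: the induction must be formulated with $B$ universally quantified so that the substitution of $\wp{C_2}(B)$ into the postcondition slot for $C_1$ is legitimate, and one must check that all $\symVc$ cases used preserve $\equiv$. Beyond this, the remainder of the argument is a routine structural unfolding of definitions against the three embedding lemmas.
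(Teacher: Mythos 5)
Your proposal is correct and follows essentially the same route as the paper's proof: structural induction on $C$ with the postcondition quantified universally, the three embedding distributivity facts for $\wedge$, $\to$/$\Rightarrow$, and $\forall$ handling the base cases, and a double application of the induction hypothesis for sequencing. The only difference is that you make explicit the congruence step $\vc{\overline{C_1}}(\embed{\wp{C_2}(B)}) \equiv \vc{\overline{C_1}}(\vc{\overline{C_2}}(\embed{B}))$ and justify it via \Cref{thm:monotonicity} applied in both directions, whereas the paper treats this substitution of equivalents as immediate; your version is a harmless (indeed slightly more careful) refinement.
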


\subsection{Procedure Calls}
\label{sec:heyvl_proc_calls}
We conclude this section with a treatment of (co)procedure calls.
%
%
Consider a callee \emph{procedure} $\procname$ as shown in \Cref{fig:calls:proc}. 
\begin{figure}
\begin{align*}
  &\proc{\procname}{\varandtype{x_1}{\typevar_1}, \ldots, \varandtype{x_n}{\typevar_n}}{\varandtype{y_1}{\typevar_1'}, \ldots, \varandtype{y_m'}{\typevar_m}} \\
  &\qquad\Requires{\hlc}  \\
  &\qquad\Ensures{\hlb} \\
  &\blockStart~\sstmt~\blockEnd
\end{align*}
\caption{A procedure $\procname$. We encode calls $\stmtAsgn{z_1,\ldots,z_n}{P(t_1,\ldots,t_n)}$ for arbitrary probabilistic statements $\sstmt$.}
\label{fig:calls:proc}	
\end{figure}
Intuitively, the effect of a call $\stmtAsgn{z_1,\ldots,z_m}{P(t_1,\ldots,t_n)}$
corresponds to
(1) initializing $\procname$'s formal input parameters $x_1,\ldots,x_n$
with the arguments $t_1,\ldots,t_n$,
(2) inlining $\procname$'s body $\sstmt$, and 
(3) assigning to $z_1,\ldots,z_m$ the values of outputs $y_1,\ldots,y_m$.
The semantics of $\stmtAsgn{z_1,\ldots,z_m}{P(t_1,\ldots,t_n)}$ can be thought of as the statement\footnote{For the sake of simplicity, we ignore potential scoping issues arising if $\sstmt$ uses variables that are declared in the calling context; these issues can be resolved by a straightforward yet tedious variable renaming.}
\begin{align*}
	\underbrace{
		\stmtAsgn{x_1}{t_1}\symSemi
		\ldots\symSemi
		\stmtAsgn{x_n}{t_n}
	}_{
		\mathclap{
			\qquad\qquad\qquad\qquad\qquad{}~{}\quad
			{}\eqqcolon{}~ 
			\mathit{init}\quad \text{(initialize procedure inputs)}
		}
	}
	\symSemi
	\quad
	\overbrace{\sstmt\symSemi}^{\mathclap{\text{inlining of the procedure body}}}
	\quad
	\underbrace{
		\stmtAsgn{z_1}{y_1}\symSemi
		\ldots\symSemi
		\stmtAsgn{z_m}{y_m}
	}_{
		\mathclap{
			\qquad\qquad\qquad\qquad\qquad{}~{}~{}\quad
			{}\eqqcolon{}~ 
			\mathit{return}\quad \text{(assign procedure outputs)}
		}
	}~.
\end{align*}%
There are two main issues that arise when we would actually inline $S$ at every call-site:
(1) For recursive procedure calls~\cite{olmedoReasoningRecursiveProbabilistic2016}, we would need to define a (non-computable) fixed point semantics for the $\symVc$ transformer.
Our goal, however, is to render verification feasible in practice, so we would like to avoid fixed point computations. 
(2) Even without recursive calls, we would have to re-verify $\sstmt$ at every call-site, which would not scale.

We thus do not inline the procedure body but use an \emph{encoding} $\sstmt_\mathit{encoding}$ which \emph{underapproximates} the effect of $\sstmt$ in the sense that $\vc{\sstmt_\mathit{encoding}}(\hla) \heyloleq \vc{\sstmt}(\hla)$ for all \HeyLo formulae~$\hla$.
By monotonicity of $\symVc$, we can then verify lower bounds for calls:
for all $\hla,\hld \in \HeyLo $, 
\[
  \hld \hheyloleq
  \vc{\underbrace{\mathit{init}\symSemi\sstmt_\mathit{encoding}\symSemi\mathit{return}}_{\mathclap{\text{modular encoding of calls}}}}(\hla)
  \qquad\text{implies}\qquad
  \hld \hheyloleq  
  \vc{\underbrace{\mathit{init}\symSemi\sstmt\symSemi\mathit{return}}_{\mathclap{\text{actual inlining of calls}}}}(\hla)~,
\]
so whenever we can verify a \HeyVL program using the modular encoding, we could have also verified it using inlining.
The advantage of the modular encoding is that $\sstmt_\mathit{encoding}$ does not contain the procedure body -- it could be changed without requiring re-verification of call sites, so long as the updated procedure body still adheres to the procedure's specification.
To construct $\sstmt_\mathit{encoding}$, we leverage only $\procname$'s specification pre $\hhlc$ and post $\hhlb$, cf.~\Cref{fig:calls:proc}:
Assuming that $P$ verifies, we can safely assume that $P$'s verification condition -- namely $\hhlc \heyloleq \vc{\sstmt}(\hhlb)$ -- holds.\footnote{Otherwise, procedure $\procname$ in \Cref{fig:calls:proc} does not verify and verification of the whole \HeyVL program fails anyway.}
By monotonicity of~$\symVc$, we have $\hhlc \heyloleq \vc{\sstmt}(\hhlb) \heyloleq \vc{\sstmt}(\hla)$ whenever $\hhlb \heyloleq \hla$ holds.
To underapproximate $\vc{\sstmt}(\hla)$, we construct $\sstmt_\mathit{encoding}$ such that $\vc{\sstmt_\mathit{encoding}}(\hla)$ is the known lower bound $\hhlc$ if $\hhlb \heyloleq \hla$; otherwise, it is the trivial lower bound $0$.
So how do we construct $\sstmt_\mathit{encoding}$ concretely?
In classical verification infrastructures (cf.~\cite{mullerBuildingDeductiveProgram2019}), $\sstmt_\mathit{encoding}$ corresponds to the statement
\[
  \stmtAssert{\hhlc}\symSemi
  \stmtHavoc{z_1}\symSemi \ldots \symSemi\stmtHavoc{z_m}\symSemi
  \stmtAssume{\hhlb}.
\]
That is, we assert the procedure's pre $\hhlc$ before the call, forget the values of all outputs, i.e.\ variables that are potentially modified by the call, and assume the procedure's post $\hhlb$ after the call.
Phrased in terms of underapproximations:
We assert that we have at most $\hhlc$ before the call and, while minimising over all possible outputs (using the havoc statements), lower the threshold at which the post is considered entirely true (i.e.\ $\infty$) to $\hhlb$, i.e.\ whenever $\hhlb$ lower-bounds the post.

The intuition underlying the above \HeyVL statement works for encoding procedure calls of non-probabilistic programs. However, there is a subtle \emph{unsoundness} that arises when reasoning about \emph{expected} behaviors.
\Cref{fig:calls:cex} shows two procedures, $\mathit{foo}$ and $\mathit{bar}$.
\begin{figure}
\begin{minipage}{0.5\textwidth}
\begin{align*}
  & \proc{\mathit{foo}}{\varandtype{x}{\uint}}{} \\
  &\qquad\Requires{x}  \\
  &\qquad\Ensures{2\cdot x} \\
  &\blockStart ~~ \intersem{\text{verifies: } x \heyloleq 2\cdot x \sqcap 0.5 \cdot \infty} \\
  &\qquad \stmtDeclInit{b}{\bool}{\pexp{0.5}{\true} \pexpand \pexp{0.5}{\false}}\symSemi \\
  &\qquad \stmtAssert{\embed{b}}~\blockEnd
\end{align*}
\end{minipage}%
\begin{minipage}{0.5\textwidth}
\begin{align*}
  & \proc{\mathit{bar}}{\varandtype{x}{\uint}}{} \\
  &\qquad\Requires{x}  \\
  &\qquad\Ensures{x} \\
  &\blockStart ~~ \intersem{\text{verifies: } x \heyloleq x \sqcap (2\cdot x \impl x) } \\
  &\qquad \intersem{\text{encoding of }\mathit{foo}(x)} \\
  &\qquad \stmtAssert{\colheylo{x}}\symSemi \stmtAssume{\colheylo{2\cdot x}}~\blockEnd
\end{align*}
\end{minipage}
\caption{Unsound encoding of a procedure call $\mathit{foo}(x)$ in $\mathit{bar}$. Both procedures verify but inlining the body of $\mathit{foo}$ in $\mathit{bar}$ does not as it produces the (wrong) inequality $\colheylo{x} \heyloleq \colheylo{x} \sqcap (0.5 \cdot \infty)$.}
\label{fig:calls:cex}	
\end{figure}
Intuitively, $\mathit{foo}$ flips a fair coin and aborts execution if the result is heads ($\false$).
Read backwards, the expected value of the post will be at most $\colheylo{x}$ after executing $\mathit{foo}$ -- exactly as stated in $\mathit{foo}$'s specification.
 Procedure $\mathit{bar}$ encodes the call $\mathit{foo}(x)$ in its body\footnote{There are no havoc statements because $\mathit{foo}$ has no outputs; we also omitted $\mathit{init}$ and $\mathit{return}$ for simplicity.} and requires in its specification that the expected value of $x$ does not decrease, i.e.\ is at least $\colheylo{x}$.
 Both procedures verify. However, when inlining $\mathit{foo}$, i.e.\ using its body instead of the encoding $\stmtAssert{\colheylo{x}}\symSemi\stmtAssume{\colheylo{2\cdot x}}$, $\mathit{bar}$ does \emph{not} verify.
 Hence, the above encoding does, in general, not model a sound underapproximation of a procedure's inlining.
 
Taking a closer look, recall from above that $\stmtAssume{\colheylo{2\cdot x}}$ is used to encode a monotonicity check,\footnote{More precisely: a check whether monotonicity of $\symVc$ can be applied, namely whether $\hhlb \heyloleq \hla$ holds where $\hhlb$ is the callee's \emph{specified} post and $\hla$ is the \emph{actual} post at the call-site.} which is an inherently \emph{quali}tative property.
 However, verifying $\mathit{bar}$ involves proving $\colheylo{x} \heyloleq \colheylo{x} \sqcap (\colheylo{2\cdot x} \impl \colheylo{x})$, where the quantitative implication $\colheylo{2\cdot x} \impl \colheylo{x}$ evaluates to $\colheylo{x}$ for $x > 0$; the expectation $\colheylo{x}$ does not reflect the inherently qualitative nature of the monotonicity check.
 To fix this issue, we add a $\stmtValidate$ statement that turns \emph{quanti}tative results into \emph{quali}tative ones: it reduces any value less than $\infty$, which indicates a failed monotonicity check, to $0$.
An encoding underapproximating the inlining of $\mathit{foo}(x)$ -- and thus correctly failing verification of $\mathit{bar}$ -- is $\stmtAssert{\colheylo{x}}\symSemi\stmtValidate\symSemi\stmtAssume{\colheylo{2\cdot x}}$.
 Similarly to~\Cref{subsec:syntactic_sugar},
 verifying $\mathit{bar}$ for the fixed encoding involves proving
 $\colheylo{x} \heyloleq \colheylo{x} \sqcap \heylovalidate{\colheylo{2\cdot x} \impl \colheylo{x}}$, which does not hold for $x > 0$.
 
More generally, a sound construction of $\sstmt_\mathit{encoding}$ (wrt. underapproximating procedure body $\sstmt$)  is 
\begin{align*}
    \sstmt_\mathit{encoding}\colon \qquad  &
  \stmtAssert{\hhlc}\symSemi
  \stmtHavoc{z_1}\symSemi \ldots \symSemi\stmtHavoc{z_m}\symSemi
  \stmtValidate\symSemi
  \stmtAssume{\hhlb}.
\end{align*}
Formally, we obtain an underapproximating \HeyVL encoding of procedure calls of the form $\stmtAsgn{z_1,\ldots,z_m}{P(t_1,\ldots,t_n)}$ for arbitrary probabilistic procedures as in \Cref{fig:calls:proc}:%
\begin{restatable}{theorem}{thmProcCalls}
 Let $\sstmt$ be the body of the procedure $\procname$ in \Cref{fig:calls:proc}.
 Then, for every \HeyLo formula $\hla$, 
 \[ 
   \vc{\sstmt_\mathit{encoding}}(\hla) \hheyloleq \vc{\sstmt}(\hla)
   \qand
   \vc{\mathit{init}\symSemi\sstmt_\mathit{encoding}\symSemi\mathit{return}}(\hla)
   \hheyloleq  
   \vc{\mathit{init}\symSemi\sstmt\symSemi\mathit{return}}(\hla).
 \]
\end{restatable}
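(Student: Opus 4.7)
The plan is to unfold the semantics of $\sstmt_\mathit{encoding}$ using the rules in \Cref{fig:heyvl-semantics} for the $\symAssert$, $\symHavoc$, $\symValidate$, and $\symAssume$ statements together with sequencing. This produces $\vc{\sstmt_\mathit{encoding}}(\hla) \equiv \hhlc \sqcap \iquant{\typeof{y_1}{\typevar_1'},\ldots,\typeof{y_m}{\typevar_m'}}{\heylovalidate{\hhlb \impl \hla}}$. Fixing a state $\sigma$, I would write $I(\sigma)$ for the inner infimum and case-split on its value, which by the semantics of validation must lie in $\{0,\infty\}$.

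In the easy case $I(\sigma) = 0$, the minimum with $\hhlc$ collapses to $0$ and the first inequality is trivial. In the nontrivial case $I(\sigma) = \infty$, unfolding the validation and the infimum yields $\interpretsimple{\hhlb}(\sigma') \leq \interpretsimple{\hla}(\sigma')$ for every state $\sigma'$ that differs from $\sigma$ only in the values of the outputs $y_1,\ldots,y_m$. The hypothesis that $P$ verifies gives $\hhlc \heyloleq \vc{\sstmt}(\hhlb)$, and so $\hhlc(\sigma) \leq \vc{\sstmt}(\hhlb)(\sigma)$; what remains is to establish $\vc{\sstmt}(\hhlb)(\sigma) \leq \vc{\sstmt}(\hla)(\sigma)$.

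The bridging step will be a frame-style lemma proved by induction on the structure of $\sstmt$: because procedure inputs are read-only and local variable declarations can, by renaming, be assumed not to clash with $y_1,\ldots,y_m$, the body $\sstmt$ modifies only the outputs $y_1,\ldots,y_m$; consequently $\vc{\sstmt}(E)(\sigma)$ depends on $E$ only through its values at states differing from $\sigma$ at most on those variables. Combined with the pointwise inequality established above and \Cref{thm:monotonicity}, this frame lemma yields $\vc{\sstmt}(\hhlb)(\sigma) \leq \vc{\sstmt}(\hla)(\sigma)$, closing the first inequality.

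The second inequality is then a short corollary: instantiate the first inequality with $\hla' \coloneqq \vc{\mathit{return}}(\hla)$ to obtain $\vc{\sstmt_\mathit{encoding}}(\hla') \heyloleq \vc{\sstmt}(\hla')$, and apply monotonicity of $\vc{\mathit{init}}$ (\Cref{thm:monotonicity}) to lift this inequality across the outer sequencing. The hard part will be the frame lemma itself: it is intuitively unsurprising, but must be formalised by a syntactic induction that tracks the modifies-set of each statement form and shows that $\vc$ respects it; everything else is routine unfolding followed by a single case split on the validation's outcome.
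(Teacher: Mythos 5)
Your proposal is workable but takes a genuinely different --- and considerably harder --- route than the paper. The paper's proof first computes a closed form for the encoding: it treats the infimum introduced by $\stmtHavoc{z_1}\symSemi \ldots \symSemi\stmtHavoc{z_m}$ as ranging over \emph{all} states, so that $\vc{\sstmt_\mathit{encoding}}(\hla)(\State)$ collapses to $\evalState{\hhlc}$ if the \emph{global} inequality $\hhlb \heyloleq \hla$ holds and to $0$ otherwise. The entire argument is then a single global case split: in the positive case one chains $\hhlc \heyloleq \vc{\sstmt}(\hhlb) \heyloleq \vc{\sstmt}(\hla)$ using the assumption that $\procname$ verifies together with plain monotonicity (\Cref{thm:monotonicity}); in the negative case $0$ is a trivial lower bound; the $\mathit{init}/\mathit{return}$ claim follows by unfolding sequencing. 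No frame lemma appears anywhere. You instead case-split \emph{pointwise} on the validation's value at each $\sigma$, which only yields $\hhlb \leq \hla$ on the states reachable from $\sigma$ by varying the havoc'd variables; bridging from this local inequality to $\vc{\sstmt}(\hhlb)(\sigma) \leq \vc{\sstmt}(\hla)(\sigma)$ is exactly what forces you to introduce the modifies-set/frame lemma, which carries the entire technical weight of your proof and which you leave as a sketch. The trade-off is real: the paper's global reduction is short and needs only \Cref{thm:monotonicity}, but it matches the havoc semantics only when $\hhlb \impl \hla$ depends on nothing but the havoc'd variables; your pointwise analysis is the more faithful one (the encoding's true value can strictly exceed the paper's closed form when $\hla$ mentions the read-only inputs), but it is only as solid as the frame lemma --- whose proof must also handle local variables that the body modifies yet the encoding never havocs, which is precisely where the paper's renaming footnote is doing work. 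If you adopt the paper's global case split you can discard the frame lemma entirely; if you keep your route, that lemma cannot remain a sketch, since it is where all the content lives.
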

\noindent%
\ifdefined\WITHAPPENDIX
A proof is found in \cref{sec:app-proofs-heyvl}.
\fi
A \HeyVL encoding that \emph{over}approximates calls of \emph{co}procedures is analogous -- it suffices to use the dual \emph{co}statements in $\sstmt_\mathit{encoding}$.
The presented under- and overapproximations are useful when encoding proof rules in \HeyVL. 
Whether they are meaningful does, however, depend on the verification technique at hand that should be encoded.


\section{Encoding Case Studies}
\label{sec:encodings}
To evaluate the expressiveness of our verification language, we encoded various existing calculi and proof rules targeting verification problems for probabilistic programs in \HeyVL.
We will first focus on programs without $\symWhile$ loops (\Cref{sec:encodings:loop-free}) and then consider loops (\Cref{sec:encodings:park}). 
The practicality of our automated verification infrastructure will be evaluated separately in \Cref{sec:implementation}.
A summary of all encodings is given at the end of this section.
\ifdefined\WITHAPPENDIX
Further details are found in \Cref{app:encodings}.
\fi


%

%


\subsection{Reasoning about While-Loop-Free \pGCL Dialects}\label{sec:encodings:loop-free}
Pioneered by \citet{kozenProbabilisticPDL1983}, expectation-based techniques have been successfully applied to analyze various probabilistic program properties.
\citet{mciverAbstractionRefinementProof2005} incorporated nondeterminism and introduced the probabilistic Guarded Command Language (\pGCL), which is convenient for modelling probabilistic systems.
The syntax of $\symWhile$-loop-free \pGCL programs $\pcc$ is\footnote{\pGCL usually supports only one type, e.g. integers, rationals, or reals. We are more liberal and admit arbitrary terms $t$ but assume a sufficiently strong type inference system and consider only well-typed programs.}
\begin{align*}
  \pcc \morespace{\Coloneqq} 
  \pSkip
  ~|~ \pDiverge
  ~|~ \pAssign{x}{t} 
  ~|~ \pcc_1;\pcc_2 
  ~|~ \pIte{b}{\pcc_1}{\pcc_2}
  ~|~ \pChoice{\pcc_1}{p}{\pcc_2}
  ~|~ \pNd{\pcc_1}{\pcc_2}~,
\end{align*}
where $\pSkip$ has no effect,
$\pDiverge$ never terminates,
$\pAssign{x}{t}$ assigns the value of term $t$ to $x$,
$\pcc_1;\pcc_2$ executes $\pcc_2$ after $\pcc_1$, 
$\pIte{b}{\pcc_1}{\pcc_2}$ executes $\pcc_1$ if Boolean expression $b$ holds and $\pcc_2$ otherwise,
$\pChoice{\pcc_1}{p}{\pcc_2}$ executes $\pcc_1$ with probability $p \in [0,1]$ and $\pcc_2$ with probability $(1-p)$, and 
$\pNd{\pcc_1}{\pcc_2}$ nondeterministically executes either $\pcc_1$ or $\pcc_2$.
We now outline encodings of several reasoning techniques targeting \pGCL and extensions thereof.
We will only consider expectations that can be expressed as \HeyLo formulae.
To improve readability, we identify every \HeyLo formula $\hla$ with its expectation $\interpretsimple{\hla} \in \Expectations$.
\paragraph{Weakest Preexpectations ($\textit{wp}$)}
The \emph{weakest preexpectation calculus} of \citet{mciverAbstractionRefinementProof2005} maps every \pGCL command $\pcc$ and postexpectation $\hla$ 
to the \emph{minimal} (to resolve nondeterminism) \emph{expected value 
$\foreignWp{wp}{\pcc}{\hla}$ of $\hla$ after termination of $\pcc$} -- the same intuition underlying \HeyVL's $\symVc$ transformer.
\Cref{fig:encodings:wp} shows a sound and complete \HeyVL encoding $\downTransWp{\pcc}$ of the weakest preexpectation calculus, i.e. 
  $\vc{\downTransWp{\pcc}}(\hla) =\foreignWp{wp}{\pcc}{\hla}$.
  Most \pGCL commands have \HeyVL equivalents; conditionals are encoded as in~\Cref{ex:ite}. 
  $\pDiverge$ is encoded as $\stmtAssert{0}$ as it never terminates, i.e. $\foreignWp{wp}{\pDiverge}{\hla} = 0$.
The program in \Cref{fig:encodings:wp-proc} then verifies iff $\hlb$ lower bounds $\foreignWp{wp}{\pcc}{\hla}$, i.e. $\hlb \heyloleq \foreignWp{wp}{\pcc}{\hla}$.
To reason about \emph{upper} bounds, it suffices to use a \emph{co}procedure instead.
%
%
\begin{figure}
	\begin{minipage}{0.55\textwidth}
		\begin{tabular}{ll}
			\toprule
			$\pcc$ & $\downTransWp{\pcc}$ \\
			\midrule
  			$\pSkip$ & $\stmtTick{0}$ \\
  			$\pDiverge$ & $\Assert{0}$ \\
  			$\pAssign{x}{t}$ & $\stmtRasgn{x}{t}$ \\
  			$\pcc_1;\pcc_2$ & $\downTransWp{\pcc_1};\downTransWp{\pcc_2}$ \\
  			$\stmtIfStart{b}~\pcc_1~\} $ & $\stmtDemonicStart~\Assume{\embed{b}};\downTransWp{\pcc_1}~\}$ \\
  			$\quad\stmtElseStart~\pcc_2 \}$ & $\stmtElseStart~\Assume{\embed{\neg b}};\downTransWp{\pcc_2} \}$ \\
  			$\pChoice{\pcc_1}{p}{\pcc_2}$ & $\stmtDeclInit{\mathit{tmp}}{\bool}{\exprFlip{p}}\symSemi$ \\ 
  			& $\downTransWp{\pIte{\mathit{tmp}}{\pcc_1}{\pcc_2}}$ \\
  			$\pNd{\pcc_1}{\pcc_2}$ & $\stmtDemonic{\pcc_1}{\pcc_2}$ \\
  			\bottomrule
		\end{tabular}
	    \caption{Encoding of weakest preexpectation for \pGCL, where $\mathit{tmp}$ is a fresh variable.}
	    \label{fig:encodings:wp}
	\end{minipage}%
	\hfill%
	\begin{minipage}{0.44\textwidth}
		\begin{align*}
   		  &\proc{lower}{\overline{\varin}}{\overline{\varout}} \\
   		  &\quad\Requires{\hlb}~\intersem{\overline{\varin}\text{: variables in $\hlb$}} \\
  		  &\quad\Ensures{\hla~}~\blockStart~\intersem{\overline{\varout}\text{: var. in $\hla$ but not $\hlb$}} \\
  		  & \qquad\intersem{\text{declare local variables, i.e.}} \\
  		  & \qquad\intersem{\text{those not in $\hla$ or $\hlb$, using}} \\
  		  & \qquad\intersem{\stmtDeclInit{x}{\typevar}{\mathit{default}}\symSemi\stmtHavoc{x}} \\
  		  & \qquad \downTransWp{\pcc} \\
  		  &\blockEnd
		\end{align*}%
		\vspace{-1.5em}
	    \caption{Encoding of $\hhlb \heyloleq \foreignWp{wp}{\pcc}{\hhla}$.}
	    \label{fig:encodings:wp-proc}
	\end{minipage}
\end{figure}

\paragraph{Weakest Liberal Preexpectations ($\textit{wlp}$).}
\citet{mciverAbstractionRefinementProof2005} also proposed a \emph{liberal} weakest preexpectation calculus, a partial correctness variant of weakest preexpectations.
	More precisely, if $\hla \heyloleq 1$, then the weakest liberal preexpectation
	$\foreignWp{wlp}{\pcc}{\hla}$ is the expected value of $\hla$ after termination of $\pcc$ \emph{plus} the probability of non-termination of $\pcc$ (on a given initial state).
	We denote by $\downTransWlp{\pcc}$ the \HeyVL encoding of the weakest liberal preexpectation calculus; it is defined analogously to 
	\Cref{fig:encodings:wp} except for $\pDiverge$.
	Since $\pDiverge$ never terminates, the probability of non-termination is one, i.e. $\foreignWp{wlp}{\pDiverge}{\ldots} = 1$.
	The updated encoding of $\pDiverge$ is 
	\[
	  \downTransWlp{\pDiverge} \qeq \Assert{1}\symSemi \Assume{0}~,
	\]
	where $\Assert{1}$ ensures one-boundedness and $\Assume{0}$ lowers the threshold at which the post is considered entirely true to $0$.
	Put together, we have
	$\vc{\downTransWlp{\pDiverge}}(\hla) = 1 \sqcap \infty = 1 = \foreignWp{wlp}{\pDiverge}{\hla}$.

\paragraph{Conditional Preexpectations ($\textit{cwp}$).}
\emph{Conditioning} on observed events (in the sense of conditional probabilities) is a key feature of modern probabilistic programming languages \cite{gordonProbabilisticProgramming2014}.
Intuitively, the statement $\stmtObserve{b}$ discards an execution whenever Boolean expression $b$ does not hold. Moreover, it re-normalizes such that the accumulated probability of all executions violating no observation equals one.
\citet{olmedoConditioningProbabilisticProgramming2018} showed that reasoning about \stmtObserve{b} requires a combination of $\symForeign{wp}$ and $\symForeign{wlp}$ reasoning.
They extended both calculi such that violating an observation is interpreted as a failure resulting in pre-expectation zero; we can encode it with an assertion: 
\[
\foreignWp{w(l)p}{\stmtObserve{b}}{\hla} 
\eeq 
\embed{b} \sqcap \hla
\eeq
\vc{\stmtAssert{\embed{b}}}(\hla).
\]
For every \pGCL program $\pcc$ with observe statements, initial state $\State$ and expectation $\hla$, the \emph{conditional} expected value $\foreignWp{cwp}{\pcc}{\hla}(\State)$ of $\hla$ after termination of $\pcc$ is then given by the expected value $\foreignWp{wp}{\pcc}{\hla}(\State)$ normalized by the probability $\foreignWp{wlp}{\pcc}{1}(\State)$ of violating no observation:
\[
   \foreignWp{cwp}{\pcc}{\hla}(\State) \qeq \frac{\foreignWp{wp}{\pcc}{\hla}(\State)}{\foreignWp{wlp}{\pcc}{1}(\State)}
   \qquad (\text{undefined if }\foreignWp{wlp}{\pcc}{1}(\State) \eeq 0)
\]
We can re-use our existing \HeyVL encodings to reason about conditional expected values.
Notice that proving bounds on $\symForeign{cwp}$ requires establishing both lower and upper bounds.
For example, the \pGCL program $\pcc_\mathit{die}$ in \Cref{fig:die-pgcl} assigns to $r$ the result of a six-sided die roll, which is simulated using three fair coin flips and an observation.
To show that the expected value of $\colheylo{r}$ is at most $\colheylo{3.5}$ -- the expected value of a six-sided die roll -- we prove the
upper bound $\foreignWp{wp}{\pcc_\mathit{die}}{\colheylo{r}} \heyloleq \colheylo{2.625}$ and the lower bound $\colheylo{0.75} \heyloleq \foreignWp{wlp}{\pcc_\mathit{die}}{\colheylo{1}}$. Then, $\foreignWp{cwp}{\pcc_\mathit{die}}{\colheylo{r}} \heyloleq \frac{\colheylo{2.625}}{\colheylo{0.75}} = \colheylo{3.5}$.
\Cref{fig:die-heyvl} shows the \HeyVL encoding of $\pcc_\mathit{die}$ (cleaned up for readability).
As shown in \Cref{fig:cwp-die}, the proof obligations $\foreignWp{wp}{\pcc_\mathit{die}}{\colheylo{r}} \heyloleq \colheylo{2.625}$ and $\colheylo{0.75} \heyloleq \foreignWp{wlp}{\pcc_\mathit{die}}{\colheylo{1}}$ are then encoded using a coprocedure for the upper bound and a procedure for the lower bound, respectively. 

There exist alternative interpretations of conditioning. For instance, \citet{r2} use $\foreignWp{wp}{\pcc}{1}(\State)$ in the denominator in the above fraction. A benefit of \HeyVL is that such alternative interpretations can be realized by a straightforward adaptation of our encoding. 
\begin{figure}[t]
	\begin{halfboxl}
		\begin{align*}
			& \pChoice{\pAssign{a}{0}}{0.5}{\pAssign{a}{1}}; \\
			& \pChoice{\pAssign{b}{0}}{0.5}{\pAssign{b}{1}}; \\
			& \pChoice{\pAssign{c}{0}}{0.5}{\pAssign{c}{1}}; \\
			& \pAssign{r}{4 \cdot a + 2 \cdot b + c + 1}; \\
			& {\color{heyvlColor}\stmtObserve{r \leq 6}}
		\end{align*}
		\vspace{-2em}
		\caption{\pGCL program $\pcc_\mathit{die}$.}
		\label{fig:die-pgcl}
		\begin{align*}
			&\symUp\proc{die\_wp}{}{\varandtype{r}{UInt}} \\
			&\quad\Requires{2.625} \\
			&\quad\Ensures{r} \\
			&\quad\blockStart~\stmt_\mathit{die}~\blockEnd
		\end{align*}
	\end{halfboxl}%
	\begin{halfboxr}
		\begin{align*}
			&\stmtDeclInit{a}{\uint}{\pexp{0.5}{1} \pexpand \pexp{0.5}{0}}\symSemi \\
			&\stmtDeclInit{b}{\uint}{\pexp{0.5}{1} \pexpand \pexp{0.5}{0}}\symSemi \\
			&\stmtDeclInit{c}{\uint}{\pexp{0.5}{1} \pexpand \pexp{0.5}{0}}\symSemi \\
			&\stmtRasgn{r}{4 \cdot a + 2 \cdot b + c + 1}\symSemi \\
			&{\color{heyvlColor}\downAssert{\embed{r \leq 6}}}
		\end{align*}
		\vspace{-2em}
		\caption{\HeyVL encoding $\stmt_\mathit{die}$ of $\pcc_\mathit{die}$.}
		\label{fig:die-heyvl}
		\begin{align*}
			&\proc{die\_wlp}{}{\varandtype{r}{UInt}} \\
			&\quad\Requires{\nicefrac{6}{8}} \\
			&\quad\Ensures{1} \\
			&\quad\blockStart~\stmt_\mathit{die}~\blockEnd
		\end{align*}
	\end{halfboxr}
	\vspace{-1.5em}
	\caption{\HeyVL encoding of the proof obligations $\wp{\pcc_\mathit{die}}(\colheylo{r}) \heyloleq \colheylo{2.625}$ and $\colheylo{0.75} \heyloleq \wlp{\pcc_\mathit{die}}(\colheylo{1})$.}
	\label{fig:cwp-die}
\end{figure}

\subsection{Reasoning about Expected Values of Loops}
\label{sec:encodings:park}
We encoded various proof rules for loops $\stmtWhile{b}{\pcc}$ in \HeyVL. 
As an example, we consider the Park induction rule~\cite{parkFixpointInductionProofs1969,kaminskiAdvancedWeakestPrecondition2019} for lower bounds on weakest liberal preexpectations: for all $\hla,\loopinv \heyloleq 1$,
\begin{align*}
    \underbrace{
      \loopinv \hheyloleq \left( \embed{b} \impl \foreignWp{wlp}{\pcc}{\loopinv} \right) 
      \sqcap 
      \left(\embed{\neg b} \impl \hla\right)
    }_{\loopinv\text{ is an inductive invariant}}
  \quad \text{implies} 
  \quad 
  \underbrace{
  \loopinv \hheyloleq \foreignWp{wlp}{\stmtWhile{b}{\pcc}}{\hla}
  }_{
  	\loopinv\text{ underapproximates the loop's \symForeign{wlp}}
  }
    .
\end{align*}
The rule can be viewed as a quantitative version of the loop rule from~\citet{hoareAxiomaticBasisComputer1969} logic, where $\loopinv$ is an \emph{inductive invariant} underapproximating the expected value of any loop iteration.
\Cref{fig:park-wlp-heyvl} depicts an 
encoding $\downTransWlp{\stmtWhile{b}{\pcc}}$ that underapproximates $\foreignWp{wlp}{\stmtWhile{b}{\pcc}}{\hla}$, i.e.
\[
   \vc{\downTransWlp{\stmtWhile{b}{\pcc}}}(\hla) \eeq 
  \begin{cases}
  	\loopinv, & \text{if } \loopinv \,\heyloleq\, \left( \embed{b} \impl \foreignWp{wlp}{\pcc}{\loopinv} \right) 
  	                        \sqcap \left(\embed{\neg b} \impl \hla\right) \\
  	0, & \text{otherwise}
  \end{cases}
  \hheyloleq
  \foreignWp{wlp}{\ldots}{\hla}.
\] 
Before we go into details, we remark for readers familiar with classical deductive verification that our encoding is almost identical to standard loop encodings (cf.~\cite{mullerBuildingDeductiveProgram2019}).
Apart from the quantitative interpretation of statements, the only exception is the $\stmtValidate$ in line 3. 
 \begin{figure}[t]
 	\begin{minipage}{0.4\textwidth}
	\begin{align*}
		& \textcolor{col1}{\downAssert{\loopinv}}\symSemi \\
		& \textcolor{col2}{\downHavoc{\textit{variables}}}\symSemi \\
		& \textcolor{col3}{\stmtValidate}\symSemi \\
		& \textcolor{col4}{\stmtAssume{\loopinv}}\symSemi \\
		& \textcolor{col5}{\symIf~(b)~\blockStart} \\
		& \quad \textcolor{col6}{\downTransWlp{\pcc}\symSemi} \\
		& \quad \textcolor{col6}{\downAssert{\loopinv}\symSemi} \\
		& \quad \textcolor{col6}{\downAssume{\embed{\false}}} \\
		& \textcolor{col5}{\blockEnd ~\stmtElseStart} 
		~ \textcolor{col5}{\blockEnd} \quad \intersem{\hla}
	\end{align*}
	\caption{Encoding of Park Induction rule for underapproximating $\foreignWp{wlp}{\stmtWhile{b}{\pcc}}{\hla}$.}
	\label{fig:park-wlp-heyvl}
	\end{minipage}%
	\hfill%
	\begin{minipage}{0.5\textwidth}
\newcommand{\probChoice}{\mathit{tmp}}
	\begin{align*}
		& \textcolor{col1}{ \upAssert{\ohfiveExp{\mathit{len}(l)}} }\symSemi \\
		& \textcolor{col2}{ \upHavoc{l}\symSemi\upHavoc{\probChoice} }\symSemi \\
		& \textcolor{col3}{ \coValidate }\symSemi \\
		& \textcolor{col4}{ \coAssume{\mathit{len}(l)} }\symSemi \\
		& \textcolor{col5}{\symIf~(\mathit{len}(l) > 0)~\blockStart} \\
		&\quad \textcolor{col6}{\stmtDeclInit{\probChoice}{\bool}{\exprFlip{0.5}}} \\
		&\quad \textcolor{col6}{\symIf~(\probChoice)~\blockStart~\stmtAsgn{l}{\mathit{tail}(l)}~\blockEnd~\symElse~\blockStart~\Assert{0}~\blockEnd} \\
		&\quad \textcolor{col6}{\upAssert{\ohfiveExp{\mathit{len}(l)}}\symSemi \upAssume{\coembed{\false}}} \\
		& \textcolor{col5}{\}~\stmtElseStart ~ \blockEnd} \quad \intersem{1}
	\end{align*}
	\caption{Exemplary \HeyVL encoding overapproximating the $\symWp$ of a loop.}
	\label{fig:park-wp-heyvl}
	\end{minipage}
\end{figure}
It is instructive to go over the encoding in \Cref{fig:park-wlp-heyvl} step by step for a given initial state $\State$.
The following expanded version of the above equation's right-hand side serves as a roadmap:
\begin{align*}
  \textcolor{col1}{
  \loopinv(\State) ~\sqcap~ 
  }
  \textcolor{col2}{
  \inf_{\State' \in \States}
  }
  \textcolor{col3}{
  \begin{cases}
  	\textcolor{col4}{\infty,} & 
  	    \textcolor{col4}{\text{ if } \loopinv(\State') \lleq} 
  	    \textcolor{col5}{(\embed{b}(\State') \impl~}
  	    \textcolor{col6}{\foreignWp{wlp}{\pcc}{\loopinv}(\State')} 
  	    \textcolor{col5}{)~\sqcap~( \embed{\neg b}(\State') \impl~} 
  	    \textcolor{hintgray}{\hla(\State')}
  	    \textcolor{col5}{)}
  	    \\
  	0, & \text{ otherwise},
  \end{cases} 
  }
\end{align*}
Reading the \HeyVL code in \Cref{fig:park-wlp-heyvl} top-down then corresponds to reading the equation from left to right as indicated by the colors.
We first \textcolor{col1}{assert} that our underapproximation of the loop's $\symForeign{wlp}$ is at most $\textcolor{col1}{\loopinv}$. The remaining code will ensure that said underapproximation is exactly $\textcolor{col1}{\loopinv}$ whenever $\textcolor{col1}{\loopinv}$ is an inductive loop invariant; it will be $0$ otherwise.
Proving that $\textcolor{col1}{\loopinv}$ is an inductive loop invariant requires checking an inequality $\heyloleq$, where $\hlb \heyloleq \hlc$ holds iff $\hlb(\State') \leq \hlc(\State')$ for all states $\State'$.
We \textcolor{col2}{havoc} the values of all program variables such that the invariant check encoded afterward is performed for every evaluation of the program variables, i.e. for every state $\State'$.\footnote{An optimized encoding may only havoc those variables that are modified in the loop body. However, we opted to encode the rule as it is typically presented in the literature.} Moreover, \textcolor{col2}{havoc} picks the \emph{minimal} result of all those invariant checks.
%
The statement \enquote{$\textcolor{col1}{\loopinv}$ is an inductive loop invariant} is inherently qualitative. We thus \textcolor{col3}{validate} that the invariant check encoded next is a qualitative statement that can only have two results: $\infty$ if $\textcolor{col1}{\loopinv}$ is an inductive invariant and $\textcolor{col3}{0}$ if it is not.
To check if $\textcolor{col1}{\loopinv}$ is an inductive invariant for a fixed state $\State'$, we need to prove an inequality, namely that $\textcolor{col4}{\loopinv}(\State')$ lower bounds $\textcolor{col6}{\foreignWp{wlp}{\pcc}{\loopinv}(\State')}$ if loop guard $\textcolor{col5}{b}$ holds
and $\textcolor{hintgray}{\hla(\State')}$ if $\textcolor{col5}{b}$ does not hold.
We first use \textcolor{col4}{\stmtAssume{\loopinv}} to lower the threshold for the expected value of the remaining code to be considered $\infty$ to $\textcolor{col4}{\loopinv}(\State')$.
Hence, we obtain $\textcolor{col4}{\infty}$ if the invariant check succeeds for $\State'$.
The \textcolor{col5}{conditional choice} is the invariant check's right-hand side.
If state $\State'$ satisfies $\textcolor{col5}{b}$, we use our existing $\symForeign{wlp}$ encoding to compute $\textcolor{col6}{\foreignWp{wlp}{\pcc}{\loopinv}(\State')}$, where $\textcolor{col6}{\downAssert{\loopinv}\symSemi \downAssume{\embed{\false}}}$ ensures that $\symForeign{wlp}$ is computed with respect to postexpectation $\textcolor{col6}{\loopinv}$.
If state $\State'$ satisfies $\textcolor{col5}{\neg b}$, we do nothing and just take the postexpectation $\textcolor{hintgray}{\hla}$.

\paragraph{Upper bounds.} 
Consider an iterative version of the lossy list traversal from \Cref{fig:intro:lossy} on page \pageref{fig:intro:lossy}:
\begin{align*} 
    &\headerWhile{\mathit{len}(l) > 0}~\{ 
    ~\stmtProb{0.5}{~\stmtAsgn{l}{\mathit{pop}(l)~}}{~\mathit{foo(l)}~} 
    ~\}
\end{align*}
\noindent
The Park induction rule can also be used to \emph{over}approximate weakest preexpectations.
The encoding is dual, i.e. it suffices to use the \emph{co}-versions of the involved statements.
For example, \Cref{fig:park-wp-heyvl} encodes the above loop
with $\ohfiveExp{\mathit{len}(l)}$ as inductive invariant overapproximating the loop's termination probability. 
The list type and the exponential function $\ohfiveExp{\mathit{len}(l)}$ are represented in \HeyLo by custom domain declarations (cf.\ \Cref{sec:domain-decl}). 
%

%
\paragraph{Recursion.}
We can encode verification of $\symWlp$-lower bounds for recursive procedure calls of \pGCL programs as discussed in \Cref{sec:heyvl_proc_calls} and justified by \citet{olmedoReasoningRecursiveProbabilistic2016} and \citet{mathejadiss} -- it is another application of Park induction. For $\symWp$-upper bounds, the encoding is dual.
Hence, \Cref{fig:lossy-heyvl} on page \pageref{fig:lossy-heyvl} encodes that the termination probability of the program in \Cref{fig:intro:lossy} is at most $0.5^{\mathit{len}(l)}$.
%

%
\subsection{Overview of Encodings}
\Cref{tbl:case-studies-overview} summarizes all verification techniques -- program logics and proof rules -- that have been encoded in \HeyVL.
While a detailed discussion is beyond the scope of this paper, we briefly go over \Cref{tbl:case-studies-overview}.
The main takeaway is that
\HeyVL enables the encoding -- and thus automation -- of advanced verification methods based on diverse theoretical foundations and targeting different verification problems.
The practicality of our encodings will be evaluated in \Cref{sec:implementation}.
\begin{table}[t]
\caption{%
Verification techniques encoded in \HeyVL sorted by 
verification problem:
lower- and upper bounds on probability of events (LPROB and UPROB), 
upper- and lower bounds on expected values (UEXP and LEXP), 
conditional expected values (CEXP),
almost-sure termination (AST),
positive almost-sure termination (PAST), 
upper bounds on expected runtimes (UERT),
and lower bounds on expected runtimes (LERT).
}\label{tbl:case-studies-overview}
\begin{tabular}{lllll}
 	\toprule
	\textbf{Problem} & \textbf{Verification Technique} & \textbf{Source} & \ifdefined\WITHAPPENDIX\textbf{Encoding}\fi \\ 
	\midrule
	\multirow{2}{*}{LPROB} 
	& $\symWlp$ + Park induction & \citet{mciverAbstractionRefinementProof2005} 
	& \ifdefined\WITHAPPENDIX\cref{sec:encodings:park}\fi \\
	& $\symWlp$ + latticed $k$-induction & (new?) 
	& \ifdefined\WITHAPPENDIX\cref{sec:app-encodings-kind-wlp}\fi \\[1ex]
	UPROB & $\symWlp$ + $\omega$-invariants & \citet{kaminskiAdvancedWeakestPrecondition2019} 
	& \ifdefined\WITHAPPENDIX\cref{sec:app-enc-omega-wlp}\fi \\[1ex]
	\multirow{2}{*}{UEXP} 
	& $\symWp$ + Park induction & \citet{mciverAbstractionRefinementProof2005} 
	& \ifdefined\WITHAPPENDIX\cref{sec:app-enc-kind-wp}\fi \\
	& $\symWp$ + latticed $k$-induction & \citet{batzLatticedKInductionApplication2021} 
	& \ifdefined\WITHAPPENDIX\cref{sec:app-enc-kind-wp}\fi \\[1ex]
	\multirow{3}{*}{LEXP} 
	& $\symWp$ + $\omega$-invariants & \citet{kaminskiAdvancedWeakestPrecondition2019} 
	& \ifdefined\WITHAPPENDIX\cref{sec:app-enc-omega-wp}\fi \\
	& $\symWp$ + Optional Stopping Theorem & \citet{harkAimingLowHarder2019} 
	& \ifdefined\WITHAPPENDIX\cref{sec:app-enc-ost-wp}\fi \\[1ex]
	\multirow{1}{*}{CEXP} 
	& conditional $\symWp$ & \citet{olmedoConditioningProbabilisticProgramming2018} 
	& \ifdefined\WITHAPPENDIX\Cref{sec:encodings:loop-free}\fi \\
	\multirow{1}{*}{UERT}  
	& ert calculus + UEXP rules & \citet{kaminskiWeakestPreconditionReasoning2016} 
	& \ifdefined\WITHAPPENDIX\cref{sec:app-encodings-ert}\fi \\
	\multirow{1}{*}{LERT} 
	& ert calculus + $\omega$-invariants & \citet{kaminskiWeakestPreconditionReasoning2016} 
	& \ifdefined\WITHAPPENDIX\cref{sec:app-encodings-ert}\fi \\
	\multirow{1}{*}{AST} 
	& parametric super-martingale rule & \citet{mciverNewProofRule2018} 
	& \ifdefined\WITHAPPENDIX\Cref{sec:app-enc-new-proof-rule-ast}\fi \\
	\multirow{1}{*}{PAST} 
	& program analysis with martingales 
	& \ifdefined\WITHAPPENDIX
	    	\parbox{10em}{\citet{chakarovProbabilisticProgramAnalysis2013}}
	   \else
     	   	\citet{chakarovProbabilisticProgramAnalysis2013}
	   \fi
	& \ifdefined\WITHAPPENDIX\cref{sec:app-enc-past}\fi \\
	\bottomrule 
\end{tabular}
\end{table}
%
%
%

%
\paragraph{Expected Values}
%
We encoded \citet{mciverAbstractionRefinementProof2005}'s weakest (liberal) preexpectation calculus for analyzing expected values of probabilistic programs (cf. \Cref{sec:encodings:loop-free}). 
To analyze \emph{conditional} expected values, we combined the two calculi as suggested by \citet{olmedoConditioningProbabilisticProgramming2018}.
For loops, we encoded three proof rules based on domain theory:

First, \emph{Park Induction} generalizes the standard loop rule from Hoare logic~\cite{hoareAxiomaticBasisComputer1969} to a quantitative setting; it can be applied to lower bound weakest liberal preexpectations and upper bound weakest preexpectations (cf. \Cref{sec:encodings:park}).
However, it is unsound for the converse directions. 

Second, \emph{$\omega$-Invariants} are sound and complete for proving lower and upper bounds. However, they are arguably more complex because users must provide a family of invariants and compute limits. We modeled families of invariants as \HeyLo formulas with additional free variables and used $\stmtHavoc{x}$ and $\coHavoc{x}$ to represent limits.

Third, we encoded a quantitative version of 
\emph{$k$-induction} (for proving upper bounds) -- an established verification technique (cf. \cite{sheeranCheckingSafetyProperties2000}). 
The encodings are based on latticed $k$-induction~\cite{batzLatticedKInductionApplication2021}, a generalization of $k$-induction to arbitrary complete lattices.
After encoding $k$-induction for upper bounds on $\symWp$, we benefited from the duality of \HeyVL statements: 
we obtained a dual encoding for lower bounds on $\symWlp$ that has, to our knowledge, not been implemented before.
Furthermore, we encoded an advanced proof rule for lower bounds on expected values by \citet{harkAimingLowHarder2019}.
In contrast to the above rules, this rule is based on stochastic processes, particularly the Optional Stopping Theorem.
Using our encoding, we automated the main examples in \cite{harkAimingLowHarder2019}.
\paragraph{Expected Runtimes}
%
To analyze the performance of randomized algorithms, we encoded the expected runtime calculus by~\citet{kaminskiWeakestPreconditionReasoning2016,kaminskiWeakestPreconditionReasoning2018}
and its recent extension to amortized analysis~\cite{batzCalculusAmortizedExpected2023}.
Although reasoning about expected runtimes of loops involves some subtleties, we could adapt our \HeyVL encodings for expected values 
by inserting $\symTick$ statements.
We encoded and automated examples from \cite{kaminskiWeakestPreconditionReasoning2016,kaminskiWeakestPreconditionReasoning2018} and \cite{ngoBoundedExpectationsResource2018}.
%

%
\paragraph{Almost-Sure Termination (AST)}
%
\citet{mciverNewProofRule2018} proposed a proof rule for almost-sure termination -- does a probabilistic program terminate with probability one? 
The rule is based on a parametric martingale that must satisfy four conditions, which we encoded in separate \HeyVL (co)procedures.
We automated the verification of their examples, including the one in  \Cref{fig:ast-rule}.
%

%
\emph{Positive Almost-Sure Termination (PAST).}
PAST is a stronger notion than almost-sure termination, which requires a program's expected runtime to be finite.
We can apply our \HeyVL encodings for upper bounding expected runtimes to prove PAST.
Moreover, we encoded a dedicated proof rule for PAST by \citet{chakarovProbabilisticProgramAnalysis2013} based on martingales and concentration bounds.

\section{Implementation}\label{sec:implementation}
%
We first describe user-defined types and functions by means of \emph{domain declarations} in \Cref{sec:domain-decl}. We then describe our tool \tool{Caesar} alongside with empirical results validating the feasibility of our deductive verification infrastructure for the automated verification of probabilistic programs.

\subsection{Domain Declarations}
\label{sec:domain-decl}

Recall from \Cref{sec:heylo} that we assume all type- and function symbols to be interpreted. In practice, we support custom first-order theories via \emph{domain declarations} as is standard in classical deductive verification infrastructures \cite{viper}.
 A domain declaration introduces a new type symbol alongside with a set of typed function symbols and first-order formulae (called \emph{axioms}) characterizing feasible interpretations of the type- and function symbols.

Consider the harmonic numbers --- often required for, e.g., expected runtime analysis --- as an example. The $n$-th harmonic number is given by $H_n = \sum_{k=1}^n \frac{1}{k}$. To enable reasoning about verification problems involving the harmonic numbers, we introduce the following domain declaration:
\begin{align*}
    \texttt{domain}~HarmonicNums~\{\qquad &\quad \texttt{func}~H(\varandtype{n}{\Nats}){:}~ \PosReals \\
    &\quad \texttt{axiom}~h_0~H(0) = 0 \\
    &\quad \texttt{axiom}~h_n~\forall \typeof{n}{\Nats}.~H(n+1) = H(n) + \nicefrac{1}{n+1}
    \qquad \}
\end{align*}
%
%
$HarmonicNums$ introduces a new function symbol $\typeof{H}{\Nats \to \PosReals}$ and two axioms $h_0$ and $h_n$ characterizing feasible interpretations of $H$ recursively. Other non-linear functions such as exponential functions (e.g., $\ohfiveExp{n}$ from \Cref{sec:encodings:park}) as well as algebraic data types can be defined in a similar way (see, e.g., \cite{viperappendix}). In our implementation, validity of verification conditions --- inequalities between \HeyLo formulae --- is defined \emph{modulo} validity of all user-provided axioms.


\subsection{The Verifier \tool{\textbf{Caesar}}}
\label{sec:caesar}

We have implemented \HeyVL in our tool \tool{Caesar}\footnote{All tools and benchmarks are available as open-source software at \url{https://github.com/moves-rwth/caesar}.} which consists of approximately\ 10k lines of Rust code. \tool{Caesar} takes as input a \HeyVL program $C$ and a set of domain declarations (cf. \Cref{sec:domain-decl}). It then generates all verification conditions described by $C$, i.e, inequalities between \HeyLo formulae of the form $\hla \heyloleq \vc{S}(\hlb)$ or $\hla \heylogeq \vc{S}(\hlb)$, and translates these verification conditions to a Satisfiability Modulo Theories (SMT) query. Our SMT back end is \tool{z3}~\cite{demouraZ3EfficientSMT2008}. Since the translation to SMT can involve undecidable theories, \tool{Caesar} might return \emph{unknown}. Otherwise, \tool{Caesar} either returns \emph{verified} or \emph{not verified}. In the latter case, \tool{z3} often reports a counterexample state witnessing the violation of one of the verification conditions, which helps, e.g., debugging loop invariants.

Moreover, we have implemented a \emph{prototypical front-end} that translates (numeric) $\pGCL$ programs and their specifications to \HeyVL, and invokes \tool{Caesar} for automated verification.
Currently, it supports all techniques from \Cref{tbl:case-studies-overview} targeting loops.


\emph{SMT Encodings and Optimizations.} 
We translate validity of inequalities between \HeyLo to SMT following the semantics of formulae from \Cref{fig:heylo-semantics}. 

To encode the sort $\PosRealsInf$, we evaluated to two options, which are both supported by our implementation. The first option represents every number of sort $\PosRealsInf$ as a pair $(r, \textit{isInfty})$, where $r$ is a real number and $\textit{isInfty}$ is a Boolean flag that is true if and only if the represented number is equal to $\infty$. We add constraints $r \geq 0$ to ensure that $r$ is non-negative. All operations on $\PosRealsInf$ are then defined over such pairs. For example, the addition $(r_1, \textit{isInfty}_1) + (r_2, \textit{isInfty}_2)$ is defined as $(r_1 + r_2, \textit{isInfty}_1 \vee \textit{isInfty}_2)$. For multiplication, we ensure that $0 \cdot \infty = \infty$ -- a common assumption in probability theory. The second option leverages Z3-specific data type declarations to specify values that are either infinite or non-negative reals. We observed that the first option performs better overall and thus use it by default.

The $\Inf$- and $\Sup$ quantifiers are translated using the textbook definition of infima and suprema over $\PosRealsInf$, but are eliminated whenever possible using that for $A \subseteq \PosRealsInf$ and $r \in \PosRealsInf$, we have
\[
\sup A \leq r \quad\text{iff}\quad \forall a \in A \colon a\leq r
\qquad\quad\text{and dually}\qquad\quad
r\leq \inf A \quad\text{iff}\quad \forall a \in A \colon r \leq a~.
\]
Finally, we simplify sub-formulae by, e.g., rewriting $\embed{\bexpr} \sqcap \hlb$ to $0$ if $\bexpr$ is unsatisfiable.

\paragraph{Benchmarks} 
To validate whether our implementation is capable of verifying interesting quantitative properties of probabilistic programs, we have considered various verification problems taken from the literature. These benchmarks involve unbounded probabilistic loops or recursion and include quantitative correctness properties of communication protocols \cite{DBLP:conf/tacas/DArgenioKRT97,DBLP:conf/types/HelminkSV93} and randomised algorithms \cite{DBLP:journals/corr/abs-1304-1916,ProbabilisticGuardedCommands2005,kushilevitzRandomizedMutualExclusion1992}, bounds on expected runtimes of stochastic processes \cite{ngoBoundedExpectationsResource2018,kaminskiExpectedRuntimeAnalysis2020,kaminskiWeakestPreconditionReasoning2018}, proofs of \emph{positive} almost-sure termination \cite{chakarovProbabilisticProgramAnalysis2013} and proofs of almost-sure termination for the case studies provided in \cite{mciverNewProofRule2018}. For each of these benchmarks, 
\ifdefined\WITHAPPENDIX
we apply the \HeyVL encodings provided in \Cref{sec:encodings} and \cref{app:encodings},
\else
we wrote \HeyVL encodings, including the ones in \Cref{sec:encodings},
\fi 
and cover all verification techniques from \Cref{tbl:case-studies-overview}. 
    
\Cref{fig:benchmarks} summarizes the results of our benchmarks. For each benchmark, it provides the benchmark name, the verification problem, the encoded techniques (cf.~\cref{tbl:case-studies-overview}), the lines of HeyVL code (without comments), notable features, and running time.
For the running time, we also provide the shares of pruning. i.e.\ simplification of sub-formulae, and the final SAT check.
\ifdefined\WITHAPPENDIX
\Cref{tbl:case-studies-overview} together with the column ``Problem'' provides pointers to each benchmark's source and encoding.
\else 
Details about each benchmark's source and encoding are found in the technical report.
\fi
For latticed $k$-induction, we indicate the value of $k$ that was used for the encoding.
Benchmarks that use exponential functions (e.g.\ rabin, zeroconf) or harmonic numbers (e.g.\ ast) are marked with F1.
Benchmarks that use multiple possibly mixed (co)procedures are marked with F2.
One example encodes verification of nested loops (feature F3).

The size of our benchmarks ranges from 19-224 lines of \HeyVL code. 
85\% 
of our benchmarks (those shaded in gray) have been verified with 
our front-end; the remaining encodings are handcrafted.
All benchmark files are available as part of our artifact.

\paragraph{Evaluation}

 On average, \tool{Caesar} needs 0.2 seconds to verify a \HeyVL program, with a maximum of 2.3 seconds.
Most benchmarks verify within less than a second.
The brp3 benchmark times out because of the large nested branching resulting from the exponential size of the $k$-induction encoding with $k=23$.
 
  We conclude that \tool{Caesar} is capable of verifying interesting quantitative verification problems of probabilistic programs taken from the literature. Moreover, we conclude that modern SMT solvers are a suitable back-end besides the fact that our benchmarks often require reasoning about highly non-linear functions. This is due to the fact that 
 it often suffices to (un)fold recursive definitions of, e.g., the harmonic numbers, finitely many times. 
 Finally, our benchmarks demonstrate that our verification infrastructure provides a unifying interface for \emph{encoding and solving} various kinds of probabilistic verification problems in an automated manner.

\begin{table}
    \centering

    \caption{Benchmarks. 
        Rows shaded in gray indicate \HeyVL examples automatically generated from pGCL code with annotations using our frontend.
        Timeout (TO) was set to 10 seconds. 
        Verification techniques correspond to those presented in \cref{tbl:case-studies-overview}. 
        Lines of HeyVL code (LOC) are counted without comments. 
        Features: user-defined uninterpreted functions (F1), multiple (co)procedures (F2), nested loops (F3).}
    \label{fig:benchmarks}

    \adjustbox{max width=\textwidth}{%
        \begin{tabular}{l|l|l|r|l|r|r|r}
            Name                                             & Problem & Verification Technique                     & LOC & Features & Total (s) & Pruning & SAT    \\ \hline\hline
            \rowcolor{gray!30} rabin                       & LPROB   & $\symWlp$ + Park induction                 & 43  & F1, F3   & $0.33$    & $3\%$   & $96\%$ \\
            \rowcolor{gray!30} unif\_gen1                  & LPROB   & $\symWlp$ + Latticed $k$-induction ($k=2$) & 61  &          & $0.02$    & $52\%$  & $35\%$ \\
            \rowcolor{gray!30} unif\_gen2                  & LPROB   & $\symWlp$ + Latticed $k$-induction ($k=3$) & 82  &          & $0.05$    & $68\%$  & $25\%$ \\
            \rowcolor{gray!30} unif\_gen3                  & LPROB   & $\symWlp$ + Latticed $k$-induction ($k=3$) & 82  &          & $0.05$    & $71\%$  & $22\%$ \\
            \rowcolor{gray!30} unif\_gen4                  & LPROB   & $\symWlp$ + Latticed $k$-induction ($k=5$) & 124 &          & $0.86$    & $90\%$  & $7\%$  \\
            \rowcolor{gray!30} rabin1                      & LPROB   & $\symWlp$ + Park induction                 & 36  &          & $0.01$    & $45\%$  & $40\%$ \\
            \rowcolor{gray!30} rabin2                      & LPROB   & $\symWlp$ + Latticed $k$-induction ($k=5$) & 116 &          & $0.08$    & $27\%$  & $67\%$ \\
            chain                                            & UEXP    & $\symWp$ + Park induction                  & 28  & F1       & $0.03$    & $24\%$  & $66\%$ \\
            ohfive                                           & UEXP    & $\symWp$ + Park induction                  & 34  & F1, F3   & $0.02$    & $33\%$  & $56\%$ \\
            \rowcolor{gray!30} brp1                        & UEXP    & $\symWp$ + Latticed $k$-induction ($k=5$)  & 72  &          & $0.03$    & $45\%$  & $42\%$ \\
            \rowcolor{gray!30} brp2                        & UEXP    & $\symWp$ + Latticed $k$-induction ($k=11$) & 138 &          & $0.46$    & $70\%$  & $16\%$ \\
            \rowcolor{gray!30} brp3                        & UEXP    & $\symWp$ + Latticed $k$-induction ($k=23$) & 270 &          & TO        &         &        \\
            \rowcolor{gray!30} geo1                        & UEXP    & $\symWp$ + Latticed $k$-induction ($k=2$)  & 32  &          & $0.02$    & $44\%$  & $41\%$ \\
            geo (recursive)                                  & UEXP    & $\symWp$ + Park induction                  & 19  &          & $0.02$    & $43\%$  & $42\%$ \\
            \rowcolor{gray!30} rabin1                      & UEXP    & $\symWp$ + Park induction                  & 36  &          & $0.02$    & $44\%$  & $73\%$ \\
            \rowcolor{gray!30} rabin2                      & UEXP    & $\symWp$ + Latticed $k$-induction ($k=5$)  & 116 &          & $0.12$    & $22\%$  & $46\%$ \\
            \rowcolor{gray!30} unif\_gen1                  & UEXP    & $\symWp$ + Latticed $k$-induction ($k=2$)  & 61  &          & $0.03$    & $44\%$  & $46\%$ \\
            \rowcolor{gray!30} unif\_gen2                  & UEXP    & $\symWp$ + Latticed $k$-induction ($k=3$)  & 82  &          & $0.11$    & $41\%$  & $53\%$ \\
            \rowcolor{gray!30} unif\_gen3                  & UEXP    & $\symWp$ + Latticed $k$-induction ($k=3$)  & 82  &          & $0.10$    & $41\%$  & $53\%$ \\
            \rowcolor{gray!30} unif\_gen4                  & UEXP    & $\symWp$ + Latticed $k$-induction ($k=5$)  & 124 &          & $2.26$    & $47\%$  & $49\%$ \\
            zeroconf                                         & UEXP    & $\symWp$ + Park induction                  & 43  & F1, F2   & $0.03$    & $36\%$  & $49\%$ \\
            \rowcolor{gray!30} ost                         & LEXP    & $\symWp$ + Optional Stopping Theorem       & 93  & F2       & $0.07$    & $33\%$  & $51\%$ \\
            die                                              & CEXP    & conditional $\symWp$                       & 22  & F2       & $0.02$    & $17\%$  & $63\%$ \\
            \rowcolor{gray!30} 2drwalk                     & UERT    & $\symErt$ + Park induction                 & 224 &          & $0.02$    & $41\%$  & $44\%$ \\
            \rowcolor{gray!30} {\small{}bayesian\_network} & UERT    & $\symErt$ + Park induction                 & 107 &          & $0.02$    & $45\%$  & $40\%$ \\
            \rowcolor{gray!30} C4b\_t303                   & UERT    & $\symErt$ + Latticed $k$-induction ($k=3$) & 73  &          & $0.03$    & $29\%$  & $58\%$ \\
            \rowcolor{gray!30} condand                     & UERT    & $\symErt$ + Park induction                 & 24  &          & $0.02$    & $42\%$  & $42\%$ \\
            \rowcolor{gray!30} fcall                       & UERT    & $\symErt$ + Park induction                 & 26  &          & $0.02$    & $52\%$  & $44\%$ \\
            \rowcolor{gray!30} hyper                       & UERT    & $\symErt$ + Park induction                 & 31  &          & $0.02$    & $41\%$  & $44\%$ \\
            \rowcolor{gray!30} linear01                    & UERT    & $\symErt$ + Park induction                 & 23  &          & $0.02$    & $42\%$  & $43\%$ \\
            \rowcolor{gray!30} prdwalk                     & UERT    & $\symErt$ + Park induction                 & 62  &          & $0.02$    & $56\%$  & $31\%$ \\
            \rowcolor{gray!30} prspeed                     & UERT    & $\symErt$ + Park induction                 & 45  &          & $0.02$    & $41\%$  & $45\%$ \\
            \rowcolor{gray!30} rdspeed                     & UERT    & $\symErt$ + Park induction                 & 48  &          & $0.02$    & $38\%$  & $47\%$ \\
            \rowcolor{gray!30} rdwalk                      & UERT    & $\symErt$ + Park induction                 & 24  &          & $0.02$    & $42\%$  & $43\%$ \\
            \rowcolor{gray!30} sprdwalk                    & UERT    & $\symErt$ + Park induction                 & 26  &          & $0.02$    & $42\%$  & $43\%$ \\
            \rowcolor{gray!30} omega                       & LERT    & $\symErt$ + $\omega$-invariants            & 33  & F2       & $0.02$    & $42\%$  & $47\%$ \\
            \rowcolor{gray!30} ast1                        & AST     & parametric super-martingale rule           & 67  & F2       & $0.06$    & $33\%$  & $49\%$ \\
            \rowcolor{gray!30} ast2                        & AST     & parametric super-martingale rule           & 79  & F2       & $0.05$    & $38\%$  & $50\%$ \\
            ast3                                             & AST     & parametric super-martingale rule           & 65  & F1, F2   & $1.94$    & $1\%$   & $99\%$ \\
            \rowcolor{gray!30} ast4                        & AST     & parametric super-martingale rule           & 55  & F2       & $0.05$    & $33\%$  & $52\%$ \\
            \rowcolor{gray!30} past                        & PAST    & program analysis with martingales          & 26  & F2       & $0.04$    & $40\%$  & $46\%$ \\
        \end{tabular}}
\end{table}

\section{Related Work}\label{sec:related-work}
We focus on automated verification techniques for probabilistic programs and deductive verification infrastructures for non-probabilistic programs; encoded proof rules have been discussed in \Cref{sec:encodings}.
%

%
\emph{Probabilistic Program Verification.}
Expectation-based probabilistic program verification has been pioneered by \citet{kozenProbabilisticPDL1983,kozenProbabilisticPDL1985} and McIver \& Morgan \cite{mciverAbstractionRefinementProof2005}. 
 \citet{ProbabilisticGuardedCommands2005} formalised the $\textsf{w(l)p}$ calculus in \emph{Isabelle/HOL} \cite{isbabelle}. They focus on the calculus' meta theory and provide a verification-condition generator for proving partial correctness. 
 \citet{hoelzlert} implemented the meta theory of \citet{kaminskiWeakestPreconditionReasoning2016}'s $\symErt$ calculus in \emph{Isabelle/HOL} and verified bounds on expected runtimes of randomised algorithms. 
 We focus on unifying verification techniques in a single infrastructure.

\emph{Easycrypt} \cite{easycrypt1,easycrypt2} is a theorem prover for verifying cryptographic protocols, featuring libraries for data structures and algebraic reasoning. \emph{Ellora} \cite{bartheAssertionBasedProgramLogic2018} is an assertion-based program logic for probabilistic programs implemented in \emph{Easycrypt}, taking benefit from \emph{Easycrypt}'s features. 
Their specifications are predicates over (sub)distributions instead of expectations. While \emph{Ellora} employs \emph{specialised} proof rules for loops and does not support non-determinism or recursion, thus being more restrictive than \HeyVL in this regard, \emph{Ellora} embeds, e.g., logics for reasoning about probabilistic independence. As stated in \cite{bartheAssertionBasedProgramLogic2018}, an in-depth comparison of assertion- and expectation-based approaches is difficult. 
  \citet{pardoSpecificationLogicPrograms2022} propose a propositional dynamic logic for $\pGCL$ featuring reasoning about convergence of estimators. Their logic is not automated yet.

\emph{Fully automatic} analyses of probabilistic programs are limited to specific properties, e.g. bounding expected runtimes or proving (positive) almost-sure termination 
\cite{koat,leutgebAutomatedExpectedAmortised2022,ngoBoundedExpectationsResource2018,batzHowLongBayesian2018,batzProbabilisticProgramVerification2022,amber,automatedterm,termpos,stochterm,termnondet,termcompletesound,cegisupast,modularcost}.
We might also benefit from invariant synthesis approaches~\cite{batzProbabilisticProgramVerification2022,batzPrIC3PropertyDirected2020,symbolicexec,invsys1,invsys2,invsys3,invsys4,chakarovProbabilisticProgramAnalysis2013,rankingsuper,invsys7,moments1,moment3}. 

\emph{Deductive Verification Infrastructures.}
\tool{Boogie}~\cite{leinoThisBoogie2008} and \tool{Why3}~\cite{DBLP:conf/esop/FilliatreP13} are prominent examples of IVLs for non-probabilistic programs that lie at the foundation of various modern verifiers, such as \tool{Dafny}~\cite{leinoDafnyAutomaticProgram2010} and \tool{Frama-C}~\cite{DBLP:journals/fac/KirchnerKPSY15}.
Neither of these IVLs targets reasoning about expectations or upper bounds (aka necessary preconditions~\cite{cousotPreconditionInferenceIntermittent2011}).
For example, \tool{Boogie}'s statements are specific to verifying lower bounds on Boolean predicates. 
Evaluating whether our implementation could benefit from encoding \HeyLo formulae into \tool{Why3} is interesting future work.

\section{Conclusion and Future Work}\label{sec:conclusion}

We have presented a verification infrastructure for probabilistic programs based on a novel quantitative intermediate verification language that aids researchers with prototyping and automating their proof rules.
As future work, we plan to automate more rules and explore the relationship between our language, particularly its dual operators, and (partial) incorrectness logic~\cite{ohearnIncorrectnessLogic2020,zhangQuantitativeStrongestPost2022}. A further promising direction is to generalize our infrastructure for the verification of probabilistic pointer programs~\cite{batzFoundationsEntailmentChecking2022,batzQuantitativeSeparationLogic2019} and weighted programs~\cite{batzWeightedProgrammingProgramming2022}.

Furthermore, establishing a formal ``ground truth'' for our intermediate language \HeyVL in terms of an operational semantics that assigns precise meaning to quantitative Hoare triples, which we admittedly introduced ad-hoc, is important future work.
However, defining an operational semantics that yields a \emph{pleasant forward-reading} intuition for all statements in our intermediate language \HeyVL appears non-trivial. In particular, we are unaware of a semantics for (co)assume statements that is independent of the semantics of the remaining program. We believe that stochastic games might be an adequate formalism but the details have not been worked out yet.

\section*{Data-Availability Statement}

The tool \tool{Caesar}, our prototypical front-end for \pGCL programs, as well as our benchmarks that we submitted for the artifact evaluation are available~\cite{philipp_schroer_2023_8146987}.
We also develop our tools as open-source software at \url{https://github.com/moves-rwth/caesar}.

\begin{acks}
    This work was partially supported by the Digital Research Centre Denmark (DIREC), the ERC Advanced Research Grant FRAPPANT (grant no. 787914), and the 2022 WhatsApp Privacy Aware Program Analysis Research Award.
\end{acks}


\bibliographystyle{ACM-Reference-Format}
\bibliography{literature}

\ifdefined\WITHAPPENDIX
\clearpage

\appendix
\section{Omitted Proofs: \HeyLo}
\label{sec:app-proofs-heylo}
\begingroup\allowdisplaybreaks

\thmHeyloAdjoint*

\begin{proof}
    Let $\hla,\hlb,\hlc \in \HeyLo$. For the adjointness of $\impl$ and $\sqcap$, consider the following:
    \begin{align*}
        &\hla \sqcap \hlb \heyloleq \hlc \\
        &\qiff \forall \State \in \States.~ \min \Set{ \evalState{\hla},~ \evalState{\hlb} } \leq \evalState{\hlc} \tag{\Cref{fig:heylo-semantics}} \\
        &\qiff \forall \State \in \States.~ \evalState{\hla} \leq \evalState{\hlc} \lor \evalState{\hlb} \leq \evalState{\hlc} \\
        &\qiff \forall \State \in \States.~ \ifThenElse{\evalState{\hlb} \leq \evalState{\hlc}}{\exprTrue}{\evalState{\hla} \leq \evalState{\hlc}} \\
        &\qiff \forall \State \in \States.~ \evalState{\hla} \leq \ifThenElse{\evalState{\hlb} \leq \evalState{\hlc}}{\infty}{\evalState{\hlc}} \\
        &\qiff \hla \heyloleq \hlb \impl \hlc. \tag{\Cref{fig:heylo-semantics}}
    \end{align*}
    The proof of adjointness of the coimplication $\coimpl$ and $\sqcup$ is analogous.
\end{proof}

\thmHeyloDeduction*

\begin{proof}
    Let $\hla,\hlb \in \HeyLo$. Then,
    \begin{align*}
        &\hla \heyloleq \hlb \\
            &\qiff \forall \State \in \States.~ \evalState{\hla} \leq \evalState{\hlb} \tag{definition $\heyloleq$} \\
            &\qimplies \forall \State \in \States.~ \evalState{\hla \impl \hlb} = \infty \tag{\Cref{fig:heylo-semantics}} \\
            &\qiff \hla \impl \hlb \text{ is valid} \tag{definition of validity} \\[1em]
        & \text{and} \\[1em]
        &\hla \impl \hlb \text{ is valid} \\
            &\qiff \forall \State \in \States.~ \evalState{\hla \impl \hlb} = \infty \tag{definition of validity} \\
            &\qimplies \forall \State \in \States.~ \ifThenElse{\evalState{\hla} \leq \evalState{\hlb}}{\infty = \infty}{\evalState{\hlb} = \infty} \tag{\Cref{fig:heylo-semantics}} \\
            &\qimplies \hla \heyloleq \hlb.
    \end{align*}
    The proof of the second equivalence is analogous.
\end{proof}

\endgroup
\section{Omitted Proofs: \HeyVL}
\label{sec:app-proofs-heyvl}
\begingroup\allowdisplaybreaks

\subsection{Properties of HeyVL}

Recall \cref{thm:monotonicity} (\cpageref{thm:monotonicity}):

\thmHeyvlMonotonicity*

\begin{proof}
    Let $S \in \HeyVL$.
    We do a proof by induction over the structure of $\stmt$ to show
    \[
        \forall \hla,\hla' \in \HeyLo.\quad \hla \heyloleq \hla' \qimplies \vc{\stmt}(\hla) \heyloleq \vc{\stmt}(\hla')~.
    \]

    For the base cases, let $\hla,\hla' \in \HeyLo$ such that $\hla \heyloleq \hla'$.
    \begin{itemize}[nosep]
        \proofcase{$S = \stmtDeclInit{x}{\typevar}{\mu}}$
            \begin{align*}
                &\vc{\stmtDeclInit{x}{\typevar}{\mu}}(\hla) \\
                &= p_1 \cdot \hla\substBy{x}{\termvar_1} + \ldots + p_n \cdot \hla\substBy{x}{\termvar_n} \tag{definition} \\
                &\heyloleq p_1 \cdot \hla'\substBy{x}{\termvar_1} + \ldots + p_n \cdot \hla'\substBy{x}{\termvar_n} \tag{$\hla \heyloleq \hla'$} \\
                &= \vc{\stmtDeclInit{x}{\typevar}{\mu}}(\hla') \tag{definition}
            \end{align*}
        \proofcase{$S = \stmtTick{a}$}
            \begin{align*}
                &\vc{\stmtTick{a}}(\hla) \\
                &= \hla + a \tag{definition pf $\symTick$} \\
                &\heyloleq \hla' + a \tag{$\hla \heyloleq \hla'$} \\
                &= \vc{\stmtTick{a}}(\hla') \tag{definition of $\symTick$}
            \end{align*}
        \proofcase{$S = \downAssert{\hlb}$}
            \begin{align*}
                &\vc{\downAssert{\hlb}}(\hla) \\
                &= \hlb \sqcap \hla \tag{definition of $\symAssert$}  \\
                &\heyloleq \hlb \sqcap \hla' \tag{$\hla \heyloleq \hla'$} \\
                &= \vc{\downAssert{\hlb}}(\hla') \tag{definition of $\symAssert$}
            \end{align*}
        \proofcase{$S = \downAssume{\hlb}$} For all $\State \in \States$,
            \begin{align*}
                &\vc{\downAssume{\hlb}}(\hla)(\State) \\
                &= \evalState{\hlb \impl \hla} \tag{definition of $\symAssume$} \\
                &= \ifThenElse{\evalState{\hlb} \leq \evalState{\hla}}{\infty}{\evalState{\hla}} \tag{definition of $\impl$}\\
                &\leq \ifThenElse{\evalState{\hlb} \leq \evalState{\hla}}{\infty}{\evalState{\hla'}} \tag{$\evalState{\hla} \leq \evalState{\hla'}$} \\
                &\leq \begin{cases}
                        \infty, & \text{if } \evalState{\hlb} \leq \evalState{\hla}\\
                        \evalState{\hla'}, & \text{if } \evalState{\hla} < \evalState{\hlb} \leq \evalState{\hla'}\\
                        \evalState{\hla'}, & \text{otherwise} 
                    \end{cases} \tag{case distinction} \\
                &\leq \ifThenElse{\evalState{\hlb} \leq \evalState{\hla'}}{\infty}{\evalState{\hla'}} \tag{$\evalState{\hla'} \leq \infty$} \\
                &= \evalState{\hlb \impl \hla'} \tag{definition of $\impl$}\\
                &= \vc{\downAssume{\hlb}}(\hla') \tag{definition of $\symAssume$}
            \end{align*}
        \proofcase{$S = \downHavoc{x}$}
            \begin{align*}
                &\vc{\downHavoc{x}}(\hla) \\
                &= \inf \Set{ \hla\substBy{x}{v} | v \in \Vals } \tag{definition of $\symHavoc$} \\
                &\heyloleq \inf \Set{ \hla'\substBy{x}{v} | v \in \Vals } \tag{$\hla \heyloleq \hla'$} \\
                &= \vc{\downHavoc{x}}(\hla')
            \end{align*}
        \proofcase{$S = \stmtValidate$} For all $\State \in \States$,
            \begin{align*}
                &\vc{\stmtValidate}(\hla)(\State) \\
                &= \evalState{\heylovalidate{\hla}} \tag{definition of $\symValidate$} \\
                &= \ifThenElse{\evalState{\hla} = \infty}{\infty}{0} \tag{definition of $\triangle$} \\
                &\leq \ifThenElse{\evalState{\hla'} = \infty}{\infty}{0} \tag{$\evalState{\hla} \leq \evalState{\hla'}$} \\
                &= \evalState{\heylovalidate{\hla'}} \tag{definition of $\triangle$} \\
                &= \vc{\stmtValidate}(\hla')(\State) \tag{definition of $\symValidate$}
            \end{align*}
    \end{itemize}

    The $\symUp$ cases are dual, but we show the $\symUp\symAssume$ case for illustration:
    \begin{itemize}[nosep]
        \proofcase{$S = \upAssume{\hlb}$} For all $\State \in \States$,
        \begin{align*}
            &\vc{\upAssume{\hlb}}(\hla)(\State) \\
            &= \evalState{\hlb \coimpl \hla} \tag{definition of $\symUp\symAssume$} \\
            &= \ifThenElse{\evalState{\hlb} \geq \evalState{\hla}}{0}{\evalState{\hla}} \tag{definition of $\coimpl$} \\
            &\leq \ifThenElse{\evalState{\hlb} \geq \evalState{\hla}}{0}{\evalState{\hla'}} \tag{$\evalState{\hla} \leq \evalState{\hla'}$} \\
            &\leq \begin{cases}
                0, & \text{if } \evalState{\hlb} \geq \evalState{\hla'}\\
                0, & \text{if } \evalState{\hla'} > \evalState{\hlb} \geq \evalState{\hla}\\
                \evalState{\hla'}, & \text{otherwise} 
            \end{cases} \tag{case distinction} \\
            &\leq \ifThenElse{\evalState{\hlb} \geq \evalState{\hla'}}{0}{\evalState{\hla'}} \tag{$\evalState{\hla'} \geq 0$} \\
            &= \evalState{\hlb \coimpl \hla'} \tag{definition of $\coimpl$ }\\
            &= \vc{\upAssume{\hlb}}(\hla') \tag{definition of $\symUp\symAssume$}
        \end{align*}
    \end{itemize}

    Now assume that the induction hypothesis holds for arbitrary but fixed $\stmt_1,\stmt_2 \in \HeyVL$.

    Induction step:
    \begin{itemize}[nosep]
        \proofcase{$S = \stmtAsgn{x_1,\ldots,x_n}{P(e_1,\ldots,e_m)}$} According to \Cref{sec:heyvl_proc_calls}, (co)procedure calls are encoded as a sequential composition of the atomic $\symAssert$, $\symHavoc$, $\symValidate$, and $\symAssume$ (co)statements and are thus covered by the following case $S = \stmtSeq{\stmt_1}{\stmt_2}$.
        \proofcase{$S = \stmtSeq{\stmt_1}{\stmt_2}$} \\
            Let $\hla,\hlb \in \HeyLo$ such that $\hla \heyloleq \hla'$.
            We use the induction hypothesis for $\stmt_2$:
            \[
                \vc{\stmt_2}(\hla) \heyloleq \vc{\stmt_2}(\hla')~.
            \]
            By the induction hypothesis for $S_1$:
            \[
                \vc{\stmt_1}(\vc{\stmt_2}(\hla)) \heyloleq \vc{\stmt_1}(\vc{\stmt_2}(\hla'))~.
            \]
            Applying definitions, we get:
            \begin{align*}
                \vc{\stmtSeq{\stmt_1}{\stmt_2}}(\hla)
                &= \vc{\stmt_1}(\vc{\stmt_2}(\hla)) \tag{definition of $\symSemi$} \\
                &\heyloleq \vc{\stmt_1}(\vc{\stmt_2}(\hla')) \tag{I.H. on $S_1$ and $S_2$} \\
                &= \vc{\stmtSeq{\stmt_1}{\stmt_2}}(\hla') \tag{definition of $\symSemi$}
            \end{align*}
        \proofcase{$S = \stmtDemonic{\stmt_1}{\stmt_2}$}
            \begin{align*}
                \vc{\stmtSeq{\stmt_1}{\stmt_2}}(\hla) \\
                &= \vc{\stmt_1}(\hla) \sqcap \vc{\stmt_2}(\hla) \tag{definition of $\symSemi$} \\
                &\heyloleq \vc{\stmt_1}(\hla') \sqcap \vc{\stmt_2}(\hla) \tag{induction hypothesis} \\
                &\heyloleq \vc{\stmt_1}(\hla') \sqcap \vc{\stmt_2}(\hla') \tag{induction hypothesis} \\
                &= \vc{\stmtSeq{\stmt_1}{\stmt_2}}(\hla') \tag{definition of $\symSemi$}
            \end{align*}
        \proofcase{$S = \stmtAngelic{\stmt_1}{\stmt_2}$: Analogous to the $\symDemonic$ case}
    \end{itemize}

    By the principle of structual induction, \cref{thm:monotonicity} holds.
\end{proof}

Recall \cref{thm:heyvl-conservativity} (\cpageref{thm:heyvl-conservativity}):

\thmHeyvlConservativity*

\begin{proof}
    Let $C$ be a program in the Boolean IVL of~\cite{mullerBuildingDeductiveProgram2019}.
    Let $B \in \mathbb{P}$ be a predicate.
    We prove 
    \[
        \vc{\overline{C}}(\embed{B}) = \embed{\vcB{C}(B)}
    \]
    by induction over the structure of $C$.

    Base cases:
    \begin{itemize}[nosep]
        \proofcase{$C = \stmtDeclInit{x}{\typevar}{e}$}
            \begin{align*}
                &\embed{\vcB{\stmtDeclInit{x}{\typevar}{e}}(B)} \\
                &= \embed{B\substBy{x}{e}} \\
                &= 1 \cdot \embed{B}\substBy{x}{e} \\
                &= \vc{\stmtDeclInit{x}{\typevar}{\mu}}(\embed{B})
            \end{align*}
        \proofcase{$C = \stmtHavoc{x}$ where $\typeof{x}{\typevar}$}
            \begin{align*}
                &\embed{\vcB{\stmtHavoc{x}}(B)} \\
                &= \embed{\forall x \in \tau.~ B} \\
                &= \iquant{\typeof{x}{\typevar}}{\embed{B}} \\
                &= \vc{\stmtHavoc{x}}(\embed{B})
            \end{align*}
        \proofcase{$C = \stmtAssert{A}$}
            \begin{align*}
                &\embed{\vcB{\stmtAssert{A}}(B)} \\
                &= \embed{A \land B} \\
                &= \embed{A} \sqcap \embed{B} \\
                &= \vc{\stmtAssert{\embed{A}}}(\embed{B})
            \end{align*}
        \proofcase{$C = \stmtAssume{A}$}
            \begin{align*}
                &\embed{\vcB{\stmtAssume{A}}(B)} \\
                &= \embed{A \Rightarrow B} \\
                &= \embed{A} \impl \embed{B} \\
                &= \vc{\stmtAssume{\embed{A}}}(\embed{B})
            \end{align*}
    \end{itemize}

    Now assume that the induction hypothesis holds for arbitrary, but fixed $C_1,C_2$ in the Boolean IVL.
    Let $\overline{C_1}, \overline{C_2} \in \HeyVL$ be obtained from $C_1$ and $C_2$ by replacement of $\stmtAssert{A}$ and $\stmtAssume{A}$ by $\stmtAssert{\embed{A}}$ and $\stmtAssume{\embed{A}}$, respectively.

    Induction step:
    \begin{itemize}[nosep]
        \proofcase{$C = \stmtSeq{C_1}{C_2}$}
            \begin{align*}
                &\embed{\vcB{\stmtSeq{C_1}{C_2}}(B)} \\
                &= \embed{\vcB{C_1}(\vcB{C_2}(B))} \\
                &= \vc{\overline{C_1}}(\embed{\vcB{C_2}(B)}) \tag{induction hypothesis} \\
                &= \vc{\overline{C_1}}(\vc{\overline{C_2}}(\embed{B})) \tag{induction hypothesis} \\
                &= \vc{\overline{\stmtSeq{C_1}{C_2}}}(\embed{B})
            \end{align*}
        \proofcase{$C = \stmtDemonic{C_1}{C_2}$}
            \begin{align*}
                &\embed{\vcB{\stmtDemonic{C_1}{C_2}}(B)} \\
                &= \embed{\vcB{C_1}(B) \land \vcB{C_2}(B)} \\
                &= \embed{\vcB{C_1}(B)} \sqcap \embed{\vcB{C_2}(B)} \\
                &= \vc{\overline{C_1}}(\embed{B}) \sqcap \vc{\overline{C_2}}(\embed{B}) \tag{induction hypothesis} \\
                &= \vc{\stmtDemonic{\overline{C_1}}{\overline{C_2}}}(\embed{B}) \\
                &= \vc{\overline{\stmtDemonic{C_1}{C_2}}}(\embed{B})
            \end{align*}
    \end{itemize}

    By structural induction on $C$, \cref{thm:heyvl-conservativity} holds.
\end{proof}

\subsection{Soundness and Semantics of Procedure Calls}
\label{sec:app-proc-calls}

We want to encode a procedure call $\stmtAsgn{z_1,\ldots,z_n}{P(t_1,\ldots,t_n)}$ for a procedure $P$:
\begin{align*}
    &\proc{\procname}{\varandtype{x_1}{\typevar_1}, \ldots \varandtype{x_n}{\typevar_n}}{\varandtype{y_1}{\typevar_1}, \ldots \varandtype{y_m}{\typevar_m}} \\
    &\qquad\Requires{\hlc}  \\
    &\qquad\Ensures{\hlb} \\
    &\blockStart \\
    &\qquad \sstmt \\
    &\}
\end{align*}

Recall the definition of $\sstmt_\mathit{encoding}$ for the above call and procedure from \cref{sec:heyvl_proc_calls}:
\begin{align*}
    \sstmt_\mathit{encoding}\colon \qquad  &
    \stmtAssert{\hhlc}\symSemi
    \stmtHavoc{z_1}\symSemi \ldots \symSemi\stmtHavoc{z_m}\symSemi
    \stmtValidate\symSemi
    \stmtAssume{\hhlb}.
\end{align*}

\thmProcCalls*

\begin{proof}
    First, we show that for all $\hhla \in \HeyLo$ and $\State \in \States$, we have
    \[
        \vc{\sstmt_\mathit{encoding}}(\hhla)(\State) = \ifThenElseDot{\hhlb \heyloleq \hhla}{\evalState{\hhlc}}{0}
    \]

    From the definition of $\symVc$ (\cref{fig:heyvl-semantics}), it follows that
    \begin{align*}
        &\vc{\sstmt_\mathit{encoding}}(\hhla)(\State) \\ 
        &= \evalState{\hhlc \sqcap \iquant{z_1}{\ldots\iquant{z_n}{\heylovalidate{\hhlb \impl \hhla}}}} \\
        &= \evalState{\hhlc} \sqcap \evalState{\iquant{z_1}{\ldots\iquant{z_n}{{\heylovalidate{\hhlb \impl \hhla}}}}} \tag{definition of $\sqcap$} \\
        &= \evalState{\hhlc} \sqcap \inf \{ \eval{\heylovalidate{\hhlb \impl \hhla}}(\State') \mid \State' \in \States \} \tag{definition of $\Inf$} \\
    \shortintertext{Let $\exprIte{a}{b}{c}$ denote a conditional choice that evaluates to $b$ if $a$ is true and to $c$ otherwise.}
        &= \evalState{\hhlc} \sqcap \inf \{ ~\exprIte{\eval{\hhlb \impl \hhla}(\State') = \infty}{\infty}{0}~ \mid \State' \in \States \} \tag{definition of $\triangle$} \\
        &= \evalState{\hhlc} \sqcap \inf \{ ~\exprIte{\eval{\hhlb}(\State') \leq \eval{\hhla}(\State')}{\infty}{0}~ \mid \State' \in \States \} \tag{cf.~\cref{thm:adjoint}} \\
    \shortintertext{The infimum evaluates to $\infty$ iff $\eval{\hhlb}(\State') \heyloleq \eval{\hhla}(\State')$ for all $\State' \in \States$. Thus,}
        &= \evalState{\hhlc} \sqcap \exprIte{\hhlb \heyloleq \hhla}{\infty}{0} \\
        &= \ifThenElse{\hhlb \heyloleq \hhla}{\evalState{\hhlc}}{0} \tag{$\evalState{\hhlc} \sqcap \infty = \evalState{\hhlc}$}
    \end{align*}

    \noindent
    Now we show for all $\hhla \in \HeyLo$ that $\vc{\sstmt_\mathit{encoding}}(\hhla) \heyloleq \vc{\sstmt}(\hhla)$.
    Let $\hhla \in \HeyLo$.

    \noindent
    In case that $\hhlb \heyloleq \hhla$ holds, we have by monotonicity of $\symVc$ (\cref{thm:monotonicity}):
    \[
        \vc{\sstmt}(\hhlb) \heyloleq \vc{\sstmt}(\hhla)~.
    \]
    From the assumption $\hhlc \heyloleq \vc{\sstmt}(\hhlb)$ it follows that
    \[
        \hhlc \heyloleq \vc{\sstmt}(\hhla)~.
    \]
    Thus, 
    \[
        \vc{\sstmt_\mathit{encoding}} = \hhlc \heyloleq \vc{\sstmt}(\hhla)~.
    \]
    If $\hhlb \heyloleq \hhla$ does not hold, we have
    \[
        \vc{\sstmt_\mathit{encoding}}(\hhla) = 0 \heyloleq \vc{\sstmt}(\hhla)~.  
    \]
    In conclusion,
    \[
        \vc{\sstmt_\mathit{encoding}}(\hhla) \heyloleq \vc{\sstmt}(\hhla)~.
    \]
    The other claim,
    \[
        \vc{\mathit{init}\symSemi\sstmt_\mathit{encoding}\symSemi\mathit{return}}(\hla)
        \hheyloleq  
        \vc{\mathit{init}\symSemi\sstmt\symSemi\mathit{return}}(\hla)~,
    \]
    follows by the above and the definition of $\symVc$.
\end{proof}

\endgroup
\section{Proof Rule Encodings Into \HeyVL}\label{app:encodings}

This appendix section details the \HeyVL encodings mentioned in \cref{sec:encodings}.
These encodings are all implemented in our frontend that translates annotated pGCL programs to HeyVL.
We follow \cref{tbl:case-studies-overview} and present encodings for the various verification problems.
For each encoding, we first state the formal \emph{proof rule} on expectations.
Then, we specify the \emph{encoding inputs} that our frontend requires, as well as a schematic description of the \emph{encoding output}.
All encodings of loops require \HeyVL encodings of their loop bodies.
For loop-free programs, the encoding from \cref{sec:encodings:loop-free} can be used.
Furthermore, proof rule encodings from this section may be used to encode nested loops.

\pagebreak
\subsection{Loop Rule: $k$-Induction for $\symWlp$}
\label{sec:app-encodings-kind-wlp}

The \emph{$k$-induction} encoding for $\symWlp$ encodes a while loop and under-approximates its $\symWlp$ semantics.
The proof rule is a generalization of Park induction (cf.~\cref{sec:encodings:park}).
For the $k$-induction encoding, the user needs to provide a potential subinvariant $I \in \HeyLo$ and a number $k \in \mathbb{N}$ of how many times to unfold the loop.
For the loop body, we assume another under-approximating encoding is given.
If it contains loops, $k$-induction can be encoded recursively, but other encodings can be used as well.
\medskip

\noindent
\textbf{Proof Rule:}
\emph{Latticed $k$-induction}~\cite{batzLatticedKInductionApplication2021} for $\symWlp$.\footnote{In~\cite{batzLatticedKInductionApplication2021}, latticed $k$-induction is only defined for upper bounds on least fixed points. These occur e.g.\ in $\symWp$ and $\symErt$ semantics. However, the dual principle can be applied to the greatest fixed point that underlies the $\symWlp$ semantics.} \\
Let $\CC = \WHILEDO{\bexpr}{\loopbody}$ be a pGCL loop and let $\expa \in \OneBoundedExpectations$.
The \emph{$k$-induction operator} for $\symWlp$ is given by
\[
	\symDownPsi_I \colon \OneBoundedExpectations \to \OneBoundedExpectations,\quad \expb \mapsto \wlpPhi_\expa(\expb) \sqcup I~,
\]
where the \emph{loop-characteristic functional} $\wlpPhi_\expa$ with respect to post $\expa$ is defined as
\[
	\wlpPhi_\expa(\expb) = \iverson{\bexpr} \cdot \wlp{\loopbody}(\expb) + \iverson{\neg b} \cdot \expa~.
\]
Then, for $k \in \mathbb{N}$,
\[
	I \expleq \wlpPhi_\expa(\symDownPsi_I^k(I)) \quad\text{implies}\quad I \expleq \wlp{\CC}(\expa)~.
\]
\medskip

\noindent
\textbf{Encoding Input:} 
\begin{itemize}
	\item pGCL loop $\CC = \WHILEDO{\bexpr}{\loopbody}$.
	\item HeyVL statement $\downTransWlp{\loopbody}$ that satisfies $\vc{\downTransWlp{\loopbody}} \heyloleq \wlp{\loopbody}$.
	\item $k \in \mathbb{N}$.
	\item Potential $k$-inductive $\symWlp$-subinvariant $I \in \HeyLo$ with $I \heyloleq 1$.
\end{itemize}
\medskip

\noindent
\textbf{Encoding Output:}
\begin{itemize}
	\item HeyVL statement $\stmt$ that satisfies $\vc{\stmt} \expleq \wlp{\CC}$.
		\begin{itemize}
			\item If $\loopbody$ is encoded exactly, i.e. $\vc{\downTransWlp{\loopbody}} = \wlp{\loopbody}$ holds, then
				\[
					\vc{\stmt} = \ifThenElseDot{I \expleq \wlpPhi_\expa(\symDownPsi_I^k(I))}{I}{0}
				\]
		\end{itemize}
\end{itemize}
\medskip

\noindent
The $k$-induction encoding is similar to Park induction, but the sequence $\Assert{I}\symSemi \Assume{\embed{\exprFalse}}$ in the Park induction encoding is replaced by recursive encodings of the $\symDownPsi_I$ operator.

\noindent
Formally, the HeyVL statement $\stmt$ is given by:
\begin{align*}
	\stmt &=\quad \downSpec{I}{I}\symSemi \downIterWlp{\downExtendWlpK{k-1}{\const{I}}} \\
\intertext{where}
	\downSpec{\hlb}{\hla} &=\quad \Assert{\hlb}\symSemi \Havoc{\textit{variables}}\symSemi \Validate\symSemi \Assume{\hla} \\
	\downIterWlp{\stmt} &=\quad \stmtIf{b}{~\downTransWlp{\loopbody}\symSemi \stmt~}{~\stmtSkip~} \\
	\const{I} &=\quad \Assert{I}\symSemi \Assume{\embed{\exprFalse}} \\
    \downExtendWlp{S} &=\quad \upAssert{I}\symSemi \downIterWlp{S}
\end{align*}
\bigskip

\noindent
\textbf{Sketches} for $k=2$ and $k=3$: \\
\begin{minipage}{0.5\textwidth}
	\begin{align*}
		&\Assert{I} \\
		&\Havoc{\textit{variables}} \\
		&\Validate \\
		&\Assume{I} \\
		&\symIf~(\bexpr)~\blockStart \\
		&\quad \downTransWlp{\loopbody} \\
		&\quad \coAssert{I} \\
		&\quad \symIf~(\bexpr)~\blockStart \\
		&\quad\quad \downTransWlp{\loopbody} \\
		&\quad\quad \Assert{I} \\
		&\quad\quad \Assume{\embed{\exprFalse}} \\
		&\quad \blockEnd \\
		&\}
	\end{align*}
\end{minipage}%
\begin{minipage}{0.5\textwidth}
	\begin{align*}
		&\Assert{I} \\
		&\Havoc{\textit{variables}} \\
		&\Validate \\
		&\Assume{I} \\
		&\symIf~(\bexpr)~\blockStart \\
		&\quad \downTransWlp{\loopbody} \\
		&\quad \coAssert{I} \\
		&\quad \symIf~(\bexpr)~\blockStart \\
		&\quad\quad \downTransWlp{\loopbody} \\
		&\quad\quad \coAssert{I} \\
		&\quad\quad \symIf~(\bexpr)~\blockStart \\
		&\quad\quad\quad \downTransWlp{\loopbody} \\
		&\quad\quad\quad \Assert{I} \\
		&\quad\quad\quad \Assume{\embed{\exprFalse}} \\
		&\quad\quad\blockEnd \\
		&\quad \blockEnd \\
		&\}
	\end{align*}
\end{minipage}

\subsection{Loop Rule: $k$-Induction for $\symWp$}
\label{sec:app-enc-kind-wp}

The $k$-induction encoding for $\symWp$ is dual to the $k$-induction encoding of $\symWlp$ (cf.~\cref{sec:app-encodings-kind-wlp}).
It encodes a while loop and over-approximates its $\symWp$ semantics.
The user needs to provide a potential superinvariant $I \in \HeyLo$ and a number $k \in \mathbb{N}$ of how many times to unfold the loop.
For the loop body, we assume another over-approximating encoding is given.
If it contains loops, $k$-induction can be encoded recursively, but other encodings can be used as well.
\medskip

\noindent
\textbf{Proof Rule:}
\emph{Latticed $k$-induction}~\cite{batzLatticedKInductionApplication2021} for $\symWp$. \\
Let $\CC = \WHILEDO{\bexpr}{\loopbody}$ be a pGCL loop and let $\expa \in \Expectations$.
The \emph{$k$-induction operator} for $\symWp$ is given by
\[
	\symUpPsi_I \colon \Expectations \to \Expectations,\quad \expb \mapsto \wpPhi_\expa(\expb) \sqcap I~,
\]
where the \emph{loop-characteristic functional} $\wpPhi_\expa$ with respect to post $\expa$ is defined as
\[
	\wpPhi_\expa(\expb) = \iverson{\bexpr} \cdot \wp{\loopbody}(\expb) + \iverson{\neg b} \cdot \expa~.
\]
Then, for $k \in \mathbb{N}$,
\[
	\wpPhi_\expa(\symUpPsi_I^k(I)) \expleq I \quad\text{implies}\quad \wp{\CC}(\expa) \expleq I~.
\]
\medskip

\noindent
\textbf{Encoding Input:} 
\begin{itemize}
	\item pGCL loop $\CC = \WHILEDO{\bexpr}{\loopbody}$.
	\item HeyVL encoding $\upTransWp{\loopbody}$ that satisfies $\wp{\loopbody} \expleq \vc{\upTransWp{\loopbody}}$.
	\item $k \in \mathbb{N}$.
	\item Potential $k$-inductive $\symWp$-superinvariant $I \in \HeyLo$.
\end{itemize}
\medskip

\noindent
\textbf{Encoding Output:}
\begin{itemize}
	\item HeyVL encoding $\stmt$ that satisfies $\wp{\CC} \expleq \vc{\stmt}$.
		\begin{itemize}
			\item If $\loopbody$ is encoded exactly, i.e. $\vc{\upTransWp{\loopbody}} = \wp{\loopbody}$ holds, then
				\[
					\vc{\stmt} = \ifThenElseDot{\wpPhi_\expa(\symUpPsi_I^k(I)) \expleq I}{I}{\infty}
				\]
		\end{itemize}
\end{itemize}
\medskip

\noindent
Formally, the encoding $\stmt$ is given by:
\begin{align*}
	\stmt &=\quad \upSpec{I}{I}\symSemi \upIterWp{\upExtendWpK{k-1}{\const{I}}} \\
\intertext{where}
	\upSpec{\hlb}{\hla} &=\quad \coAssert{\hlb}\symSemi \coHavoc{\textit{variables}}\symSemi \coValidate\symSemi \coAssume{\hla} \\
	\upIterWp{\stmt} &=\quad \stmtIf{b}{~\upTransWp{\loopbody}\symSemi \stmt~}{~\stmtSkip~} \\
	\const{I} &=\quad \coAssert{I}\symSemi \coAssume{\embed{\exprTrue}} \\
    \upExtendWp{S} &=\quad \Assert{I}\symSemi \upIterWp{S}
\end{align*}
\bigskip

\noindent
\textbf{Sketches} for $k = 2$ and $k = 3$: \\
\begin{minipage}{0.5\textwidth}
\begin{align*}
	&\coAssert{I} \\
	&\coHavoc{\textit{variables}} \\
	&\coValidate \\
	&\coAssume{I} \\
	&\symIf~(\bexpr)~\blockStart \\
	&\quad \upTransWp{\loopbody} \\
	&\quad \Assert{I} \\
	&\quad \symIf~(\bexpr)~\blockStart \\
	&\quad\quad \upTransWp{\loopbody} \\
	&\quad\quad \coAssert{I} \\
	&\quad\quad \coAssume{\embed{\exprTrue}} \\
	&\quad \blockEnd \\
	&\} 
\end{align*}
\end{minipage}
\begin{minipage}{0.5\textwidth}
	\begin{align*}
		&\coAssert{I} \\
		&\coHavoc{\textit{variables}} \\
		&\coValidate \\
		&\coAssume{I} \\
		&\symIf~(\bexpr)~\blockStart \\
		&\quad \upTransWp{\loopbody} \\
		&\quad \coAssert{I} \\
		&\quad \symIf~(\bexpr)~\blockStart \\
		&\quad\quad \upTransWp{\loopbody} \\
		&\quad\quad \Assert{I} \\
		&\quad\quad \symIf~(\bexpr)~\blockStart \\
		&\quad\quad\quad \upTransWp{\loopbody} \\
		&\quad\quad\quad \coAssert{I} \\
		&\quad\quad\quad \coAssume{\embed{\exprTrue}} \\
		&\quad\quad\blockEnd \\
		&\quad \blockEnd \\
		&\}
	\end{align*}
\end{minipage}

\subsection{Loop Rule: $\omega$-invariants for $\symWlp$}
\label{sec:app-enc-omega-wlp}

\textbf{Proof Rule:} \emph{$\omega$-invariants for $\symWlp$} (adapted from~\cite{kaminskiWeakestPreconditionReasoning2017})~\footnote{Different versions of this proof rule exist. An overview is found in~\cite[page~108]{kaminskiAdvancedWeakestPrecondition2019}.}.
Let $\CC = \WHILEDO{\bexpr}{\loopbody}$ be a pGCL loop and let $\expa \in \OneBoundedExpectations$.
Let $(I_n)_{n \in \mathbb{N}} \subset \OneBoundedExpectations$ with $\wlpPhi_\expa(1) \expleq I_0$.
If $I$ is a \emph{$\symWlp$-$\omega$-superinvariant}, then $\inf_{n \in \mathbb{N}} I_n$ upper-bounds $\wlp{\CC}(\expa)$, i.e.
\[
	(\forall n \in \mathbb{N}.~ \wlpPhi_\expa(I_n) \expleq I_{n + 1}) \quad\text{implies}\quad	\wlp{\CC}(\expa) \expleq \inf_{n \in \mathbb{N}} I_n~,
\]
where the \emph{loop-characteristic functional} $\wlpPhi_\expa$ with respect to post $\expa$ is defined as
\[
	\wlpPhi_\expa(\expb) = \iverson{\bexpr} \cdot \wlp{\loopbody}(\expb) + \iverson{\neg b} \cdot \expa~.
\]

\medskip

\noindent
\textbf{Encoding Input:} 
\begin{itemize}
	\item pGCL loop $\CC = \WHILEDO{\bexpr}{\loopbody}$.
	\item HeyVL encoding $\upTransWlp{\loopbody}$ that satisfies $\wp{\loopbody} \expleq \vc{\upTransWlp{\loopbody}}$.
	\item Potential $\symWlp$-$\omega$-superinvariant $I_n \in \HeyLo$ that represents $(I_n)_{n \in \mathbb{N}} \subset \OneBoundedExpectations$ by a free variable $n$.
	\item Post $\hla \in \HeyLo$.
\end{itemize}
\medskip

\noindent
\textbf{Encoding Output:}
\begin{itemize}
	\item HeyVL encoding $\stmt$ that verifies only if $\wlp{\CC}(\hla) \expleq I$.
\end{itemize}
\medskip

\noindent
We generate two procedures to check the proof rule conditions.
\smallskip

\noindent
The first procedure checks that $\wlpPhi_{\eval{\hla}}(1) \expleq I_0$ holds:
\begin{align*}
	&\symUp\proc{condition\_1}{\typeof{x_1^0}{\typevar_1},\ldots,\typeof{x_m^0}{\typevar_m}}{\typeof{x_1}{\typevar_1},\ldots,\typeof{x_m}{\typevar_m}} \\
	&\quad\Requires{I_n\substBy{n}{0}\substBy{x_1}{x_1^0}\ldots\substBy{x_n}{x_n^0}} \\
	&\quad\Ensures{\hla} \\
	&\blockStart \\
	&\quad \stmtDeclInit{x_1}{\typevar_1}{x_1^0}\symSemi \ldots\symSemi \stmtDeclInit{x_m}{\typevar_1}{x_m^0} \\
	&\quad \symIf~(\bexpr)~\blockStart \\
	&\quad\quad \upTransWp{\loopbody} \\
	&\quad\quad \coAssert{1} \\
	&\quad\quad \coAssume{\embed{\exprTrue}} \\
	&\quad \blockEnd \\
	&\blockEnd
\end{align*}

The second procedure checks that $\wlpPhi_{\eval{\hla}}(I_n) \expleq I_{n+1}$ holds for all $n \in \mathbb{N}$:
\begin{align*}
	&\symUp\proc{condition\_2}{\typeof{n}{\Nats},\typeof{x_1^0}{\typevar_1},\ldots,\typeof{x_m^0}{\typevar_m}}{\typeof{x_1}{\typevar_1},\ldots,\typeof{x_m}{\typevar_m}} \\
	&\quad\Requires{I_n\substBy{n}{n+1}\substBy{x_1}{x_1^0}\ldots\substBy{x_n}{x_n^0}} \\
	&\quad\Ensures{\hla} \\
	&\blockStart \\
	&\quad \stmtDeclInit{x_1}{\typevar_1}{x_1^0}\symSemi \ldots\symSemi \stmtDeclInit{x_m}{\typevar_1}{x_m^0} \\
	&\quad \symIf~(\bexpr)~\blockStart \\
	&\quad\quad \upTransWp{\loopbody} \\
	&\quad\quad \coAssert{I_n} \\
	&\quad\quad \coAssume{\embed{\exprTrue}} \\
	&\quad \blockEnd \\
	&\blockEnd
\end{align*}

\subsection{Loop Rule: $\omega$-invariants for $\symWp$}
\label{sec:app-enc-omega-wp}

\textbf{Proof Rule:} \emph{$\omega$-invariants for $\symWp$}~ \cite{kaminskiAdvancedWeakestPrecondition2019}.
Let $\CC = \WHILEDO{\bexpr}{\loopbody}$ be a pGCL loop and let $\expa \in \Expectations$.
Let $(I_n)_{n \in \mathbb{N}} \subset \Expectations$ with $I_0 \expleq \wpPhi_\expa(0)$.
If $I$ is a \emph{$\symWp$-$\omega$-subinvariant}, then $\sup_{n \in \mathbb{N}} I_n$ lower-bounds $\wp{\CC}(\expa)$, i.e.
\[
	(\forall n \in \mathbb{N}.~ I_{n + 1} \expleq \wpPhi_\expa(I_n)) \quad\text{implies}\quad \sup_{n \in \mathbb{N}} I_n \expleq \wp{\CC}(\expa)~,
\]
where the \emph{loop-characteristic functional} $\wpPhi_\expa$ with respect to post $\expa$ is defined as
\[
	\wpPhi_\expa(\expb) = \iverson{\bexpr} \cdot \wp{\loopbody}(\expb) + \iverson{\neg b} \cdot \expa~.
\]

\medskip

\noindent
\textbf{Encoding Input:} 
\begin{itemize}
	\item pGCL loop $\CC = \WHILEDO{\bexpr}{\loopbody}$.
	\item HeyVL encoding $\downTransWp{\loopbody}$ that satisfies $\vc{\downTransWp{\loopbody}}\expleq \wp{\loopbody}$.
	\item Potential $\symWp$-$\omega$-subinvariant $I_n \in \HeyLo$ that represents $(I_n)_{n \in \mathbb{N}} \subset \Expectations$ by a free variable $n$.
	\item Post $\hla \in \HeyLo$.
\end{itemize}
\medskip

\noindent
\textbf{Encoding Output:}
\begin{itemize}
	\item HeyVL encoding $\stmt$ that verifies only if $I \expleq \wp{\CC}(\hla)$.
\end{itemize}
\medskip

\noindent
We generate two procedures to check the proof rule conditions.
\smallskip

\noindent
The first procedure checks that $I_0 \expleq \wpPhi_{\eval{\hla}}(0)$ holds:
\begin{align*}
	&\proc{condition\_1}{\typeof{x_1^0}{\typevar_1},\ldots,\typeof{x_m^0}{\typevar_m}}{\typeof{x_1}{\typevar_1},\ldots,\typeof{x_m}{\typevar_m}} \\
	&\quad\Requires{I_n\substBy{n}{0}\substBy{x_1}{x_1^0}\ldots\substBy{x_n}{x_n^0}} \\
	&\quad\Ensures{\hla} \\
	&\blockStart \\
	&\quad \stmtDeclInit{x_1}{\typevar_1}{x_1^0}\symSemi \ldots\symSemi \stmtDeclInit{x_m}{\typevar_1}{x_m^0} \\
	&\quad \symIf~(\bexpr)~\blockStart \\
	&\quad\quad \downTransWp{\loopbody} \\
	&\quad\quad \Assert{0} \\
	&\quad\quad \Assume{\embed{\exprFalse}} \\
	&\quad \blockEnd \\
	&\blockEnd
\end{align*}

The second procedure checks that $I_{n+1} \expleq \wpPhi_{\eval{\hla}}(I_n)$ holds for all $n \in \mathbb{N}$:
\begin{align*}
	&\proc{condition\_2}{\typeof{n}{\Nats},\typeof{x_1^0}{\typevar_1},\ldots,\typeof{x_m^0}{\typevar_m}}{\typeof{x_1}{\typevar_1},\ldots,\typeof{x_m}{\typevar_m}} \\
	&\quad\Requires{I_n\substBy{n}{n+1}\substBy{x_1}{x_1^0}\ldots\substBy{x_n}{x_n^0}} \\
	&\quad\Ensures{\hla} \\
	&\blockStart \\
	&\quad \stmtDeclInit{x_1}{\typevar_1}{x_1^0}\symSemi \ldots\symSemi \stmtDeclInit{x_m}{\typevar_1}{x_m^0} \\
	&\quad \symIf~(\bexpr)~\blockStart \\
	&\quad\quad \downTransWp{\loopbody} \\
	&\quad\quad \Assert{I_n} \\
	&\quad\quad \Assume{\embed{\exprFalse}} \\
	&\quad \blockEnd \\
	&\blockEnd
\end{align*}

\subsection{Encoding of the Optional Stopping Theorem for $\symWp$}
\label{sec:app-enc-ost-wp}

\textbf{Proof Rule:} \emph{Optional Stopping Theorem for $\symWp$ Reasoning}~\cite{harkAimingLowHarder2019}.
Let $\CC = \WHILEDO{\bexpr}{\loopbody}$ be a pGCL loop and let $\expa \in \Expectations$.
If all of the following conditions hold:
\begin{itemize}
	\item $I$ is a $\symWp$-subinvariant: $I \expleq \wpPhi_\expa(I)$,
	\item $\CC$ is positively almost-surely terminating (PAST),
	\item $I$ harmonizes with $f$: $\neg b \Rightarrow (I = f)$,
	\item $\wpPhi_\expa(I)$ is finite: $\wpPhi_\expa(I) < \infty$,
	\item $I$ is \emph{conditionally difference bounded} for some $c \in \mathbb{R}_{\geq 0}$: $$\wp{\loopbody}(|I - I(\State)|)(\State) \leq c \text{ for all } \State \in \States~.$$
\end{itemize}
Then,
\[
	I \expleq \wp{\CC}(\expa)~.	
\]
\medskip

\noindent
\textbf{Encoding Input:} 
\begin{itemize}
	\item pGCL loop $\CC = \WHILEDO{\bexpr}{\loopbody}$.
	\item HeyVL encoding $\downTransWp{\loopbody}$ that satisfies $\vc{\downTransWp{\loopbody}} \expleq \wp{\loopbody}$.
	\item HeyVL encoding $\upTransWlp{\loopbody}$ that satisfies $\wp{\loopbody} \expleq \vc{\upTransWlp{\loopbody}}$.
	\item Potential $\symWp$-subinvariant $I \in \HeyLo$.
	\item Constant $c \in \mathbb{R}_{\geq 0}$. 
	\item Post $\hla \in \HeyLo$.
\end{itemize}
\medskip

\noindent
\textbf{Side Conditions:}
\begin{itemize}
	\item $\CC$ is positively almost-surely terminating (PAST).\footnote{For our ``ost'' example, we show $\ert{\CC}(0) < \infty$ using Park induction (cf.~\cref{sec:app-encodings-ert}) to show that $\CC$ is PAST.}
\end{itemize}
\medskip

\noindent
\textbf{Encoding Output:}
\begin{itemize}
	\item HeyVL encoding $\stmt$ that verifies only if $I \expleq \wp{\CC}(\hla)$.
\end{itemize}
\medskip

\noindent
Let $\typeof{x_1}{\typevar_1},\ldots,\typeof{x_n}{\typevar_n}$ be the variables that are free in $\CC$ with their types. \\
Let $I_0 = I\substBy{x_1}{x_1^0}\ldots\substBy{x_n}{x_n^0}$.
\medskip

\noindent
Multiple procedures are generated to check the various conditions.
\smallskip

\noindent
The first procedure checks that $I$ is a $\symWp$-subinvariant with respect to post $\hla$:
\begin{align*}
	&\proc{subinvariant}{\typeof{x_1^0}{\typevar_1},\ldots,\typeof{x_n^0}{\typevar_n}}{\typeof{x_1}{\typevar_1},\ldots,\typeof{x_n}{\typevar_n}} \\
	&\quad\Requires{I_0} \\
	&\quad\Ensures{\hla} \\
	&\blockStart \\
	&\quad \stmtDeclInit{x_1}{\typevar_1}{x_1^0}\symSemi \ldots\symSemi \stmtDeclInit{x_n}{\typevar_1}{x_n^0} \\
	&\quad \symIf~(\bexpr)~\blockStart \\
	&\quad\quad \downTransWp{\loopbody} \\
	&\quad\quad \Assert{I} \\
	&\quad\quad \Assume{\embed{\exprFalse}} \\
	&\quad \blockEnd \\
	&\blockEnd
\end{align*}

\noindent
Next, we check that $I$ harmonizes with $\hla$, i.e. that $\neg b \impl (I = \hla)$ holds.
Formally, we do this using a procedure and a coprocedure.\footnote{Our implementation \emph{Caesar} supports \HeyLo formulae of the more direct form $\embed{\neg b \impl (I = \hla)}$ as well.}
\begin{align*}
	&\proc{harmonizes\_lower}{\typeof{x_1}{\typevar_1},\ldots,\typeof{x_n}{\typevar_n}}{} \\
	&\quad\Requires{\embed{\neg b} \impl I} \\
	&\quad\Ensures{\hla}~\blockStart \blockEnd \\[1ex]
	&\symUp\proc{harmonizes\_upper}{\typeof{x_1}{\typevar_1},\ldots,\typeof{x_n}{\typevar_n}}{} \\
	&\quad\Requires{\coembed{\neg b} \coimpl I} \\
	&\quad\Ensures{\hla}~\blockStart \blockEnd
\end{align*}

\noindent
The next procedure checks that $\wpPhi_{\eval{\hla}}(I)$ is finite:
\begin{align*}
	&\symUp\proc{phi\_finite}{\typeof{x_1^0}{\typevar_1},\ldots,\typeof{x_n^0}{\typevar_n}}{\typeof{x_1}{\typevar_1},\ldots,\typeof{x_n}{\typevar_n}} \\
	&\quad\Requires{0} \\
	&\quad\Ensures{\hla} \\
	&\blockStart \\
	&\quad \Validate \\
	&\quad \Assume{\infty} \\
	&\quad \stmtDeclInit{x_1}{\typevar_1}{x_1^0}\symSemi \ldots\symSemi \stmtDeclInit{x_n}{\typevar_1}{x_n^0} \\
	&\quad \symIf~(\bexpr)~\blockStart \\
	&\quad\quad \downTransWp{\loopbody} \\
	&\quad\quad \Assert{I} \\
	&\quad\quad \Assume{\embed{\exprFalse}} \\
	&\quad \blockEnd \\
	&\blockEnd
\end{align*}

\noindent
The last procedure checks the conditional difference boundedness property:
\begin{align*}
	&\symUp\proc{cdb}{\typeof{x_1^0}{\typevar_1},\ldots,\typeof{x_n^0}{\typevar_n}}{\typeof{x_1}{\typevar_1},\ldots,\typeof{x_n}{\typevar_n}} \\
	&\quad\Requires{c} \\
	&\quad\Ensures{\exprIte{I_0 \leq I}{I - I_0}{I_0 - I}} \\
	&\blockStart \\
	&\quad \stmtDeclInit{x_1}{\typevar_1}{x_1^0}\symSemi \ldots\symSemi \stmtDeclInit{x_n}{\typevar_1}{x_n^0} \\
	&\quad \upTransWp{\loopbody} \\
	&\blockEnd
\end{align*}

\subsection{Proof Rules for $\symErt$}
\label{sec:app-encodings-ert}

The $\symErt$ calculus~\cite{kaminskiWeakestPreconditionReasoning2016,kaminskiWeakestPreconditionReasoning2018} is similar to the $\symWp$ calculus.
For loop-free pGCL programs, we obtain encodings similar to \cref{fig:encodings:wp}.
The only difference consists of the additional $\stmtTick{1}$ statements to track the run-times of each statement.

\begin{center}
	\begin{tabular}{ll}
		$\pcc$ & $\downTransErt{\pcc}$ \\
		\hline \hline
		$\pSkip$ & $\stmtTick{1}$ \\
		$\pDiverge$ & $\Assert{0}$ \\
		$\pAssign{x}{t}$ & $\stmtRasgn{x}{t}\symSemi \stmtTick{1}\symSemi$ \\
		$\pcc_1;\pcc_2$ & $\downTransWp{\pcc_1};\downTransWp{\pcc_2}$ \\
		$\stmtIfStart{b}~\pcc_1~\} $ & $\stmtDemonicStart~\Assume{\embed{b}};\downTransWp{\pcc_1}~\}$ \\
		$\quad\stmtElseStart~\pcc_2 \}$ & $\stmtElseStart~\Assume{\embed{\neg b}};\downTransWp{\pcc_2} \}\symSemi \stmtTick{1}\symSemi$ \\
		$\pChoice{\pcc_1}{p}{\pcc_2}$ & $\stmtDeclInit{\mathit{tmp}}{\bool}{\exprFlip{p}}\symSemi \stmtTick{1}\symSemi$ \\ 
		& $\downTransWp{\pIte{\mathit{tmp}}{\pcc_1}{\pcc_2}}$ \\
		$\pNd{\pcc_1}{\pcc_2}$ & $\stmtDemonic{\pcc_1}{\pcc_2}$
	\end{tabular}
\end{center}

Latticed $k$-induction (cf.~\cref{sec:app-enc-kind-wp}) and $\omega$-invariants (cf.~\cref{sec:app-enc-omega-wp}), can be encoded similarly to $\symWp$ and are implemented in our frontend.
The $k$-induction proof rule for $\symErt$ is a straightforward consequence of the latticed $k$-induction principle~\cite{batzLatticedKInductionApplication2021}.
$\omega$-invariants for $\symErt$ have been described in~\cite{kaminskiWeakestPreconditionReasoning2016}.
 
\subsection{Encoding of ``A New Proof Rule for Almost-Sure Termination''}
\label{sec:app-enc-new-proof-rule-ast}

\textbf{Proof Rule:} \emph{``A New Proof Rule for Almost-Sure Termination''}~\cite{mciverNewProofRule2018}.
Let $\CC = \WHILEDO{\bexpr}{\loopbody}$ be a pGCL loop.
Let $I \colon \States \to \Bools$, let $V \colon \States \to \mathbb{R}_{\geq 0}$, let $p \colon \mathbb{R}_{\geq 0} \to (0,1]$ and $d \colon \mathbb{R}_{\geq 0} \to \mathbb{R}_{> 0}$ where $d$ and $p$ are antitone on positive arguments.
If the following four conditions hold, then $\iverson{I} \expleq \wp{\CC}(1)$ holds:
\begin{itemize}
	\item $\iverson{I}$ is a $\symWp$-superinvariant: $\wpPhi_{\iverson{I}}(\iverson{I}) \expleq \iverson{I}$,
	\item $\bexpr \land I \implies V > 0$,
	\item $\wpPhi_V(V) \expleq V$,
	\item $\iverson{I} \cdot \iverson{G} \cdot (p \circ V) \expleq \lambda \State.~ \wp{\loopbody}(\iverson{V < V(\State) - d(V(\State))})(\State)$.
\end{itemize}

\medskip

\noindent
\textbf{Encoding Input:} 
\begin{itemize}
	\item pGCL loop $\CC = \WHILEDO{\bexpr}{\loopbody}$.
	\item HeyVL encoding $\downTransWp{\loopbody}$ that satisfies $\vc{\downTransWp{\loopbody}} \expleq \wp{\loopbody}$.
	\item HeyVL encoding $\upTransWp{\loopbody}$ that satisfies $\wp{\loopbody} \expleq \vc{\upTransWp{\loopbody}}$.
	\item Expressions $\typeof{I}{\Bools}$, $\typeof{V}{\ureal}$, $\typeof{p}{\ureal}$, $\typeof{d}{\ureal}$ with a free variable $x$ each for the parameter of the function that they represent.
\end{itemize}
\medskip

\noindent
\textbf{Encoding Output:}
\begin{itemize}
	\item HeyVL encoding $\stmt$ that verifies only if $\iverson{I} \expleq \wp{\CC}(1)$.
\end{itemize}
\medskip

\noindent
Let $\typeof{x_1}{\typevar_1},\ldots,\typeof{x_n}{\typevar_n}$ be the variables that are free in $\CC$ with their types. \\
Let $I_0 = I\substBy{x_1}{x_1^0}\ldots\substBy{x_n}{x_n^0}$ and $V_0 = V\substBy{x_1}{x_1^0}\ldots\substBy{x_n}{x_n^0}$ and $G_0 = G\substBy{x_1}{x_1^0}\ldots\substBy{x_n}{x_n^0}$.
\medskip

\noindent
Multiple procedures are generated to check the various conditions.
\smallskip

\noindent
The first two procedures check that $p$ and $d$ are antitone:
\begin{align*}
	&\proc{p\_antitone}{\typeof{a}{\ureal},\typeof{b}{\ureal}}{} \\
	&\quad\Requires{\embed{a \leq b}} \\
	&\quad\Ensures{\embed{p\substBy{x}{a} \geq p\substBy{x}{b}}}~\blockStart \blockEnd \\[1ex]
	&\proc{d\_antitone}{\typeof{a}{\ureal},\typeof{b}{\ureal}}{} \\
	&\quad\Requires{\embed{a \leq b}} \\
	&\quad\Ensures{\embed{d\substBy{x}{a} \geq d\substBy{x}{b}}}~\blockStart \blockEnd
\end{align*}

\noindent
The following procedure checks that $\iverson{I}$ is a $\symWp$-subinvariant with respect to post $\iverson{I}$:
\begin{align*}
	&\proc{I\_wp\_subinvariant}{\typeof{x_1^0}{\typevar_1},\ldots,\typeof{x_n^0}{\typevar_n}}{\typeof{x_1}{\typevar_1},\ldots,\typeof{x_n}{\typevar_n}} \\
	&\quad\Requires{\iverson{I_0}} \\
	&\quad\Ensures{\iverson{I}} \\
	&\blockStart \\
	&\quad \stmtDeclInit{x_1}{\typevar_1}{x_1^0}\symSemi \ldots\symSemi \stmtDeclInit{x_n}{\typevar_1}{x_n^0} \\
	&\quad \symIf~(\bexpr)~\blockStart \\
	&\quad\quad \downTransWp{\loopbody} \\
	&\quad \blockEnd \\
	&\blockEnd
\end{align*}

\noindent
The next condition:
\begin{align*}
	&\proc{termination\_condition}{\typeof{x_1}{\typevar_1},\ldots,\typeof{x_n}{\typevar_n}}{}~\blockStart \\
	&\quad \Assert{\embed{(\neg \bexpr \land I) \rightarrow (V > 0)}} \\
	&\blockEnd
\end{align*}

\noindent
Then, we check that $\wpPhi_V(V) \expleq V$ holds:
\begin{align*}
	&\symUp\proc{v\_wp\_superinvariant}{\typeof{x_1^0}{\typevar_1},\ldots,\typeof{x_n^0}{\typevar_n}}{\typeof{x_1}{\typevar_1},\ldots,\typeof{x_n}{\typevar_n}} \\
	&\quad\Requires{V_0} \\
	&\quad\Ensures{V} \\
	&\blockStart \\
	&\quad \stmtDeclInit{x_1}{\typevar_1}{x_1^0}\symSemi \ldots\symSemi \stmtDeclInit{x_n}{\typevar_1}{x_n^0} \\
	&\quad \symIf~(\bexpr)~\blockStart \\
	&\quad\quad \upTransWp{\loopbody} \\
	&\quad \blockEnd \\
	&\blockEnd
\end{align*}

Finally, the progress condition $\iverson{I} \cdot \iverson{G} \cdot (p \circ V) \expleq \lambda \State.~ \wp{\loopbody}(\iverson{V \leq V(\State) - d(V(\State))})(\State)$:
\begin{align*}
	&\proc{progress}{\typeof{x_1^0}{\typevar_1},\ldots,\typeof{x_n^0}{\typevar_n}}{\typeof{x_1}{\typevar_1},\ldots,\typeof{x_n}{\typevar_n}} \\
	&\quad\Requires{\iverson{I_0} \cdot \iverson{G_0} \cdot (p \circ V_0)} \\
	&\quad\Ensures{\iverson{V \leq V_0 - d(V_0)}} \\
	&\blockStart \\
	&\quad \stmtDeclInit{x_1}{\typevar_1}{x_1^0}\symSemi \ldots\symSemi \stmtDeclInit{x_n}{\typevar_1}{x_n^0} \\
	&\quad \downTransWp{\loopbody} \\
	&\blockEnd
\end{align*}

\subsection{PAST Rule}
\label{sec:app-enc-past}

\textbf{Proof Rule}: \emph{PAST from Ranking Superinvariants}~\cite{chakarovProbabilisticProgramAnalysis2013}.
Let $\CC = \WHILEDO{\bexpr}{\loopbody}$ be a pGCL loop and let $I \in \Expectations$.
Let constants $\epsilon$ and $K$ such that $0 < \epsilon < K$.
If the following conditions hold, then $\CC$ terminates universally positively almost-surely:
\begin{itemize}
	\item $\iverson{\neg \bexpr} \cdot I \expleq K$,
	\item $\iverson{\bexpr} \cdot K \expleq \iverson{\expr} \cdot I + \iverson{\neg \bexpr}$,
	\item $\wpPhi_0(I) \expleq \iverson{\bexpr} \cdot (I - \epsilon)$,
\end{itemize}
where the \emph{loop-characteristic functional} $\wpPhi_\expa$ with respect to post $\expa$ is defined as
\[
	\wpPhi_\expa(\expb) = \iverson{\bexpr} \cdot \wp{\loopbody}(\expb) + \iverson{\neg b} \cdot \expa~.
\]

\noindent
\textbf{Encoding Input:} 
\begin{itemize}
	\item pGCL loop $\CC = \WHILEDO{\bexpr}{\loopbody}$.
	\item HeyVL encoding $\upTransWp{\loopbody}$ that satisfies $\wp{\loopbody} \expleq \vc{\upTransWp{\loopbody}}$.
	\item Potential invariant $I \in \HeyLo$.
	\item Constants $\epsilon$ and $K$ such that $0 < \epsilon < K$.
\end{itemize}
\medskip

\noindent
\textbf{Encoding Output:}
\begin{itemize}
	\item HeyVL encoding $\stmt$ that verifies only if $\CC$ is PAST.
\end{itemize}
\medskip

\noindent
The first two conditions can be encoded easily via assertions:
\begin{align*}
	&\proc{condition\_1}{\typeof{x_1}{\typevar_1},\ldots,\typeof{x_n}{\typevar_n}}{}~\blockStart \\
	&\quad \Assert{\embed{\iverson{\neg \bexpr} \cdot I \leq K}} \\
	&\blockEnd \\[1em]
	&\proc{condition\_2}{\typeof{x_1}{\typevar_1},\ldots,\typeof{x_n}{\typevar_n}}{}~\blockStart \\
	&\quad \Assert{\embed{\iverson{\bexpr} \cdot K \leq \iverson{\bexpr} \cdot I + \iverson{\neg \bexpr}}} \\
	&\blockEnd
\end{align*}

\noindent
The last condition, $\wpPhi_0(I) \expleq \iverson{\bexpr} \cdot (I - \epsilon)$, is encoded as another coprocedure:
\begin{align*}
	&\symUp\proc{condition\_3}{\typeof{x_1^0}{\typevar_1},\ldots,\typeof{x_n^0}{\typevar_n}}{} \\
	&\quad\Requires{(\iverson{b} \cdot (I - \epsilon))\substBy{x_1}{x_1^0}\ldots\substBy{x_n}{x_n^0}} \\
	&\quad\Ensures{0} \\
	&\blockStart \\
	&\quad \stmtDeclInit{x_1}{\typevar_1}{x_1^0}\symSemi \ldots\symSemi \stmtDeclInit{x_n}{\typevar_1}{x_n^0} \\
	&\quad \symIf~(\bexpr)~\blockStart \\
	&\quad\quad \upTransWp{\loopbody} \\
	&\quad\quad \coAssert{I} \\
	&\quad\quad \coAssume{\embed{\exprTrue}} \\
	&\quad \blockEnd \\
	&\blockEnd
\end{align*}

\fi

\end{document}